\newtheorem{observation}[theorem]{\textbf{Observation}}
\newcommand{\mkmcal}[1]{\ensuremath{\mathcal{#1}}\xspace}
\newcommand{\A}{\mkmcal{A}}
\newcommand{\RR}{\mkmcal{R}}
\newcommand{\+}{\diagup}
\renewcommand{\-}{\diagdown}
\newcommand{\etal}{et al.\xspace}
\newcommand{\slope}{\ensuremath{\mathit{slope}}\xspace}
\newcommand{\mkmbb}[1]{\ensuremath{\mathbb{#1}}\xspace}
\newcommand{\R}{\mkmbb{R}}
\newcommand{\Both}{\ensuremath{\mathord\Updownarrow}}
\newcommand{\Upper}{\ensuremath{\mathord\Uparrow}}
\newcommand{\Lower}{\ensuremath{\mathord\Downarrow}}
\title{Convex partial transversals of planar regions}
\author{Vahideh Keikha}{Department of Mathematics and Computer Science, University of Sistan and Baluchestan, Zahedan, Iran}{va.keikha@gmail.com}{}{}
\author{Mees van de Kerkhof}{Department of Information and Computing Sciences, Utrecht
  University, \\{Utrecht, The Netherlands}}{m.a.vandekerkhof@uu.nl}{}{M.v.d.K.
  supported by the Netherlands Organisation for Scientific Research under
  proj. 628.011.005.}
\author{Marc van Kreveld}{Department of Information and Computing Sciences, Utrecht
  University, \\{Utrecht, The Netherlands}}{m.j.vankreveld@uu.nl}{}{M.v.K.
  supported by the Netherlands Organisation for Scientific Research under
  proj. 612.001.651.}
\author{Irina Kostitsyna}{Department of Mathematics and Computer Science, TU Eindhoven, \\{Eindhoven, The Netherlands}}{i.kostistyna@tue.nl}{}{}
\author{Maarten Löffler}{Department of Information and Computing Sciences, Utrecht
  University, \\{Utrecht, The Netherlands}}{m.loffler@uu.nl}{}{M.L.
  supported by the Netherlands Organisation for Scientific Research under
  proj. 614.001.504.}
\author{Frank Staals}{Department of Information and Computing Sciences, Utrecht
  University, \\{Utrecht, The Netherlands}}{f.staals@uu.nl}{}{F.S.
  supported by the Netherlands Organisation for Scientific Research under
  proj. 612.001.651.}
\author{Jérôme Urhausen}{Department of Information and Computing Sciences, Utrecht
  University, \\{Utrecht, The Netherlands}}{j.e.urhausen@uu.nl}{}{J.U.
  supported by the Netherlands Organisation for Scientific Research under
  proj. 612.001.651.}
\author{Jordi L. Vermeulen}{Department of Information and Computing Sciences, Utrecht
  University, \\{Utrecht, The Netherlands}}{j.l.vermeulen@uu.nl}{}{J.V.
  supported by the Netherlands Organisation for Scientific Research under
  proj. 612.001.651.}
\author{Lionov Wiratma}{Department of Information and Computing Sciences, Utrecht
  University\\{Utrecht, The Netherlands}\\Department of Informatics, Parahyangan
  Catholic University\\{Bandung, Indonesia}}{l.wiratma@uu.nl;lionov@unpar.ac.id}{}
  {L.W.  supported by the Mnst. of Research, Techn. and High. Ed.
  of Indonesia (No. 138.41/E4.4/2015)}
\authorrunning{Keikha, Kerkhof, Kreveld, Kostitsyna, L\"offler, Staals, Urhausen, Vermeulen, Wiratma}
\subjclass{
	Theory of computation $\rightarrow$ Computational Geometry}
\keywords{computational geometry, algorithms, NP-hardness, convex transversals}
\begin{document}

\maketitle

\begin{abstract}
  We consider the problem of testing, for a given set of planar regions
  $\cal R$ and an integer $k$, whether there exists a convex shape whose
  boundary intersects at least $k$ regions of $\cal R$.  We provide a polynomial
  time algorithm for the case where the regions are disjoint line segments with a constant number of orientations.
  On the other hand, we show that the problem is NP-hard when the regions are
  intersecting axis-aligned rectangles or 3-oriented line segments.  For
  several natural intermediate classes of shapes (arbitrary disjoint segments,
  intersecting 2-oriented segments) the problem remains open.
\end{abstract}

\section{Introduction}

A set of points $Q$ in the plane is said to be in \emph{convex position} if for
every point $q \in Q$ there is a halfplane containing $Q$ that has $q$ on its
boundary. Now, let ${\cal R}$ 
be a set of $n$ {\em regions} in the plane.  We say that
$Q$ is a {\em partial transversal} of $\cal
R$ if there exists an injective map $f: Q \to \cal R$ such that $q \in
f(q)$ for all $q \in Q$; if $f$ is a bijection we call
$Q$ a {\em full transversal}.
In this paper, we are concerned with the question whether a given set of regions $\cal R$ admits a convex partial transversal $Q$ of a given cardinality $|Q| = k$.
Figure~\ref {fig:example} shows an example.

\begin{figure}[tb]
  \centering
  \includegraphics{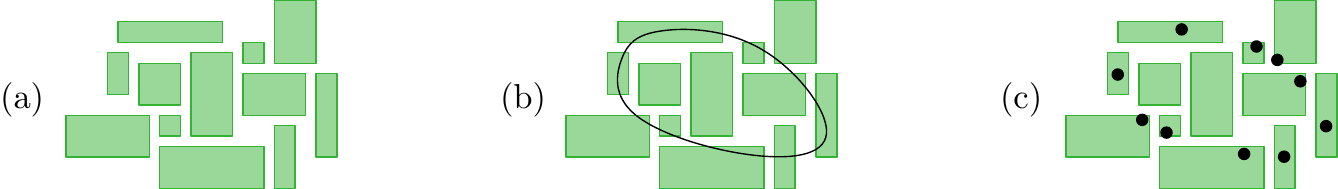}
  \caption{(a) A set of 12 regions. (b, c) A convex partial transversal of size $10$.}
  \label{fig:example}
\end{figure}


The study of convex transversals was initiated by Arik Tamir at the Fourth NYU Computational Geometry Day in 1987, who asked
``Given a collection of compact sets, can one decide in polynomial time whether there exists a convex body whose boundary intersects every set in the collection?''
Note that this is equivalent to the question of whether a convex full transversal of the sets exists: given the convex body, we can place a point of its boundary in every intersected region; 
conversely, the convex hull of a convex transversal forms a convex body whose boundary intersects every set.
In 2010, Arkin~\etal~\cite {ARKIN2014224} answered Tamir's original question in the negative (assuming P $\ne$ NP): they prove that the problem is NP-hard, even when the regions are (potentially intersecting) line segments in the plane, regular polygons in the plane, or balls in $\R^3$.
On the other hand, they show that Tamir's problem can be solved in polynomial time when the regions are {\em disjoint} segments in the plane and the convex body is restricted to be a polygon whose vertices are chosen from a given discrete set of (polynomially many) candidate locations. 
Goodrich and Snoeyink~\cite{gsch} show that for a set of {\em parallel} line segments, the existence of a convex transversal can be tested in $O(n \log n)$ time.
Schlipf~\cite {schlipf2012notes} further proves that the problem of finding a convex stabber for a set of disjoint {\em bends} (that is, shapes consisting of two segments joined at one endpoint) is also NP-hard.
She also studies the optimisation version of maximising the number of regions stabbed by a convex shape; we may re-interpret this question as finding the largest $k$ such that a convex partial transversal of cardinality $k$ exists.
She shows that this problem is also NP-hard for a set of (potentially intersecting) line segments in the plane.

\subparagraph{Related work.} Computing a partial transversal of maximum size
arises in wire layout applications~\cite{tompa1980optimal}. When each region in
$\cal R$ is a single point, our problem reduces to determining whether a point
set $P$ has a subset of cardinality $k$ in convex position.  Eppstein
\etal~\cite{50} solve this in $O(kn^3)$ time and $O(kn^2)$ space using dynamic
programming; the total number of convex $k$-gons can also be tabulated in
$O(kn^3)$ time~\cite{rote1991counting,mitchellcounting}.

If we allow reusing elements, our problem becomes equivalent to so-called \emph{covering color classes} introduced by Arkin \etal~\cite{arkin2015}.
Arkin \etal show that for a set of regions $\cal R$ where each region is a set of two or three points, computing a convex partial transversal of $\cal R$ of maximum cardinality is NP-hard.
Conflict-free coloring has been studied extensively, and has applications in, for instance, cellular networks~\cite{even2003conflict,har2005conflict,katz2012conflict}.

\begin{table}
\centering
\caption{New and known results.}
\label{table:results}
{\footnotesize
\begin{tabular}{ r rcc }
  \toprule
                        & & \bf disjoint & \bf intersecting \\
  \midrule
  \bf line segments:
  & parallel              & $O(n^{6})$ (upper hull only: $O(n^2)$) & N/A \\
  & 2-oriented            & \(\downarrow\) & open \\
  & 3-oriented            & \(\downarrow\) & NP-hard \\
  & $\rho$-oriented       & polynomial & $\uparrow$ \\
  & arbitrary             & open & NP-hard~\cite {ARKIN2014224} \\
  \midrule
  \bf rectangles:
  & squares               & open & open \\
  & rectangles            & open & NP-hard \\
  \midrule
  \bf other:
  & bends                 & NP-hard~\cite {schlipf2012notes} & $\leftarrow$ \\
  \bottomrule
\end{tabular}
}
\end{table}


\subparagraph{Results.}
Despite the large body of work on convex transversals and natural extensions of partial transversals that are often mentioned in the literature, surprisingly, no positive results were known.
We present the first positive results: in Section~\ref {sec:parallel} we show how to test whether a set of parallel line segments admits a convex transversal of size $k$ in polynomial time; we extend this result to disjoint segments of a fixed number of orientations in Section~\ref {sec:2-oriented}.
Although the hardness proofs of Arkin~\etal and Schlipf do extend to partial convex transversals, we strengthen these results by showing that the problem is already hard when the regions are $3$-oriented segments or axis-aligned rectangles (Section~\ref {sec:hardthreeintersect}).
Our results are summarized in Table~\ref {table:results}. The arrows in the table indicate that one result is implied by another.

For ease of terminology, in the remainder of this paper, we will drop the
qualifier ``partial'' and simply use ``convex transversal'' to mean ``partial
convex transversal''.  Also, for ease of argument, in all our results we test
for {\em weakly convex} transversals. This means that the transversal may
contain three or more colinear points.

\section{Parallel disjoint line segments}
\label{sec:parallel}

Let \RR be a set of $n$ vertical line segments in $\R^2$. We assume that no
three endpoints are aligned. Let $\Upper\RR$ and $\Lower\RR$ denote the sets of
upper and lower endpoints of the regions in \RR, respectively, and let
$\Both\RR=\Upper\RR\cup\Lower\RR$. In
Section~\ref{sub:Computing_an_upper_convex_transversal} we focus on computing
an \emph{upper convex transversal} --a convex transversal $Q$ in which all
points appear on the upper hull of $Q$-- that maximizes the number of regions
visited. We show that there is an optimal transversal whose strictly convex
vertices lie only on bottom endpoints in $\Lower\RR$. In
Section~\ref{sub:Computing_a_convex_transversal} we prove that there exists an
optimal convex transversal whose strictly convex vertices are taken from the
set of all endpoints $\Both\RR$, and whose leftmost and rightmost vertices are
taken from a discrete set of points. This leads to an $O(n^6)$ time dynamic
programming algorithm to compute such a transversal.

\subsection{Computing an upper convex transversal}
\label{sub:Computing_an_upper_convex_transversal}

Let $k^*$ be the maximum number of regions visitable by a upper convex
transversal of \RR.

\begin{lemma}
  \label{lem:discrete_upper_hull}
  Let $U$ be an upper convex transversal of \RR that visits $k$ regions. There
  exists an upper convex transversal $U'$ of \RR, that visits the same $k$
  regions as $U$, and such that the leftmost vertex, the rightmost vertex, and
  all strictly convex vertices of $U'$ lie on the bottom endpoints of the
  regions in \RR.
\end{lemma}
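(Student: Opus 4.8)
The plan is to fix the combinatorial structure of $U$ and then characterise all upper convex transversals sharing that structure as a polytope, over which we minimise a suitable linear objective. Since every point of an upper convex transversal lies on its upper hull, the points of $U$ are exactly its chain vertices; I order them left to right as $q_1,\dots,q_k$, where $q_j$ lies on a region of \RR with fixed $x$-coordinate $x_j$ and vertical extent $[b_j,t_j]$, the $b_j$ being the bottom endpoints. Because the upper hull is $x$-monotone, two distinct vertices cannot share an $x$-coordinate, so the $x_j$ are strictly increasing and the chain is determined solely by the heights $y=(y_1,\dots,y_k)$ with $q_j=(x_j,y_j)$.

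Next I would describe the feasible set. A height vector $y$ yields an upper convex transversal visiting the same $k$ regions precisely when (i) $b_j \le y_j \le t_j$ for all $j$, so each point stays in its region, and (ii) the slopes $s_j = (y_{j+1}-y_j)/(x_{j+1}-x_j)$ are non-increasing, i.e. $s_{j-1}\ge s_j$ for every interior $j$. Both families are linear in $y$, so the feasible set $P$ is a polytope; it is bounded by (i) and nonempty since it contains the heights of $U$. The relevant observation is that an interior vertex $q_j$ is \emph{strictly} convex exactly when its concavity constraint $s_{j-1}\ge s_j$ is slack, and collinear exactly when it is tight.

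I would then take $U'$ to be a minimiser of $\sum_j y_j$ over $P$, which exists by compactness, and show it has the required form. The key step, and the only place needing care, is a local exchange argument: if at $U'$ some interior vertex $q_j$ were strictly convex yet had $y_j > b_j$, I could decrease $y_j$ by a small $\eps>0$. Lowering $y_j$ alone decreases $s_{j-1}$ and increases $s_j$, hence it \emph{relaxes} the neighbouring constraints $s_{j-2}\ge s_{j-1}$ and $s_j\ge s_{j+1}$ and tightens only its own constraint $s_{j-1}\ge s_j$; since that constraint was slack, a sufficiently small $\eps$ keeps $y$ feasible while strictly lowering the objective, contradicting minimality. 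An identical computation at the endpoints, where $y_1$ occurs only in $s_1$ and $y_k$ only in $s_{k-1}$ and lowering each relaxes its single constraint, forces $y_1=b_1$ and $y_k=b_k$. Thus every minimiser places the two extreme vertices and all strictly convex vertices on bottom endpoints, while each $q_j$ remains in $[b_j,t_j]$, so $U'$ visits the same $k$ regions as $U$.

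I expect the main conceptual obstacle to be resisting the tempting greedy approach of pushing one strictly convex vertex down at a time: lowering $q_j$ can turn a previously collinear neighbour into a new strictly convex vertex, so termination of such a process is unclear. Formulating the entire chain as a single polytope and minimising one global linear objective circumvents this, since at the optimum all vertices are simultaneously as low as the constraints permit, and the one-coordinate perturbation above is all that is needed to pin down their positions.
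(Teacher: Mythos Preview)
Your proposal is correct and follows essentially the same approach as the paper: both take a minimiser of $\sum_j y_j$ over the set of upper convex transversals visiting the given regions, and argue by a one-coordinate perturbation that any vertex not at a bottom endpoint must be collinear with its neighbours. Your treatment is more explicit (writing the feasible set as a polytope, checking each neighbouring slope constraint, handling the two extreme vertices separately), but the underlying idea is identical to the paper's short contradiction argument.
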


\begin{proof}
  Let $\mathcal{U}$ be the set of all upper convex transversals with $k$
  vertices. Let $U'\in\mathcal{U}$ be a upper convex transversal such that the
  sum of the $y$-coordinates of its vertices is minimal. Assume, by
  contradiction, that $U'$ has a vertex $v$ that is neither on the lower
  endpoint of its respective segment nor aligned with its adjacent
  vertices. Then we can move $v$ down without making the upper hull
  non-convex. This is a contradiction. Therefore, all vertices in $U'$ are
  either aligned with their neighbors (and thus not strictly convex), or at the
  bottom endpoint of a region.
\end{proof}

Let $\Lambda(v,w)$ denote the set of bottom endpoints of regions in \RR that
lie left of $v$ and below the line through $v$ and $w$. See
Figure~\ref{fig:upper_dp}(a). Let $\slope(\overline{uv})$ denote the slope of the
supporting line of $\overline{uv}$, and observe that
$\slope(\overline{uv})=\slope(\overline{vu})$.

By Lemma~\ref{lem:discrete_upper_hull} there is an optimal upper convex
transversal of \RR in which all strictly convex vertices lie on bottom endpoints
of the segments. Let $K[v,w]$ be the maximum number of regions visitable
by a
upper convex transversal that ends at a bottom endpoint $v$, and has an incoming
slope at $v$ of \emph{at least} $\slope(\overline{vw})$. It is important to
note that the second argument $w$ is used only to specify the slope, and $w$
may be left or right of $v$. We have that
\[ K[v,w] = \max_{u \in \Lambda(v,w)} \max_{s \in \Lambda(u,v)} K[u,s] + I[u,v], \]
where $I[u,v]$ denotes the number of regions in \RR intersected by the segment
$\overline{uv}$ (in which we treat the endpoint at $u$ as open, and the
endpoint at $v$ as closed). See Figure~\ref{fig:upper_dp}(a) for an illustration.

\begin{figure}[tb]
  \centering
  \includegraphics{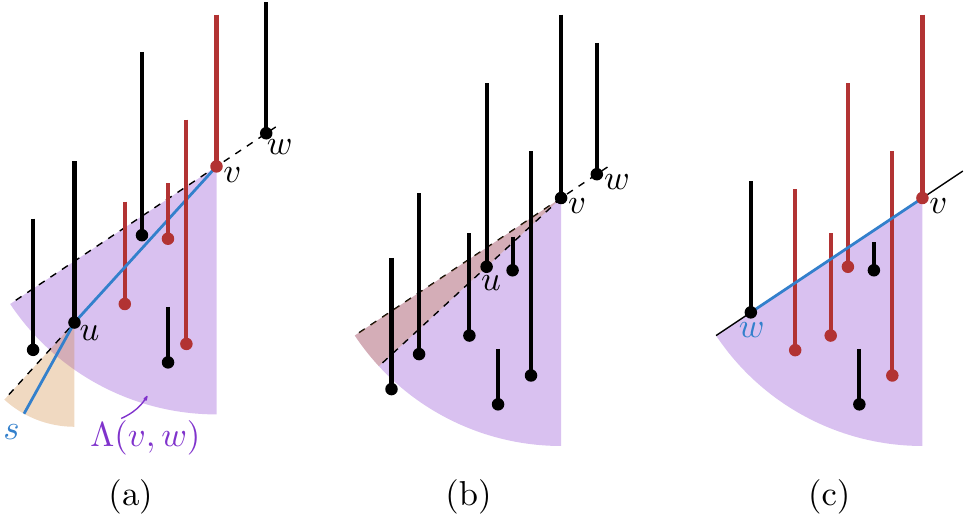}
  \caption{(a) The definition of $K[v,w]$. The region $\Lambda(v,w)$ is
    indicated in purple. The segments counted in $I[u,v]$ are shown in red. (b)
    The case that $K[v,w]=K[v,u]$, where $u$ corresponds to the predecessor
    slope of $\slope(\overline{vw})$. (c) The case that
    $K[v,w]=K[w,v]+I[w,v]$. }
  \label{fig:upper_dp}
\end{figure}

\newcommand{\pred}{\ensuremath{\mathit{pred}}\xspace}

\begin{observation}
  \label{obs:monotone_slopes}
  Let $v$, $s$, and $t$ be bottom endpoints of segments in \RR with
  $\slope(\overline{sv}) > \slope(\overline{tv})$. We have that
  $K[v,t] \geq K[v,s]$.
\end{observation}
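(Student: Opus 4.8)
The plan is to prove the observation as a monotonicity statement about $K[v,\cdot]$ viewed as a function of the threshold slope it encodes. First I would recall the definition: $K[v,w]$ is the maximum number of regions visited over all upper convex transversals that end at the bottom endpoint $v$ and whose incoming edge at $v$ has slope \emph{at least} $\slope(\overline{vw})$. Since $\slope(\overline{sv})=\slope(\overline{vs})$ and likewise for $t$, the hypothesis $\slope(\overline{sv}) > \slope(\overline{tv})$ says precisely that the threshold slope attached to $t$ is smaller than the one attached to $s$.

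The key step is that lowering the threshold slope only relaxes the constraint on admissible transversals. Every upper convex transversal ending at $v$ with incoming slope at least $\slope(\overline{vs})$ also has incoming slope at least $\slope(\overline{vt})$, because $\slope(\overline{vt}) < \slope(\overline{vs})$. Hence the family of transversals counted by $K[v,s]$ is contained in the family counted by $K[v,t]$, and maximizing over a larger family can only increase the value, giving $K[v,t] \geq K[v,s]$.

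In keeping with the stated recurrence, the same fact can be phrased through the predecessor sets $\Lambda$: a point $u$ to the left of $v$ lies below the line through $v$ and $w$ exactly when $\slope(\overline{uv}) > \slope(\overline{vw})$, so $\Lambda(v,w)$ is the set of potential predecessors whose incoming slope beats the threshold. Because $\slope(\overline{vt}) < \slope(\overline{vs})$ we obtain the inclusion $\Lambda(v,s) \subseteq \Lambda(v,t)$, and the outer maximum defining $K[v,t]$ then ranges over a superset of that defining $K[v,s]$. I do not expect a real obstacle here; the only point needing care is bookkeeping the direction of the inequality through the symmetry $\slope(\overline{vw})=\slope(\overline{wv})$ and the translation of the geometric ``below the line'' condition into a slope comparison, so that a smaller threshold slope correctly corresponds to a larger admissible predecessor set.
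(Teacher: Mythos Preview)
Your argument is correct and is exactly the intended reasoning: the paper states this as an observation without proof, and the monotonicity follows immediately from the definition of $K[v,\cdot]$ as a maximum over transversals whose incoming slope at $v$ meets a threshold, since lowering that threshold (from $\slope(\overline{vs})$ to $\slope(\overline{vt})$) only enlarges the feasible set. Your alternative phrasing via the inclusion $\Lambda(v,s)\subseteq\Lambda(v,t)$ is equally valid and is precisely what the paper exploits in the proof of Lemma~\ref{lem:recurrence_upper}.
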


Fix a bottom endpoint $v$, and order the other bottom endpoints
$w \in \Lower\RR$ in decreasing order of slope $\slope(\overline{wv})$. Let
$S_v$ denote the resulting order. We denote the $x$-coordinate of a point $v$ by $v_x$.

\begin{lemma}
  \label{lem:recurrence_upper}
  Let $v$ and $w$ be bottom endpoints of regions in \RR, and let $u$ be the
  predecessor of $w$ in $S_v$, if it exists (otherwise let
  $K[v,u]=-\infty$). We have that
  \begin{align*}
    K[v,w] = \begin{cases}
                \max\{1, K[v, u], K[w,v] + I[w,v]\} & \text{if }w_x < v_x\\
                \max\{1, K[v, u]\} & \text{otherwise.}
             \end{cases}
  \end{align*}
\end{lemma}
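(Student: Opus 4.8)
The plan is to prove the two inequalities ``$\ge$'' and ``$\le$'' separately, working directly with the combinatorial meaning of $K[v,w]$: the maximum number of regions visited by an upper convex transversal whose strictly convex vertices are bottom endpoints (as licensed by Lemma~\ref{lem:discrete_upper_hull}), that ends at $v$, and whose last edge has slope at least $\slope(\overline{vw})$. The only geometric fact I need is that, for a bottom endpoint $p$ lying to the left of $v$, the point $p$ is below the line through $v$ and $w$ exactly when $\slope(\overline{pv}) > \slope(\overline{vw})$. Together with the assumption that no three endpoints are aligned, this lets me compare the slopes of candidate last edges unambiguously.

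For ``$\ge$'' I exhibit each term on the right as a feasible transversal. The single vertex $v$ is always feasible and visits one region, giving $K[v,w] \ge 1$. Because $u$ is the predecessor of $w$ in $S_v$ we have $\slope(\overline{uv}) > \slope(\overline{wv})$, so the last-edge constraint of $K[v,w]$ is a relaxation of that of $K[v,u]$; hence any transversal feasible for $K[v,u]$ is feasible for $K[v,w]$, which is exactly Observation~\ref{obs:monotone_slopes} and yields $K[v,w] \ge K[v,u]$. Finally, when $w_x < v_x$, I take an optimal transversal for $K[w,v]$ and append the edge $\overline{wv}$: convexity is preserved since the incoming slope at $w$ is at least $\slope(\overline{wv})$ and, by the no-three-aligned assumption, in fact strictly larger, while the new last edge has slope $\slope(\overline{wv}) = \slope(\overline{vw})$, so the result is feasible for $K[v,w]$. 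The open/closed convention in $I$ prevents the region at $w$ from being counted twice, so the count grows by exactly $I[w,v]$, giving $K[v,w] \ge K[w,v] + I[w,v]$.

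For ``$\le$'' I take an optimal transversal $T$ realizing $K[v,w]$. If $T$ is a single vertex then $K[v,w] = 1$. Otherwise let $p$ be the vertex of $T$ immediately to the left of $v$; it is a bottom endpoint (a strictly convex vertex, or the leftmost vertex, in either case on a bottom endpoint by Lemma~\ref{lem:discrete_upper_hull}) and its edge satisfies $\slope(\overline{pv}) \ge \slope(\overline{wv})$. If $\slope(\overline{pv}) = \slope(\overline{wv})$ then $p$, $w$, $v$ are aligned, so $p = w$ by the no-three-aligned assumption; this forces $w_x < v_x$, and removing the edge $\overline{pv}$ leaves a transversal feasible for $K[w,v]$, whence $K[v,w] \le K[w,v] + I[w,v]$. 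If instead $\slope(\overline{pv}) > \slope(\overline{wv})$, then since $u$ is the immediate predecessor of $w$ in $S_v$ no bottom endpoint has slope to $v$ strictly between $\slope(\overline{wv})$ and $\slope(\overline{uv})$, so $\slope(\overline{pv}) \ge \slope(\overline{uv})$ and $T$ is feasible for $K[v,u]$, giving $K[v,w] \le K[v,u]$. Since the case $p = w$ can arise only when $w_x < v_x$, combining the cases reproduces exactly the two branches of the claimed recurrence.

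The main obstacle I anticipate is the bookkeeping in the ``$\le$'' direction: justifying that the second-to-last vertex $p$ may be taken to be a bottom endpoint, and treating the boundary slope $\slope(\overline{pv}) = \slope(\overline{wv})$ correctly. Both rest on the no-three-aligned assumption -- it is what collapses the degenerate collinear configuration to $p = w$ and what makes $\slope(\overline{uv})$ the unique next slope value below which $\slope(\overline{pv})$ can fall -- so the delicate point is to ensure that every appeal to \emph{strictly} decreasing slopes is genuinely licensed by that assumption rather than only giving non-increasing slopes.
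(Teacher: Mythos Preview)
Your proof is correct and follows essentially the same approach as the paper: both arguments hinge on the observation that the penultimate bottom-endpoint vertex $p$ of an optimal transversal either equals $w$ (possible only when $w_x<v_x$) or has $\slope(\overline{pv})>\slope(\overline{wv})$, in which case it already satisfies the tighter constraint indexed by the predecessor $u$. The paper phrases this via the set decomposition $\Lambda(v,w)=\Lambda(v,u)\cup W$ with $W\subseteq\{w\}$ and appeals directly to the recursive formula for $K$, whereas you argue the two inequalities operationally by constructing and truncating transversals; the content is the same.
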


\begin{proof}
	If $w$ does not have any predecessor in $S_v$ then $w$ can be the only
	endpoint in $\Lambda(v,w)$. In particular, if $w$ lies right of $v$ then
	$\Lambda(v,w)$ is empty, and thus $K[v,w]=1$, i.e. our transversal starts and
	ends at $v$. If $w$ lies left of $v$ we can either visit only $v$ or arrive
	from $w$, provided the incoming angle at $w$ is at least
	$\slope(\overline{wv})$. In that case it follows that the maximum number of
	regions visited is $K[w,v]+I[w,v]$.

	If $w$ does have a predecessor $u$ in $S_v$, we have
	$\Lambda(v,w) = \Lambda(v,u) \cup W$, where $W$ is either empty or the
	singleton $\{w\}$. By Observation~\ref{obs:monotone_slopes} (and the
	definition of $K$) we have that
	$K[v,u] = \max_{u' \in \Lambda(v,u)}\max_{s \in \Lambda(u',v)} K[u',s] +
	I[u',v]$. Analogous to the base case we have
	$\max_{u' \in W} K[v,u'] \max_{s \in \Lambda(u',v)} K[u',s] + I[u',v] =
	\max\{1,K[w,v] + I[w,v]\}$. The lemma follows.
\end{proof}

Lemma~\ref{lem:recurrence_upper} now suggests a dynamic programming approach to
compute the $K[v,w]$ values for all pairs of bottom endpoints $v,w$: we process
the endpoints $v$ on increasing $x$-coordinate, and for each $v$, we compute
all $K[v,w]$ values in the order of $S_v$. To this end, we need to compute (i)
the (radial) orders $S_v$, for all bottom endpoints $v$, and (ii) the number of
regions intersected by a line segment $\overline{uv}$, for all pairs of bottom
endpoints $u$, $v$. We show that we can solve both these problems in $O(n^2)$
time. We then also obtain an $O(n^2)$ time algorithm to compute
$k^* = \max_{v,w} K[v,w]$.

\subparagraph{Computing predecessor slopes.} For each bottom endpoint $v$, we
simply sort the other bottom endpoints around $v$. This can be done in $O(n^2)$
time in total~\cite{overmars1988cyclic}\footnote{Alternatively, we can dualize
  the points into lines and use the dual arrangement to obtain all radial orders
  in $O(n^2)$ time.}. We can now obtain $S_v$ by splitting the resulting list
into two lists, one with all endpoints left of $v$ and one with the endpoints
right of $v$, and merging these lists appropriately. In total this takes
$O(n^2)$ time.

\subparagraph{Computing the number of intersections.} We use the standard
duality transform~\cite{bkos-cgaa-00} to map every point $p=(p_x,p_y)$ to a
line $p^* : y=p_x x - p_y$, and every non-vertical line $\ell : y=ax+b$ to a
point $\ell^*=(a,-b)$.
Consider the arrangement \A formed by the lines $p^*$
dual to all endpoints $p$ (both top and bottom) of all regions in \R. Observe
that in this dual space, a vertical line segment $R=\overline{pq} \in \RR$
corresponds to a strip $R^*$ bounded by two parallel lines $p^*$ and $q^*$. Let
$\RR^*$ denote this set of strips corresponding to \RR. It follows that if we
want to count the number of regions of \RR intersected by a query line $\ell$
we have to count the number of strips in $\RR^*$ containing the point $\ell^*$.

All our query segments $\overline{uv}$ are defined by two bottom endpoints $u$
and $v$, so the supporting line $\ell_{uv}$ of such a segment corresponds to a
vertex $\ell^*_{uv}$ of the arrangement \A. It is fairly easy to count, for
every vertex $\ell^*$ of \A, the number of strips that contain $\ell^*$, in a
total of $O(n^2)$ time; simply traverse each line of \A while maintaining the
number of strips that contain the current point.

Since in our case we wish to count only the regions intersected by a
line segment $\overline{uv}$ (rather than a line $\ell_{uv}$), we need two more
observations. Assume without loss of generality that $u_x < v_x$. This means we
wish to count only the strips $R^*$ that contain $\ell^*_{uv}$ and whose slope
$\slope(R)$ lies in the range $[u_x,v_x]$.

\begin{observation}
  \label{obs:top_and_bottom}
  Let $p^*$ be a line, oriented from left to right, and let $R^*$ be a
  strip. The line $p^*$ intersects the bottom boundary of $R^*$ before the top
  boundary of $R^*$ if and only if $\slope(p^*) > \slope(R^*)$.
\end{observation}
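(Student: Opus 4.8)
The plan is to prove the statement by a short direct computation, exploiting the fact that the two boundaries of the strip $R^*$ are \emph{parallel} lines sharing the common slope $\slope(R^*)$. Because they are parallel, the order in which the oriented line $p^*$ meets them cannot depend on where it crosses, only on the relative steepness of $p^*$ and the strip. So the whole statement should reduce to a single comparison between $\slope(p^*)$ and $\slope(R^*)$.

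First I would fix coordinates for the three lines involved. Write the two boundaries of $R^*$ as $y = m x + b_1$ and $y = m x + b_2$, where $m = \slope(R^*)$ is their common slope and the intercepts satisfy $b_1 < b_2$; since the line with the smaller intercept lies strictly below the other at every $x$, the line $y = m x + b_1$ is the bottom boundary and $y = m x + b_2$ is the top boundary. Write $p^* : y = m' x + c$ with $m' = \slope(p^*)$. Solving $m' x + c = m x + b_i$ gives the crossing abscissae $x_i = (b_i - c)/(m' - m)$ for $i \in \{1,2\}$. Since $p^*$ is oriented from left to right, it meets the bottom boundary before the top boundary exactly when $x_1 < x_2$. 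The difference is $x_1 - x_2 = (b_1 - b_2)/(m' - m)$, whose numerator $b_1 - b_2$ is negative; hence $x_1 - x_2 < 0$ if and only if the denominator $m' - m$ is positive, i.e.\ if and only if $m' > m$. This is precisely the claimed equivalence $\slope(p^*) > \slope(R^*)$.

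I do not expect a genuine obstacle here, as the content is a single sign comparison. The only points demanding care are bookkeeping ones: correctly identifying the lower of the two parallel lines as the bottom boundary, and tracking the sign reversal that occurs when dividing by $(m' - m)$, which may be negative. There remains the degenerate case $m' = m$, where $p^*$ is parallel to the strip and meets neither boundary; here the statement is vacuous, and it is excluded in our application since $\slope(p^*)$ and $\slope(R^*)$ equal the distinct $x$-coordinates of the corresponding primal objects. Noting this case separately completes the argument.
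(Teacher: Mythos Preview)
Your argument is correct. The paper itself does not give a proof of this observation at all; it simply states it and points to a picture (Figure~\ref{fig:strips}). Your computation with the two intercepts $b_1<b_2$ and the sign of $(b_1-b_2)/(m'-m)$ is exactly the elementary verification one would write out if asked to justify the picture, and there is nothing to add.
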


\begin{figure}[tb]
  \centering
  \includegraphics{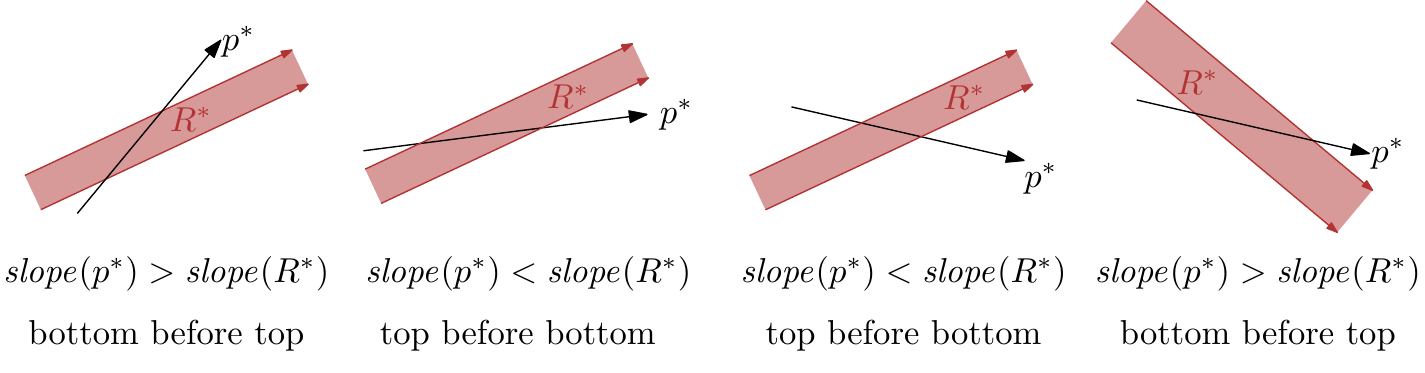}
  \caption{A line $p^*$ intersects the bottom of the strip $R^*$ if and only if
  $\slope(p^*) > \slope(R^*)$,}
  \label{fig:strips}
\end{figure}

Again consider traversing a line $p^*$ of \A (from left to right), and let
$T_{p^*}(\ell^*)$ be the number of strips that contain the point $\ell^*$ and that
we enter through the top boundary of the strip. 

\begin{lemma}
  \label{lem:strips_containing}
  Let $\ell^*_{uv}$, with $u_x < v_x$, be a vertex of \A at which the lines
  $u^*$ and $v^*$ intersect. The number of strips from $\RR^*$ with slope in
  the range $[u_x,v_x]$ containing $\ell^*_{uv}$ is
  $T_{u^*}(\ell^*_{uv}) - T_{v^*}(\ell^*_{uv})$.
\end{lemma}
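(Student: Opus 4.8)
The plan is to reinterpret the counter $T_{p^*}$ purely in terms of strip slopes, after which the claimed equality falls out as a telescoping difference. I would begin by recording two facts from the duality transform: the line $p^*$ dual to a point $p$ has slope exactly $p_x$, so that $\slope(u^*)=u_x$ and $\slope(v^*)=v_x$ (and $u_x<v_x$); and, since duality preserves incidences, the vertex $\ell^*_{uv}$ lies on both $u^*$ and $v^*$, so that $T_{u^*}(\ell^*_{uv})$ and $T_{v^*}(\ell^*_{uv})$ are both evaluated at a point on the traversed line.

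The key step is to turn ``entering a strip through its top boundary'' into a slope comparison. Traversing a line $p^*$ from left to right, for any strip $R^*$ that contains the point under consideration we cross its entering boundary --the one with the smaller $x$-coordinate-- first. By Observation~\ref{obs:top_and_bottom}, $p^*$ meets the bottom boundary of $R^*$ before its top boundary exactly when $\slope(p^*)>\slope(R^*)$; hence $p^*$ enters $R^*$ through the top boundary exactly when $\slope(R^*)>\slope(p^*)$. Therefore $T_{p^*}(\ell^*_{uv})$ equals the number of strips of $\RR^*$ that contain $\ell^*_{uv}$ and whose slope exceeds $\slope(p^*)$.

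Applying this with $p^*=u^*$ and $p^*=v^*$ gives that $T_{u^*}(\ell^*_{uv})$ counts the strips containing $\ell^*_{uv}$ with slope greater than $u_x$, and $T_{v^*}(\ell^*_{uv})$ counts those with slope greater than $v_x$. As $u_x<v_x$, the second set is contained in the first, so the difference $T_{u^*}(\ell^*_{uv})-T_{v^*}(\ell^*_{uv})$ counts exactly the strips containing $\ell^*_{uv}$ with slope in $(u_x,v_x]$. Since a strip contains $\ell^*_{uv}$ iff the supporting line $\ell_{uv}$ meets the corresponding segment, and a strip's slope equals the $x$-coordinate of its segment, these are precisely the segments met by $\overline{uv}$ within its horizontal extent.

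The one place that needs care, and which I expect to be the main obstacle, is the two extreme strips whose slope equals $u_x$ or $v_x$: these are the segments having $u$ and $v$ as their bottom endpoints, and for them $\ell^*_{uv}$ lies on the boundary $u^*$ or $v^*$ of the strip rather than strictly inside. Here the subtraction yields the half-open range $(u_x,v_x]$, and I would argue that this matches the open-at-$u$, closed-at-$v$ convention used to define $I[u,v]$: the segment with bottom endpoint $u$ is met by $\overline{uv}$ only at the excluded endpoint $u$ and is correctly dropped, while the segment with bottom endpoint $v$ is met at the included endpoint $v$ and is correctly retained. Making the ``entered through the top'' bookkeeping agree with this convention in the degenerate case where the traversed line is itself a boundary of the strip --which, because duality sends a bottom endpoint to the top boundary of its strip, is exactly the situation for $u^*$ and $v^*$-- is the only subtlety in the argument, and it is what reconciles the range produced by the difference with the stated range $[u_x,v_x]$.
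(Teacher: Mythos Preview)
Your proposal is correct and follows essentially the same approach as the paper: both arguments use Observation~\ref{obs:top_and_bottom} to convert ``entered through the top boundary'' into the slope comparison $\slope(R^*)>\slope(p^*)$, and then deduce that the difference $T_{u^*}-T_{v^*}$ counts exactly the strips with the desired slope range. The paper presents this as a direct four-way case analysis on each strip's contribution to the two counters, whereas you first give a closed-form description of $T_{p^*}$ and then subtract; your handling of the boundary cases (strips of slope exactly $u_x$ or $v_x$) is in fact more explicit than the paper's, which glosses over the half-open versus closed interval distinction.
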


\begin{proof}
  \begin{figure}[tb]
    \centering
    \includegraphics[page=2]{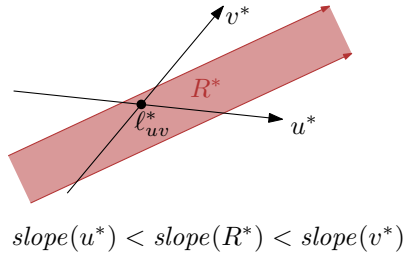}
    \caption{The strip $R^*$ with a slope in the range $[u_x,v_x]$ containing
      $\ell^*_{uv}$ contributes one to $T_{u^*}(\ell^*_{uv})$ and zero to
      $T_{v^*}(\ell^*_{uv})$.}
    \label{fig:strips_intersection}
  \end{figure}
  A strip that does not contain $\ell^*=\ell^*_{uv}$ contributes zero to both
  $T_{u^*}(\ell^*)$ and $T_{v^*}(\ell^*)$. A strip that contains $\ell^*$ but
  has slope larger than $v_x$ (and thus also larger than $u_x$) contributes one
  to both $T_{u^*}(\ell^*)$ and $T_{v^*}(\ell^*)$
  (Observation~\ref{obs:top_and_bottom}). Symmetrically, a strip that contains
  $\ell^*$ but has slope smaller than $u_x$ contributes zero to both
  $T_{u^*}(\ell^*)$ and $T_{v^*}(\ell^*)$. Finally, a strip whose slope is in
  the range $[u_x,v_x]$ is intersected by $u^*$ from the top, and by $v^*$ from
  the bottom (Observation~\ref{obs:top_and_bottom}), and thus contributes one
  to $T_{u*}(\ell^*)$ and zero to $T_{v^*}(\ell^*)$. See
  Figure~\ref{fig:strips_intersection} for an illustration. The lemma follows.
\end{proof}

\begin{corollary}
  \label{cor:segments_intersecting}
  Let $u, v \in \Lower\RR$ be bottom endpoints. The number of regions of \RR
  intersected by $\overline{uv}$ is
  $T_{u^*}(\ell^*_{uv}) - T_{v^*}(\ell^*_{uv})$.
\end{corollary}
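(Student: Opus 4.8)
The plan is to assemble the corollary directly from Lemma~\ref{lem:strips_containing} by setting up a dictionary between the primal statement ``segment $\overline{uv}$ intersects region $R$'' and the dual statement ``strip $R^*$ contains $\ell^*_{uv}$ and has slope in $[u_x,v_x]$.'' Once this dictionary is in place, the count is read off from the lemma.

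First I would pin down the correspondence between a vertical region and its dual strip. Each region $R = \overline{pq} \in \RR$ is vertical, so $p$ and $q$ share a common $x$-coordinate $R_x$; under the duality transform the lines $p^*$ and $q^*$ therefore both have slope $R_x$, and hence the strip $R^*$ they bound satisfies $\slope(R^*) = R_x$. This is the key identity: the slope of a dual strip equals the $x$-coordinate of its (vertical) primal region. Note also that the map $R \mapsto R^*$ is injective, so counting intersected regions is the same as counting the corresponding strips.

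Next I would split the condition ``$\overline{uv}$ intersects $R$'' into a line part and a range part, carrying over the assumption $u_x < v_x$ from the lemma. By the duality discussion preceding the lemma, the supporting line $\ell_{uv}$ intersects $R$ if and only if $\ell^*_{uv}$ lies in the strip $R^*$. Restricting from the line to the segment $\overline{uv}$, the intersection with the vertical region $R$ can only occur at $x = R_x$, so $\overline{uv}$ itself (rather than $\ell_{uv}$) meets $R$ exactly when $R_x \in [u_x,v_x]$, that is, when $\slope(R^*) = R_x \in [u_x,v_x]$. Hence the regions intersected by $\overline{uv}$ are in bijection with the strips of $\RR^*$ that contain $\ell^*_{uv}$ and whose slope lies in $[u_x,v_x]$.

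Finally I would invoke Lemma~\ref{lem:strips_containing}, which counts precisely these strips as $T_{u^*}(\ell^*_{uv}) - T_{v^*}(\ell^*_{uv})$, yielding the claimed quantity. I do not anticipate a genuine obstacle: the corollary is essentially a repackaging of the lemma, and the only points needing care are the slope-equals-$x$-coordinate identity and the verification that restricting from the line $\ell_{uv}$ to the segment $\overline{uv}$ is faithfully captured by the slope-range condition $[u_x,v_x]$ — both of which are already set up in the text immediately before the lemma.
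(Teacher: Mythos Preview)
Your proposal is correct and matches the paper's approach: the corollary is stated there without proof precisely because the surrounding text has already established the primal--dual dictionary (strip slope equals region $x$-coordinate, and the segment-vs.-line distinction is captured by the slope range $[u_x,v_x]$), so the count follows immediately from Lemma~\ref{lem:strips_containing}. Your write-up simply makes this explicit.
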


We can easily compute the counts $T_{u*}(\ell^*_{uv})$ for every vertex
$\ell^*_{uv}$ on $u^*$ by traversing the line $u^*$. We therefore obtain the
following result.

\begin{lemma}
  \label{lem:intersection_counting}
  For all pairs of bottom endpoints $u,v \in \Lower\RR$, we can compute the
  number of regions in \RR intersected by $\overline{uv}$, in a total of
  $O(n^2)$ time.
\end{lemma}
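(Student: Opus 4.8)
The plan is to reduce the entire computation to evaluating, for each pair, the two-term expression supplied by Corollary~\ref{cor:segments_intersecting}. For every ordered pair of bottom endpoints $u,v\in\Lower\RR$ with $u_x<v_x$, the number of regions met by $\overline{uv}$ equals $T_{u^*}(\ell^*_{uv})-T_{v^*}(\ell^*_{uv})$, where $\ell^*_{uv}$ is the vertex of \A at which $u^*$ and $v^*$ cross. Hence it suffices to tabulate, for every bottom-endpoint line $u^*$ and every vertex $\ell^*_{uv}$ on it whose other line $v^*$ is also dual to a bottom endpoint, the single value $T_{u^*}(\ell^*_{uv})$. Afterwards each of the $O(n^2)$ pairs is resolved by one subtraction in $O(1)$ time, so the whole cost lies in producing these $T$-values.

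To obtain them I would first build the arrangement \A of the $2n$ dual lines by the standard incremental construction in $O(n^2)$ time and space, which also gives, for each line, the left-to-right order of its vertices. I then sweep each bottom-endpoint line $u^*$ once from left to right, maintaining a counter $c$ intended to equal $T_{u^*}$ at the current point, i.e.\ the number of strips that currently contain the sweep point and were entered through their top boundary. The counter is initialised to $0$: by the assumption that no three endpoints are aligned, distinct segments have distinct $x$-coordinates, so in the dual the only strip whose boundaries are parallel to $u^*$ is $u$'s own strip $U^*$; against every other strip $u^*$ is non-parallel, hence at $x\to-\infty$ it lies strictly above or strictly below both boundaries and is inside no strip.

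The update rule follows from Observation~\ref{obs:top_and_bottom}. Each vertex on $u^*$ is a crossing with a dual line $r^*$, one of the two boundaries of the strip $R^*$ of the region incident to $r$; comparing $\slope(u^*)=u_x$ with $\slope(R^*)$ tells us the order in which $u^*$ meets the top and bottom boundaries of $R^*$. If $r$ is a bottom endpoint, so $r^*$ is the top boundary, and $\slope(u^*)<\slope(R^*)$, the crossing is an entry through the top, and I increment $c$; its matching exit occurs later at the bottom boundary, where I decrement $c$. In the complementary case $\slope(u^*)>\slope(R^*)$ the strip is entered through the bottom and never contributes to $T_{u^*}$, so $c$ is unchanged at both of its crossings. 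After applying the update at a vertex $\ell^*_{uv}$ with $v$ a bottom endpoint, I record the current value of $c$ as $T_{u^*}(\ell^*_{uv})$. Each of the $n$ bottom-endpoint lines has $O(n)$ vertices and every update is $O(1)$, so all $T$-values, and hence all pair counts, are computed in $O(n^2)$ total.

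The main obstacle is not the running time but verifying that $c$ equals $T_{u^*}$ at every recorded vertex. This is an invariant-maintenance argument along the sweep: using Observation~\ref{obs:top_and_bottom} at each crossing, one checks that the increments and decrements above exactly track the set of strips that contain the current point and were entered through their top boundary, so that the recorded values agree with the definition of $T_{u^*}$. The degeneracies that need care are that $u^*$ is a boundary of its own strip $U^*$ (and $v^*$ of $V^*$), the only parallel case, and that several lines may be concurrent at a vertex; both are handled by the convention already fixed for $I[u,v]$, treating the endpoint at $u$ as open and at $v$ as closed, so that the contributions of $U^*$ and $V^*$ fall on the correct side of $\ell^*_{uv}$ and the subtraction of Corollary~\ref{cor:segments_intersecting} remains valid.
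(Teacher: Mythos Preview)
Your proposal is correct and follows essentially the same approach as the paper: traverse each dual line $u^*$ from left to right while maintaining the count $T_{u^*}$, then combine via Corollary~\ref{cor:segments_intersecting}. The paper's own argument is a one-line pointer to this traversal, whereas you spell out the arrangement construction, the counter-update rule derived from Observation~\ref{obs:top_and_bottom}, and the handling of the self-strip and endpoint conventions; this is added detail rather than a different route.
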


Applying this in our dynamic programming approach for computing $k^*$ we get:

\begin{theorem}
  \label{thm:parallel_segments_upper_hull}
  Given a set of $n$ vertical line segments \RR, we can compute the maximum
  number of regions $k^*$ visitable by an upper convex transversal $Q$ in
  $O(n^2)$ time.
\end{theorem}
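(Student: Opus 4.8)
The plan is to assemble the pieces already established in this section into the claimed running time. The theorem asks for the maximum number of regions $k^*$ visitable by an upper convex transversal, and we have shown $k^* = \max_{v,w} K[v,w]$ where $v,w$ range over bottom endpoints. So the task reduces to computing all $O(n^2)$ table entries $K[v,w]$ within the stated $O(n^2)$ time bound, which means each entry must be obtained in amortised constant time.

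First I would set up the preprocessing. By the paragraph on computing predecessor slopes, we obtain all radial orders $S_v$ in $O(n^2)$ total time; this also fixes, for each pair $(v,w)$, the predecessor $u$ of $w$ in $S_v$ referenced by Lemma~\ref{lem:recurrence_upper}. In parallel, Lemma~\ref{lem:intersection_counting} gives us all values $I[u,v]$, the number of regions intersected by $\overline{uv}$, in $O(n^2)$ time. With these two tables in hand, every ingredient appearing on the right-hand side of the recurrence in Lemma~\ref{lem:recurrence_upper} is available by a constant-time lookup.

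Next I would argue that the recurrence can be evaluated in the correct order. The entry $K[w,v]$ used when $w_x < v_x$ depends on endpoints with smaller $x$-coordinate, and $K[v,u]$ depends on the predecessor of $w$ in $S_v$, so I would process the bottom endpoints $v$ in increasing $x$-coordinate and, for each fixed $v$, compute the entries $K[v,w]$ by scanning $w$ in the order $S_v$. Under this order, when we reach $K[v,w]$ the predecessor value $K[v,u]$ has already been stored, and any $K[w,v]$ with $w_x < v_x$ was computed at an earlier stage; thus each of the (at most three) terms in the $\max$ is a finished table value, and the entry is filled in $O(1)$ time. Summing over the $O(n^2)$ pairs gives $O(n^2)$ for the whole table, and a final linear-sized pass extracts $k^* = \max_{v,w}K[v,w]$; adding the $O(n^2)$ preprocessing yields the bound.

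The only genuinely delicate point is the scheduling: one must verify that the dependency $K[v,w] \to K[w,v]$ never creates a cycle. This is exactly why the recurrence distinguishes the case $w_x < v_x$ — the cross term $K[w,v]+I[w,v]$ is consulted only when $w$ lies to the left of $v$, so the dependence always points from a larger $x$-coordinate to a strictly smaller one and is acyclic once the $x$-sorted outer loop is imposed. I expect this ordering argument to be the main obstacle to a fully rigorous write-up, whereas the time accounting is then routine.
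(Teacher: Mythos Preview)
Your proposal is correct and follows essentially the same approach as the paper: preprocess the radial orders $S_v$ and the intersection counts $I[u,v]$ in $O(n^2)$ time, then fill the table $K[v,w]$ by processing $v$ in increasing $x$-coordinate and, for each $v$, scanning $w$ in the order $S_v$, so that each entry is computed in $O(1)$ time via the recurrence of Lemma~\ref{lem:recurrence_upper}. The acyclicity argument you give for the cross-dependency $K[v,w]\to K[w,v]$ is exactly the reason the outer loop is ordered by $x$-coordinate, as the paper also indicates.
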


\subsection{Computing a convex transversal}
\label{sub:Computing_a_convex_transversal}

We now consider computing a convex partial transversal that maximizes the number
of regions visited. We first prove some properties of convex transversals.
We then use these properties to compute the maximum number of regions visitable
by such a transversal using dynamic programming.

\subsubsection{Canonical Transversals}
\label{ssub:Canonical_Transversals}

\begin{lemma}
  \label{lem:discrete_convex_hull_buildup}
  Let $Q$ be a convex partial transversal of \RR. There exists a convex partial
  transversal $Q'$ of \RR such that
  \begin{itemize}
  \item the transversals have the same leftmost vertex $\ell$ and
    the same rightmost vertex $r$,
  \item the upper hull of $Q'$ intersects the same regions as the upper hull of
    $Q$,
  \item all strictly convex vertices on the upper hull of $Q'$ lie on bottom
    endpoints of \RR,
  \item the lower hull of $Q'$ intersects the same regions as the lower hull of
    $Q$, and
  \item all strictly convex vertices on the lower hull of $Q'$ lie on top
    endpoints of regions in \RR.
  \end{itemize}
\end{lemma}

\begin{proof}
  Clip the segments containing $\ell$ and $r$ such that $\ell$ and $r$ are the
  bottom endpoints, and apply Lemma~\ref{lem:discrete_upper_hull} to get an
  upper convex transversal $U$ of \RR whose strictly convex vertices lie on
  bottom endpoints and that visits the same regions as the upper hull of
  $Q$. So we can replace the upper hull of $Q$ by $U$. Symmetrically, we can
  replace the lower hull of $Q$ by a transversal that visits the same regions and
  whose strictly convex vertices use only top endpoints.
\end{proof}

A partial convex transversal $Q'$ of \RR is a \emph{lower canonical} transversal if
and only if
\begin{itemize}[nosep]
\item the strictly convex vertices on the upper hull of $Q'$ lie on bottom
  endpoints in \RR,
\item the strictly convex vertices on the lower hull of $Q'$ lie on bottom or
  top endpoints  of regions in \RR,
\item the leftmost vertex $\ell$ of $Q'$ lies on a line through $w$, where
  $w$ is the leftmost strictly convex vertex of the lower hull of $Q'$, and
  another endpoint.
\item the rightmost vertex $r$ of $Q'$ lies on a line through $z$, where $z$
  is the rightmost strictly convex vertex of the lower hull of $Q'$, and
  another endpoint.
\end{itemize}
An \emph{upper canonical} transversal is defined analogously, but now $\ell$ and
$r$ lie on lines through an endpoint and the leftmost and rightmost strictly
convex vertices on the upper hull.

\begin{lemma}
  \label{lem:discrete_convex_hull}
  Let $Q$ be a convex partial transversal of \RR for which all $h\geq 2$ strictly
  convex vertices in the lower hull lie on endpoints of regions in \RR. There
  exists a lower canonical transversal $Q'$ of \RR, that visits the same regions
  as $Q$.
\end{lemma}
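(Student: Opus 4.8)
The plan is to reduce the statement to the two still‑missing conditions on the extreme vertices and then to slide each of them into a canonical position. The hypothesis already places all $h\geq 2$ strictly convex vertices of the lower hull on endpoints, so the second bullet of the definition holds verbatim. To secure the first bullet I would invoke Lemma~\ref{lem:discrete_convex_hull_buildup} (equivalently, Lemma~\ref{lem:discrete_upper_hull} applied to the upper hull alone, after clipping the segments of $\ell$ and $r$ so that $\ell$ and $r$ become their bottom endpoints): this yields a transversal with the same leftmost vertex $\ell$, the same rightmost vertex $r$, the same regions visited by the upper hull, and all strictly convex vertices of the upper hull on bottom endpoints, while leaving the lower hull untouched. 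After this, only the two conditions on $\ell$ and $r$ remain, and since they are mirror images of each other (one governed by the leftmost strictly convex lower vertex $w$, the other by the rightmost such vertex $z$), it suffices to treat $\ell$.

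For $\ell$ I would use that, because $h\geq 2$, the vertex $w$ exists, lies on an endpoint, and the portion of the lower hull between $\ell$ and $w$ is a single straight segment. I would keep $w$ and everything to its right fixed and pivot this leftmost segment about $w$ by sliding $\ell$ along its region $R_\ell$, stopping at the first of the following events: (i) the supporting line of $\overline{\ell w}$ passes through some other endpoint of \RR; (ii) $\ell$ reaches an endpoint of $R_\ell$; or (iii) $\overline{\ell w}$ becomes collinear with the next lower‑hull edge, so that $w$ stops being strictly convex. In case (i) the canonical condition holds by definition; in case (ii) it holds with the witnessing endpoint taken to be $\ell$ itself; and in case (iii) the new leftmost strictly convex vertex $w'$ is collinear with $\ell$ and $w$, so $\ell$ lies on the line through $w'$ and the endpoint $w$. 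A symmetric pivot about $z$ then establishes the condition on $r$.

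The step I expect to be the main obstacle is verifying that this pivot preserves the set of \emph{visited} regions. Regions lying entirely to the right of $w$ are unaffected, so only the regions met by the two leftmost edges of the hull are at issue. A region is \emph{lost} exactly when one of its endpoints arrives on an edge, but such an arrival on $\overline{\ell w}$ is itself a collinearity event of type (i) (or the flattening of type (iii)); I can stop there, and the region, being touched, is still visited. The genuinely delicate point is to forbid spuriously \emph{gaining} a region during the slide. I would control this by choosing the pivot direction carefully and by the following repair: whenever the leftmost \emph{upper} edge is about to drop off a region it currently visits, that region is met at its bottom endpoint, which I can pin as a new strictly convex vertex of the upper hull—legal precisely because the first canonical bullet allows bottom endpoints there—so the upper hull keeps visiting the region and the slide may continue. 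Making this bookkeeping airtight, i.e.\ guaranteeing that some event of type (i)--(iii) is reached before any region is gained, is the crux of the argument and the place where the geometry of the two incident edges must be analysed in detail.
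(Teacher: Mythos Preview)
Your plan is broadly correct and matches the paper's in spirit, but two choices make it harder than necessary, and they are precisely the source of the ``crux'' you worry about.

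First, the paper reverses your order: it slides $\ell$ (and then $r$) \emph{before} canonicalising the upper hull, and only afterwards invokes Lemma~\ref{lem:discrete_upper_hull} on the clipped instance. During the slide the only upper‑hull constraint is the single vertex $u$ adjacent to $\ell$; whatever happens to the rest of the upper hull is irrelevant because it will be rebuilt from scratch at the end. This removes any need for your ``pin a new bottom endpoint on the upper hull'' repair.

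Second, the paper commits to a specific direction: move $\ell$ \emph{downward}, carrying the collinear transversal points on $\overline{\ell w}$ down with it (bending only at $u$ and $w$). For vertical segments this is the natural choice: every such point stays on its segment until that segment's bottom endpoint is reached, and that event is exactly one of the three stopping conditions. Because the motion is monotone downward, a region can only be lost at the moment $\overline{\ell w}$ meets a bottom endpoint (case~(ii)), $\ell$ itself reaches its bottom endpoint (case~(i)), or $\overline{\ell w}$ becomes collinear with $\overline{ww'}$ (case~(iii)); there is no separate ``gaining'' phenomenon to rule out. Your events (i)--(iii) are the same three, just reordered.

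In short, your argument can be made to work, but the bookkeeping you flag as the main obstacle disappears once you slide $\ell$ downward \emph{first} and postpone the upper‑hull clean‑up to a final application of Lemma~\ref{lem:discrete_upper_hull}.
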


\begin{proof}
  \begin{figure}[tb]
    \centering
    \includegraphics{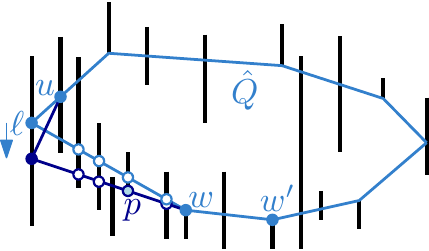}
    \caption{We move $\ell$ and the vertices on $\overline{\ell{}w}$
      downwards, bending at $u$ and $w$ until $\overline{\ell{}w}$ contains a
      bottom endpoint $p$ or becomes colinear with $\overline{ww'}$.}
    \label{fig:canonical_hull}
  \end{figure}
  Let $\ell$ be the leftmost point of $Q$, let $u$ be the vertex of $Q$
  adjacent to $\ell$ on the upper hull, let $w$ be the leftmost strictly convex
  vertex on the lower hull of $Q$, and let $w'$ be the strictly convex vertex
  of $Q$ adjacent to $w$. We move $\ell$ and all other vertices of $Q$ on
  $\overline{\ell{}w}$ downwards, bending at $u$ and $w$, while $Q$ remains
  convex and visits the same $k$-regions until: (i) $\ell$ lies on the bottom
  endpoint of its segment, (ii) the segment $\overline{\ell{}w}$ contains the
  bottom endpoint $p$ of a region, or (iii) the segment $\overline{\ell{}w}$
  has become collinear with $\overline{ww'}$. See
  Figure~\ref{fig:canonical_hull}. Observe that in all cases $\ell$ lies on a
  line through the leftmost strictly convex vertex $w$ and another endpoint
  (either $\ell$ itself, $p$, or $w'$). Symmetrically, we move $r$ downwards
  until it lies on an endpoint or on a line through the rightmost strictly
  convex vertex $z$ on the lower hull of $Q$ and another endpoint. Let $Q''$ be
  the resulting convex transversal we obtain.

  Let $\RR'$ be the regions intersected by the upper hull of $Q''$ (setting the
  bottom endpoint of the regions containing $\ell$ and $r$ to be $\ell$ and
  $r$). We now appeal to Lemma~\ref{lem:discrete_upper_hull} to get that there
  is an upper hull that also visits all regions in $\RR'$ and in which all
  strictly convex vertices lie on bottom endpoints. So, we can replace the
  upper hull of $Q''$ by this upper hull and obtain the transversal $Q'$ stated
  in the lemma.
\end{proof}

\begin{lemma}
  \label{lem:discrete_straight}
  Let $Q$ be a convex partial transversal of \RR, whose lower hull intersects at
  least one region (other than the regions containing the leftmost and
  rightmost vertices) but contains no endpoints of regions in \RR. There exists
  a convex partial transversal $Q'$ intersecting the same regions as $Q$ that
  does visit one endpoint of a region in \RR.
\end{lemma}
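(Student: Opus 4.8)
The plan is to treat this as the degenerate case left over by Lemma~\ref{lem:discrete_convex_hull}, in which the lower hull carries no endpoint-vertex to bend at, and to resolve it by sliding the lower hull until it meets an endpoint. Since the lower hull of $Q$ contains no region endpoint, any strictly convex vertex on it would be an interior point of its region; the configuration we must handle is therefore the one in which the lower hull is a single segment $s=\overline{\ell r}$ crossing the interiors of one or more regions. By hypothesis at least one of these is a \emph{middle} region $R_0=\overline{p_0q_0}$, distinct from the regions of $\ell$ and $r$, that $s$ crosses strictly between its endpoints, so $p_0$ lies strictly below $s$ and $q_0$ strictly above it.

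I would parametrise the candidate lower hulls by the heights of $\ell$ and $r$ along their (vertical) regions, giving a two-parameter family: over the $x$-coordinate of each region $R_i$ the height of $s$ is an affine function $v_i$ of these two heights, and $s$ crosses $R_i$ exactly when $v_i$ lies between the endpoint heights of $R_i$. The set $C$ of parameter pairs for which $s$ crosses exactly the regions crossed by the original $s$, and for which $\ell$ and $r$ remain inside their own regions, is then an intersection of slabs: a bounded convex polygon, and by hypothesis the current configuration lies in its interior (every relevant inequality is strict). Every point of $\partial C$ makes one inequality tight, which geometrically means that $s$ passes through a region endpoint or that $\ell$ or $r$ reaches an endpoint of its own region. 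Moving along the straight segment from the interior configuration to any point of $\partial C$ stays inside $C$ by convexity, keeps the upper hull unchanged, and ends with the transversal visiting an endpoint; this produces the desired $Q'$.

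The point that needs care — and the main obstacle — is that a boundary face of $C$ may correspond to \emph{gaining} a new region (when $s$ first reaches the near endpoint of a region it did not previously cross) rather than merely touching an endpoint of an already-visited region, and a gain changes the visited set. To rule this out I would exhibit a region-preserving (``good'') face and aim at a point in its relative interior. The inequalities ``$p_0\le v_0$'' and ``$v_0\le q_0$'' coming from $R_0$ are among the defining constraints of $C$, and the box constraints ``$\ell,r$ stay in their regions'' are good whenever binding. A single gain constraint cannot make ``$p_0\le v_0$'' redundant, since that would force one halfplane to contain another with a non-parallel gradient, which the general-position assumption (no three endpoints aligned) forbids; and if ``$p_0\le v_0$'' is made redundant instead by the box, then a box face binds, which is itself good. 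The delicate sub-case is when several not-yet-visited regions hug $s$ from below and above simultaneously, so that $\partial C$ is carved by many gain constraints; the crux of the argument is to show that these can never completely shield $C$, so that at least one good face — a bottom endpoint of an already-crossed region, or an endpoint of the region of $\ell$ or $r$ — remains attainable. Once such a face is identified, steering toward it (lowering $s$ toward $p_0$, or symmetrically raising it toward $q_0$, possibly with a small rotation using both parameters of $C$) and stopping at the first contact yields a convex transversal visiting an endpoint while crossing exactly the same regions as $Q$.
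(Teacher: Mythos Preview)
Your approach has a genuine gap, and the paper's argument is both simpler and avoids the difficulty you identify as ``the main obstacle''.

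\textbf{The upper hull is not unchanged.} You parametrise by the heights of $\ell$ and $r$ and then assert that moving inside $C$ ``keeps the upper hull unchanged''. But $\ell$ and $r$ are shared by the upper and lower hulls; sliding either one changes the first or last edge of the upper hull. You would have to argue separately that convexity of the upper hull is preserved (sliding $\ell$ or $r$ upward may destroy it) and that no region visited by the upper hull is lost (sliding $\ell$ or $r$ downward rotates the incident upper edge and may drop it below the bottom endpoint of a region it previously crossed). None of this is addressed, and it is not automatic.

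\textbf{The gain-constraint argument is left unfinished.} You correctly recognise that the faces of $C$ coming from not-yet-visited regions are problematic, but your resolution (``the crux of the argument is to show that these can never completely shield $C$'') is asserted, not proved. The one-line redundancy argument via general position does not exclude the multi-constraint case you yourself flag as delicate.

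\textbf{What the paper does instead.} The paper does \emph{not} slide $\ell$ and $r$ at all; it keeps them fixed and instead \emph{bends} the lower hull. Among the middle regions $R$ crossed by $\overline{\ell r}$, pick the one whose bottom endpoint $p$ is closest to its crossing point $w=R\cap\overline{\ell r}$. Now simply push $w$ straight down to $p$, replacing $\overline{\ell r}$ by the two edges $\overline{\ell p}\cup\overline{p r}$. Because $p$ was the \emph{nearest} bottom endpoint, every other crossed region still reaches the new (lower) hull, so nothing is lost; because $\ell$ and $r$ are untouched, the upper hull is literally unchanged; and the new transversal has a strictly convex vertex at the endpoint $p$. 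This single greedy choice eliminates the entire two-parameter polytope analysis and the gain-constraint case distinction.

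The conceptual difference is that you restrict the lower hull to remain a straight segment and vary its endpoints, whereas the paper fixes the endpoints and allows one bend. The latter decouples the lower hull from the upper hull and gives a one-line proof.
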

\begin{proof}
  Consider the region $R$ intersected by $\overline{\ell{}r}$ whose bottom
  endpoint $p$ minimizes the distance between $p$ and
  $w=R\cap\overline{\ell{}r}$, and observe that $w$ is a vertex of $Q$. Shift
  down $w$ (and the vertices on $\overline{\ell{}w}$ and $\overline{wr}$) until
  $w$ lies on $p$.
\end{proof}

Let $Q=\ell{}urv$ be a quadrilateral whose leftmost vertex is $\ell$, whose
\emph{top} vertex is $u$, whose rightmost vertex is $r$, and whose
\emph{bottom} vertex is $v$. The quadrilateral $Q$ is a \emph{lower canonical}
quadrilateral if and only if
\begin{itemize}[nosep]
\item $u$ and $v$ lie on endpoints in $\Both\RR$,
\item $\ell$ lies on a line through $v$ and another endpoint, and
\item $r$ lies on a line through $v$ and another endpoint.
\end{itemize}
We define \emph{upper canonical} quadrilateral analogously (i.e. by requiring
that $\ell$ and $r$ lie on lines through $u$ rather than $v$).

\begin{lemma}
  \label{lem:quadrilateral}
  Let $Q=\ell{}urv$ be a convex quadrilateral with $\ell$ as leftmost vertex,
  $r$ as rightmost vertex, $u$ a bottom endpoint on the upper hull of $Q$, and
  $v$ an endpoint of a region in \RR.
  \begin{itemize}
  \item There exists an upper or lower canonical quadrilateral $Q'$ intersecting
    the same regions as $Q$, or
  \item there exists a convex partial transversal $Q''$ whose upper hull contains
    exactly two strictly convex vertices, both on endpoints, or whose lower
    hull contains exactly two strictly convex vertices, both on endpoints.
  \end{itemize}
\end{lemma}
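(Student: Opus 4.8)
The plan is to reuse the vertical-sliding technique from the proofs of Lemma~\ref{lem:discrete_convex_hull_buildup} and Lemma~\ref{lem:discrete_convex_hull} to pin $\ell$ and $r$ to canonical lines, exploiting that $u$ and $v$ already lie on endpoints, so that only the two extreme vertices must be moved. I would aim first for a \emph{lower} canonical quadrilateral, that is, try to force $\ell$ and $r$ onto lines through the bottom vertex $v$. Concretely, I slide $\ell$ vertically downward, pivoting the lower-left edge $\overline{\ell v}$ about $v$ and the upper-left edge $\overline{\ell u}$ about $u$, keeping $Q$ convex and, exactly as in Lemma~\ref{lem:discrete_convex_hull}, preserving the set of regions met by the lower hull; I then slide $r$ symmetrically. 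The motion is stopped at the first event encountered over all edges.

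If that first event is a \emph{good} one for the left vertex, namely $\ell$ reaching the bottom endpoint of its own region or the edge $\overline{\ell v}$ passing through some endpoint $p$, then $\ell$ comes to lie on a line through $v$ and another endpoint, which is precisely the lower-canonical condition for $\ell$. Since no event on the upper chain preceded it, the upper hull is still the two-edge chain $\ell\,u\,r$ with its single strictly convex vertex $u$ on a bottom endpoint. Treating $r$ in the same way, if both extreme vertices reach good events we obtain a lower canonical quadrilateral visiting the same regions as $Q$, which is the first alternative.

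The escape to the second alternative comes from the upper hull. While $\ell$ descends, the edge $\overline{\ell u}$ drops; if the first event is that this edge (or, for the right vertex, $\overline{u r}$) reaches the bottom endpoint $q$ of a region still met by the upper hull, I keep $q$ as a vertex rather than lose that region and continue the region-preserving descent a little further so that the turn at $q$ becomes strictly convex. The resulting upper hull $\ell\,q\,u\,r$ then has exactly the two strictly convex vertices $q$ and $u$, both on bottom endpoints, so the transversal it defines is a $Q''$ as in the second alternative; the symmetric right-side event is handled identically.

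The delicate point, and the one genuinely new to the quadrilateral case where the lower hull has only the single strictly convex vertex $v$, is the degenerate event in which $\overline{\ell v}$ becomes collinear with $\overline{v r}$, so that $v$ is absorbed and the lower hull flattens to the chord $\overline{\ell r}$ through $v$. In the case $h \ge 2$ of Lemma~\ref{lem:discrete_convex_hull} another lower vertex survives, but here the quadrilateral collapses to a triangle and matches neither alternative on its face; I expect this to be the main obstacle. I would resolve it by symmetry: when downward sliding flattens the lower hull before any good event, I instead slide $\ell$ and $r$ \emph{upward} toward lines through the top endpoint $u$, aiming for an \emph{upper} canonical quadrilateral, with the analogous forced-vertex event now producing a $Q''$ whose \emph{lower} hull carries two endpoint vertices. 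The only remaining configuration is the ``empty wedge'' case in which, in both directions, the corresponding hull flattens before any endpoint is met; there the extreme vertex meets no region other than its own, and I would finish with a direct argument that such a vertex can be slid onto an endpoint of its own region without changing the visited regions, again landing in one of the two stated alternatives.
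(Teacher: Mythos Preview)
Your overall plan is in the right spirit, but it diverges from the paper's argument precisely at the point you flag as delicate, and your patch for that point does not close the gap.

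The paper's key device is to move the two extreme vertices \emph{simultaneously and in opposite vertical directions}: $\ell$ slides down while $r$ slides up at the same speed, pivoting at both $u$ and $v$. Under this coupled motion the interior angle at $v$ only sharpens, so the lower hull can never flatten; the very degeneration you worry about is ruled out by construction. The motion stops the first time \emph{any} of the four edges is about to lose a region, i.e.\ contains an endpoint $p$, and the four resulting cases (one per edge) are then handled individually. In the two ``diagonal'' cases ($p\in\overline{\ell u}$ or $p\in\overline{vr}$) one simply continues the motion, now bending additionally at $p$, and obtains a transversal with two strictly convex endpoint-vertices on the upper, respectively lower, hull---your second alternative. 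In the other two cases one extreme vertex is already pinned on a line through $u$ (resp.\ $v$) and an endpoint, and only then is the remaining extreme vertex slid separately; the collinearity subcase that can arise there is harmless because the first pinned endpoint supplies the ``other endpoint'' on the line.

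Your sequential scheme (slide $\ell$ down alone, then $r$) does run into the flattening of $\angle\ell v r$, and your proposed fallback---switch to upward sliding toward an upper canonical form, and in the double-degenerate ``empty wedge'' case argue that $\ell$ can be pushed to an endpoint of its own segment---does not go through as stated. Double degeneration means both endpoints of $\ell$'s segment lie beyond the two collinearity positions, so sliding $\ell$ to either endpoint destroys the quadrilateral before you reach it; and the absence of events in the swept wedges does not imply the incident edges meet no other regions (a long vertical segment between $\ell$ and $v$ can be intersected throughout without any endpoint ever lying in the swept region). The coupled opposite motion is exactly the missing idea that lets the paper sidestep this case entirely.
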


\begin{proof}
  \begin{figure}[tb]
    \centering
    \includegraphics[width=\textwidth]{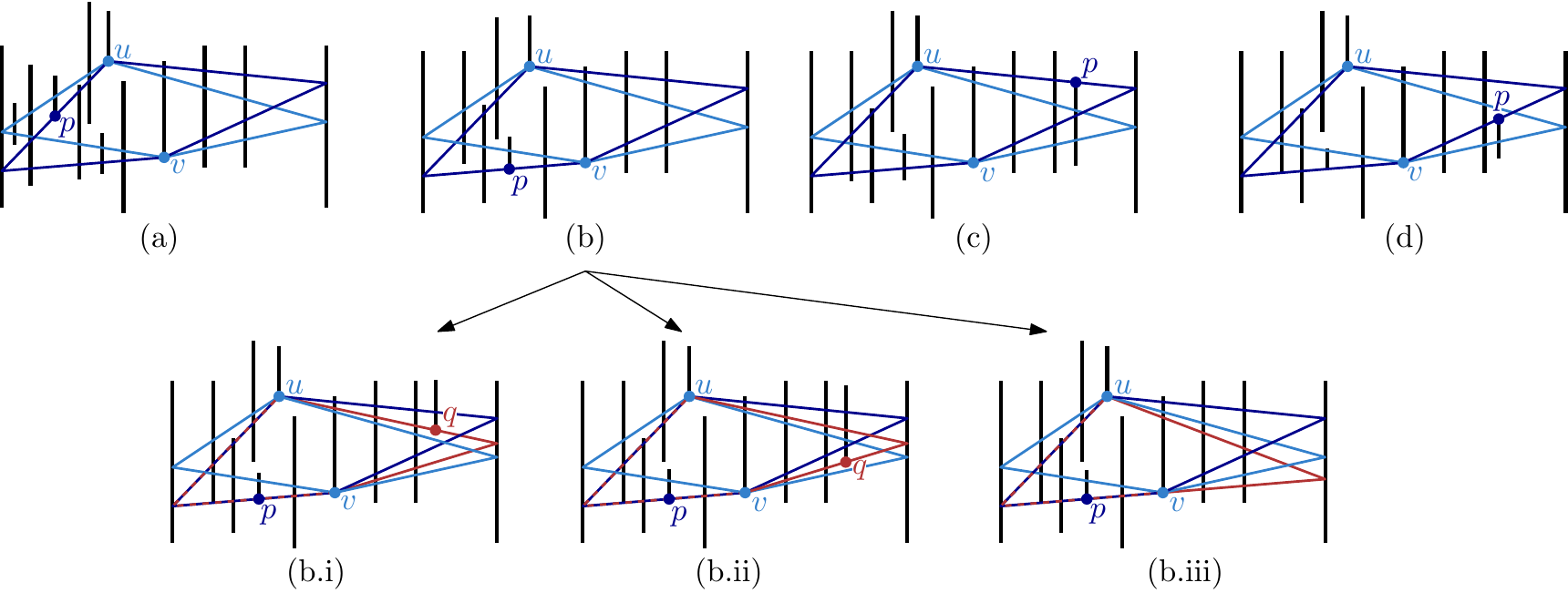}
    \caption{We can transform a quadrilateral transversal $Q$ into a transversal
      with two strictly convex vertices on the lower hull or on the upper hull
      (cases (a) and (d)), or into a quadrilateral in which the leftmost vertex
      and rightmost vertex lie on a line through an endpoint and either $u$ or
      $v$ (cases (b) and (c)). The bottom row shows how we can
      shift $r$ to lie on a line through $v$ and another endpoint $q$ when we
      already have that $\ell$ lies on a line through $v$ and an endpoint $p$.}
    \label{fig:quadrilateral_cases}
  \end{figure}

  Shift $\ell$ downward and $r$ upward with the same speed, bending at $u$ and
  $v$ until one of the edges of the quadrilateral will stop to intersect the
  regions only intersected by that edge. It follows that at such a time this
  edge contains an endpoint $p$ of a region in \RR. We now distinguish between
  four cases, depending on the edge containing $p$. See
  Figure~\ref{fig:quadrilateral_cases}.

  \begin{enumerate}[label=(\alph*)]
  \item Case $p \in \overline{\ell{}u}$. It follows $p$ is a bottom
    endpoint. We continue shifting $\ell$ and $r$, now bending in $p$, $u$, and
    $v$. We now have a convex partial transversal $Q''$ that visits the same
    regions as $Q$ and whose upper hull contains at least two strictly convex
    vertices, both bottom endpoints of regions.
  \item Case $p \in \overline{\ell{}v}$. It follows that $p$ is a bottom
    endpoint. We now shift $r$ downwards until: (b.i) $\overline{ur}$ contains a
    bottom endpoint $q$, (b.ii) $\overline{v{}r}$ contains a bottom endpoint $q$
    or (b.iii) $\overline{vr}$ is collinear with $\overline{\ell{}v}$. In the case
    (b.i) we get an upper hull with two strictly convex vertices, both bottom
    endpoints of regions. In cases (b.ii) and (b.iii) we now have that both $\ell$
    and $r$ lie on lines through $v$ and another endpoint.
  \item Case $p \in \overline{ur}$. Similar to the case
    $p \in \overline{\ell{}v}$ we now shift $\ell$ back upwards until the lower
    hull contains at least two strictly convex endpoints, or until $\ell$ lies
    on a line through $u$ and some other endpoint $q$.
  \item Case $p \in \overline{vr}$. Similar to the case
    $p \in \overline{\ell{}u}$. We continue shifting $\ell$ downward and $r$
    upward, bending in $p$, thus giving two strictly convex vertices in the
    lower hull, both on endpoints.
  \end{enumerate}
\end{proof}

Let $k^u_2$ be the maximal number of regions of \RR visitable by an upper
convex transversal, let $k^u_4$ be the maximal number of regions of \RR visitable
by a canonical upper quadrilateral, and let $k^u$ denote the maximal number of
regions of \RR visitable by a canonical upper transversal. We define $k^b_2$, $k^b_4$, and
$k^b$, for the maximal number of regions of \RR, visitable by a lower convex
transversal, canonical lower quadrilateral, and canonical lower transversal,
respectively.

\begin{lemma}
  \label{lem:optimal_convex_transversal}
  Let $k^*$ be the maximal number of regions in \RR visitable by a convex partial
  transversal of \RR. We have that $k^*=\max\{k^u_2,k^u_4,k^u,k^b_2,k^b_4,k^b\}$.
\end{lemma}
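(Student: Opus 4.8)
The inequality $k^* \geq \max\{k^u_2,k^u_4,k^u,k^b_2,k^b_4,k^b\}$ is immediate, since each of the six quantities counts the regions visited by some restricted class of convex partial transversals, and every such transversal is in particular a convex partial transversal of \RR. The plan is therefore to establish the reverse inequality: starting from an optimal convex partial transversal $Q$ visiting $k^*$ regions, I would produce one of the six canonical shapes that visits at least as many regions. The argument is a case analysis on the combinatorial structure of the two hulls of $Q$.

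First I would normalize $Q$ via Lemma~\ref{lem:discrete_convex_hull_buildup}, so that every strictly convex vertex of the upper hull lies on a bottom endpoint and every strictly convex vertex of the lower hull lies on a top endpoint, without changing the leftmost vertex $\ell$, the rightmost vertex $r$, or the regions visited by either hull. Let $h_u$ and $h_b$ denote the numbers of strictly convex vertices on the upper and lower hull, respectively, and split on these two counts. The case $h_b \geq 2$ is handled directly by Lemma~\ref{lem:discrete_convex_hull}, which converts $Q$ into a lower canonical transversal visiting the same regions, giving $k^* \leq k^b$; the up--down reflection of the same argument yields $k^* \leq k^u$ when $h_u \geq 2$.

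When $h_b = h_u = 1$, the transversal is a convex quadrilateral $\ell{}urv$ in which, by the normalization, $u$ is a bottom endpoint on the upper hull and $v$ is a top endpoint on the lower hull; this is exactly the configuration required by Lemma~\ref{lem:quadrilateral}. That lemma either yields an upper or lower canonical quadrilateral (so $k^* \leq k^u_4$ or $k^* \leq k^b_4$) or produces a transversal with two strictly convex vertices on a single hull, which reduces to the $h_u \geq 2$ or $h_b \geq 2$ case already treated. Finally, when one of the two hulls has no strictly convex vertex, that hull is the straight segment $\overline{\ell{}r}$, and I would argue that $Q$ is essentially an upper or lower convex transversal: if the straight hull intersects no region other than those containing $\ell$ and $r$, then $Q$ already is such a transversal and is counted by $k^u_2$ or $k^b_2$; otherwise Lemma~\ref{lem:discrete_straight} bends the straight hull through an endpoint, creating a strictly convex vertex there and dropping us into one of the previous cases, all while preserving the set of visited regions.

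The step I expect to require the most care is the bookkeeping in these degenerate cases, namely verifying that the reductions terminate in one of the six forms rather than cycling. In particular I would check that applying Lemma~\ref{lem:discrete_straight} to a straight hull strictly increases the number of strictly convex vertices (so the process cannot loop), and that the transformations invoked in Lemma~\ref{lem:quadrilateral} never decrease the count of visited regions. The reflection symmetry between ``upper'' and ``lower'' should let me present the upper and lower branches by a single argument, but I would be careful that the asymmetric role of bottom versus top endpoints in the canonical definitions is respected under this reflection.
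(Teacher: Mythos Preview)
Your proposal is correct and follows essentially the same approach as the paper's proof: both normalize via Lemma~\ref{lem:discrete_convex_hull_buildup}, dispatch the $h\geq 2$ cases with Lemma~\ref{lem:discrete_convex_hull} (and its reflection), treat the genuine quadrilateral case with Lemma~\ref{lem:quadrilateral}, and use Lemma~\ref{lem:discrete_straight} together with the $k^u_2/k^b_2$ base cases to eliminate the degenerate situations where a hull is a bare segment. The only difference is the order in which the cases are visited; your explicit remark about termination when iterating Lemma~\ref{lem:discrete_straight} is a point the paper leaves implicit.
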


\begin{proof}
  Clearly $k^* \geq \max\{k^u_2,k^u_4,k^u,k^b_2,k^b_4,k^b\}$. We now argue that
  we can transform an optimal convex partial transversal $Q^*$ of \RR visiting
  $k^*$ regions into a canonical transversal. The lemma then follows.

  By Lemma~\ref{lem:discrete_convex_hull_buildup} there is an optimal convex
  partial transversal $Q$ of \RR, visiting $k^*$ regions, whose strictly convex
  vertices lie on endpoints of the regions in \RR.

  If either the lower or upper hull of $Q$ does not intersect any regions, we
  use Lemma~\ref{lem:discrete_upper_hull} (or its analog for the bottom hull)
  to get $k^*=\max\{k^u_2,k^b_2\}$. Otherwise, if the lower hull of $Q$ contains
  at least two strictly convex vertices, we apply
  Lemma~\ref{lem:discrete_convex_hull}, and obtain that there is a convex
  lower transversal. Similarly, if the upper hull contains at least two strictly
  convex vertices we apply a lemma analogous to
  Lemma~\ref{lem:discrete_convex_hull}, and obtain that there is a canonical
  upper transversal visiting the same $k^*$ regions as $Q$.

  If $Q$ has at most one strictly convex vertex on both the lower hull and
  upper hull we use Lemma~\ref{lem:discrete_straight} and get that there exists
  a convex quadrilateral $Q'$ that visits the same regions as $Q$. We now apply
  Lemma~\ref{lem:quadrilateral} to get that there either is an optimal
  transversal that contains two strictly convex vertices on its upper or lower
  hull, or there is a canonical quadrilateral $Q'$ that intersects the same
  regions as $Q$. In the former case we can again apply
  Lemma~\ref{lem:discrete_convex_hull} to get a canonical convex partial
  transversal that visits $k^*$ regions. In the latter case we have
  $k^*=\max\{k^u_4,k^b_4\}$.
\end{proof}

By Lemma~\ref{lem:optimal_convex_transversal} we can restrict our attention to
upper and lower convex transversals, canonical quadrilaterals, and canonical
transversals. We can compute an optimal upper (lower) convex transversal in
$O(n^2)$ time using the algorithm from the previous section. Next, we argue that
we can compute an optimal canonical quadrilateral in $O(n^5)$ time, and an
optimal canonical transversal in $O(n^6)$ time. Arkin \etal~\cite{ARKIN2014224}
describe an algorithm that given a discrete set of vertex locations can find a
convex polygon (on these locations) that maximizes the number of regions
stabbed. Note, however, that since a region contains multiple vertex locations
---and we may use only one of them--- we cannot directly apply their algorithm.

Observe that if we fix the leftmost strictly convex vertex $w$ in the lower
hull, there are only $O(n^2)$ candidate points for the leftmost vertex $\ell$
in the transversal. Namely, the intersection points of the (linearly many)
segments with the linearly many lines through $w$ and an endpoint. Let $L(w)$
be this set of candidate points. Analogously, let $R(z)$ be the candidate
points for the rightmost vertex $r$ defined by the rightmost strictly convex
vertex $z$ in the lower hull.

\subsection{Computing the maximal number of regions intersected by a
	canonical quadrilateral}

Let $Q=\ell{}urw$ be a canonical lower quadrilateral with $u_x < w_x$, let
$\RR''$ be the regions intersected by $\overline{u\ell}\cup\overline{\ell{}w}$,
and let $T[w,u,\ell] = |\RR''|$ be the number of such regions. We then define
$\RR_{uw\ell}=\RR \setminus \RR''$ to be the remaining regions, and
$I(\RR_{uw\ell},u,w,r)$ as the number of regions from $\RR$ intersected by
$\overline{ur}\cup\overline{wr}$. Observe that those are the regions from $\RR$
that are \emph{not} intersected by $\overline{u\ell}\cup\overline{\ell{}w}$.
Note that we exclude the two regions that have $u$ or $w$ as its endpoint from
$\RR_{uw\ell}$.  Hence, the number of regions intersected by $Q$ is
$T[w,u,\ell] + I(\RR_{uw\ell},u,w,r)$, See
Figure~\ref{fig:1-oriented-canonical-quadrilateral}, and the maximum number of
regions all canonical lower quadrilaterals with $u_x \leq w_x$ is
\[ \max_{u,w,\ell} (T[u,w,\ell] + \max_{r \in R(w)} I(\RR_{uw\ell},u,w,r)).
\]
We show that we can compute this in $O(n^5)$ time. If $u_x > w_x$, we use a
symmetric procedure in which we count all regions intersected by
$\overline{ur}\cup\overline{rw}$ first, and then the remaining regions
intersected by $\overline{u\ell}\cup\overline{\ell{}w}$. Since $k^b_4$ is the
maximum of these two results, computing $k^b_4$ takes $O(n^5)$ time as well.

Since there are $O(n)$ choices for $w$ and $u$, and $O(n^2)$ for $\ell$, we can
naively compute  $T[w,u,\ell]$ and $\RR_{uw\ell}$ in $O(n^5)$ time. For each
$w$, we then radially sort $R(w)$ around $w$. Next, we describe how we can then
compute all $I(\RR',u,w,r)$, with $r \in R(w)$, for a given set
$\RR'=\RR_{uw\ell}$, and how to compute $\max_r I(\RR',u,w,r)$.

\begin{figure}
	\centering
	\includegraphics{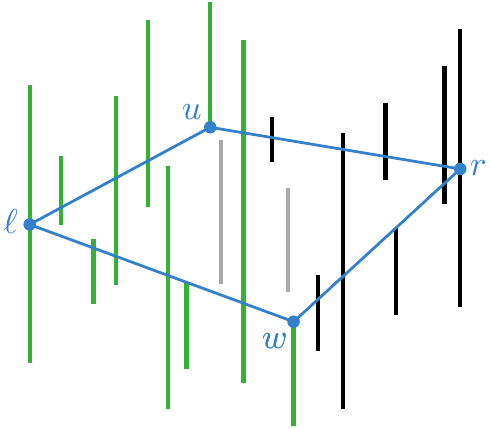}
	\caption{
		The green vertical segments are $\RR''$, while the gray and black ones are $\RR_{uw\ell}$.
			The number of black vertical segments is $I(\RR_{uw\ell},u,w,r)$.
    }
	\label{fig:1-oriented-canonical-quadrilateral}
\end{figure}


\subparagraph{Computing the number of Intersections} Given a set of regions
$\RR'$, the points $u$, $w$, and the candidate endpoints $R(w)$, sorted
radially around $w$, we now show how to compute $I(\RR',u,w,r)$ for all
$r \in R(w)$ in $O(n^2)$ time.

\begin{figure}[tb]
	\centering
	\includegraphics{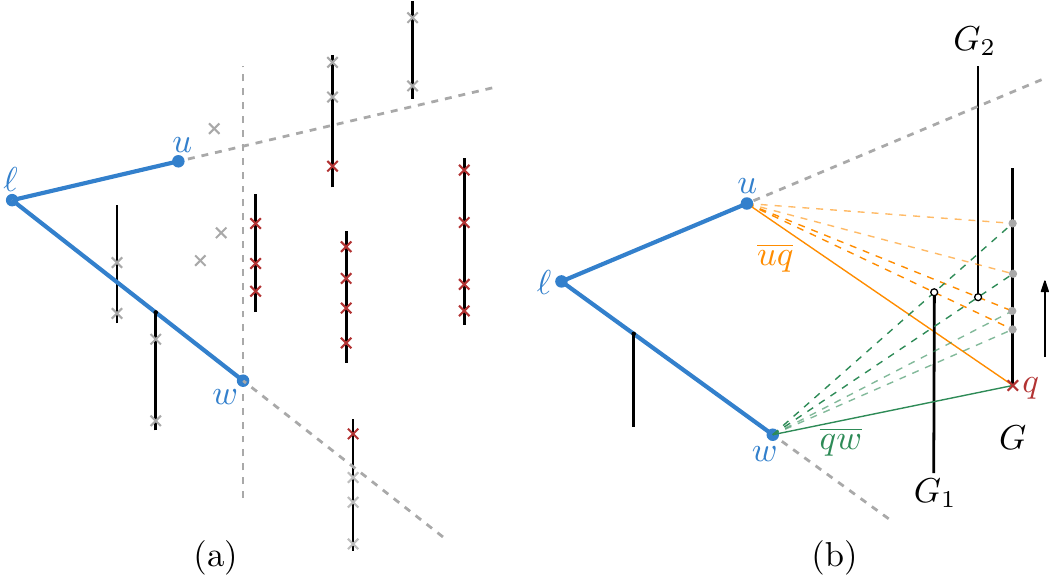}
	\caption{(a) The red crosses are possible candidate points in $R(w)$.
	Gray crosses above $\overline{u\ell}$ or below $\overline{w\ell}$ or left to the vertical line through $w$ is not part of $R(w)$.
	(b) The orange line segment $\overline{uq}$ and green line segment $\overline{qw}$ move upwards along $G$.
	An events occurs when either $\overline{uq}$ or $\overline{qw}$ sweeps through the endpoint of a vertical segment in $\R''$.
	When $\overline{uq}$ sweeps through the top endpoint of $G_1$, we check whether $G_1$ intersects with $\overline{qw}$ or not.
	We increase the number of intersected segment when $\overline{uq}$ sweeps through the bottom endpoint of $G_2$ and decrease
	it when $\overline{qw}$ sweeps through the top endpoint of $G_1$.}
	\label{fig:1-oriented-quadrilateral_closing}
\end{figure}

We sort the endpoints of $\RR'$ radially around $w$ and around $u$, and
partition the points in $R(w)$ based on which segment (region) they lie. Let
$G$ be a region that has $q$ as its bottom endpoint, and let
$R(w,G) = R(w)\cap G$. We explicitly compute the number of regions $m$
intersected by $\overline{uq}\cup\overline{qw}$ in linear time, and set
$I(\RR',u,w,q)$ to $m$. We shift up $q$, while maintaining the number of
regions intersected by $\overline{uq}\cup\overline{qw}$. See
Figure~\ref{fig:1-oriented-quadrilateral_closing}(b)
As $q$ moves
upwards, the segments $\overline{uq}$ and $\overline{qw}$, sweep through
endpoints of the regions in $\RR'$. We can decide which event should be processed
first from the two ordered lists of endpoints in constant time.

If an event occurs, the value of $m$ changes depending on the type of endpoint
and which line segment ($\overline{uq}$ or $\overline{qw}$) responsible for
that particular event.  If $\overline{uq}$ sweeps through a point in
$\Lower \RR$, we increase the value of $m$ by one.  Conversely, we decrease the
value of $m$ by one if $\overline{uq}$ sweeps a top endpoint of some region
$G'$ if $G'$ does not intersect with $\overline{qw}$.  Otherwise, we do
nothing.  We treat events caused by $\overline{qw}$ in a similar way. See
Figure~\ref{fig:1-oriented-quadrilateral_closing}(b) for an illustration. When $q$ passes
through a candidate point $r \in R(w,G)$ we set $I(\RR',u,w,r)$ to
$m$. Computing all values $I(\RR,u,w,r)$ for all $r \in R(w,G)$ then takes
$O(n)$ time, and thus $O(n^2)$ time over all regions $G$.

\subparagraph{Maximizing over $r$.} To find the point $r \in R(w)$ that
maximizes $R'[u,w,\ell,r]$ we now just filter $R(w)$ to exclude all points
above the line through $u$ and $\ell$ and below the line through $w$ and
$\ell$, and report the maximum $I$ value among the remaining points.


\subparagraph{Improving the running time to $O(n^5)$.} Directly applying the
above approach yields an $O(n^6)$ time algorithm, as there are a total of
$O(n^4)$ triples $(u,w,\ell)$ to consider. We now improve this to $O(n^5)$ time
as follows.

First observe that since $u_x < w_x$, $\overline{ur}\cup\overline{rw}$ cannot
intersect any regions of the regions left of $u$. Let $\RR_u$ be this set of
regions. Since $\ell$ lies left of $u$ (by definition) we thus have that
$I(\RR_{uw\ell},u,w,r) = I(\RR_{uw\ell}\setminus \RR_u,u,w,r)$.

Second, consider all points $\ell \in L(w)$ that lie on the line through $w$
and some endpoint $p$. Observe that they all have the same set
$\RR_{uw\ell}\setminus \RR_u$. This means that there are only $O(n^3)$
different sets $\RR_{uwp}=\RR_{uw\ell}\setminus \RR_u$ for which we have to
compute $I(\RR_{uwp},u,w,r)$ values.

Finally, again consider all points $\ell \in L(w)$ that lie on the line $\mu$
through $w$ and some point $p$. For all these points we discard the same points
from $R(w)$ because they are below $\mu$. We can then compute
$T[u,w,\ell] + \max_r I(\RR_{uwp},u,w,r)$ for all these points $\ell$ in
$O(n^2)$ time in total, by rotating the line $\nu$ through $u$ and $\ell$
around $u$ in counter clockwise order, while maintaining the valid candidate
points in $R(w)$ (i.e. above $\mu$ and below $\nu$), and the maximum
$I(\RR_{uwp},u,w,r)$ value among those points,
see Figure~\ref{fig:1-oriented-improve-quadrilateral_closing}.
Since there are $O(n^3)$
combination of $w,u$, and $p$, then in total we spent $O(n^5)$ time to compute
all values of $T[u,w,\ell] + \max_r I(\RR_{uwp},u,w,r)$, over all $u$, $w$, and
$\ell$, and thus we obtain the following result.

\begin{figure}[tb]
	\centering
	\includegraphics{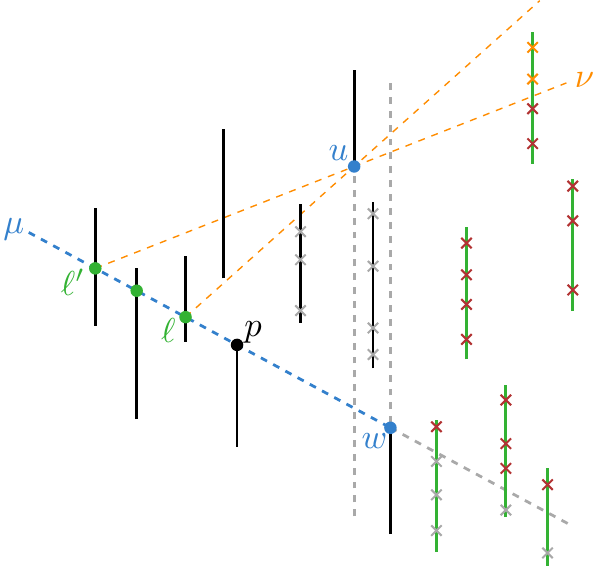}
	\caption{
		$\RR_{uwp}$ is shown in green vertical segments.
		The red crosses below $\nu$ (that has one of its endpoint at $\ell$'),
		above $\mu$ and to the left of vertical line through $w$ are candidates for the rightmost point.
		The orange crosses become candidates for the rightmost point after we rotate $\nu$ in
		counter-clockwise order and consider $\ell$ as the leftmost point.
		}
	\label{fig:1-oriented-improve-quadrilateral_closing}
\end{figure}

\begin{lemma}
  \label{lem:compute_canonical_quadrilateral}
  Given a set of $n$ vertical line segments \RR, we can compute the maximum
  number of regions $k^*$ visitable by a canonical quadrilateral $Q$ in
  $O(n^5)$ time.
\end{lemma}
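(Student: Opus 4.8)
The plan is to decompose the boundary of a canonical lower quadrilateral $Q=\ell u r w$ into two polygonal paths sharing the top vertex $u$ and the bottom vertex $w$: a \emph{left path} $\overline{u\ell}\cup\overline{\ell w}$ through the leftmost vertex, and a \emph{right path} $\overline{ur}\cup\overline{wr}$ through the rightmost vertex. Since a region must be counted once even if both paths stab it, I would count the regions $\RR''$ hit by the left path, set $T[w,u,\ell]=|\RR''|$, and then count only the \emph{additional} regions $\RR_{uw\ell}=\RR\setminus\RR''$ hit by the right path, denoted $I(\RR_{uw\ell},u,w,r)$; the number of regions visited by $Q$ is then $T[w,u,\ell]+I(\RR_{uw\ell},u,w,r)$. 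Enumerating the $O(n)$ choices for $u$, the $O(n)$ for $w$, and the $O(n^2)$ candidate positions in $L(w)$ for $\ell$ gives $O(n^4)$ triples, and the naive computation of $T[w,u,\ell]$ and $\RR_{uw\ell}$ per triple costs $O(n)$ time, for $O(n^5)$ time in total.

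The heart of the argument is, for a fixed pair $(u,w)$ and fixed remaining set $\RR'$, computing $\max_{r\in R(w)} I(\RR',u,w,r)$ efficiently. I would radially sort $R(w)$ and the endpoints of $\RR'$ around both $u$ and $w$, partition $R(w)$ by the region each point lies on, and for each such region $G$ with bottom endpoint $q$ compute the intersection count $m$ for $\overline{uq}\cup\overline{qw}$ once in linear time and then maintain $m$ as $q$ slides upward along $G$. Each time $\overline{uq}$ or $\overline{qw}$ sweeps past an endpoint of $\RR'$ --- an event we can order in constant time from the two sorted lists --- $m$ changes by at most one, with the sign and whether a change occurs determined by the endpoint type and by whether the companion segment already stabs that region (to avoid double counting). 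Passing a candidate $r\in R(w,G)$ records $I(\RR',u,w,r)=m$, so all these values are produced in $O(n^2)$ time; filtering out the $r$ lying above $\overline{u\ell}$ or below $\overline{w\ell}$ then yields the maximum over valid $r$.

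Performing this $O(n^2)$ computation once per triple $(u,w,\ell)$ would cost $O(n^6)$, so the main obstacle is to amortize it down to $O(n^5)$. The key observations are: (i) the right path cannot meet any region left of $u$, so letting $\RR_u$ be those regions, $I(\RR_{uw\ell},u,w,r)=I(\RR_{uw\ell}\setminus\RR_u,u,w,r)$; (ii) every $\ell\in L(w)$ lying on a common line $\mu$ through $w$ and an endpoint $p$ induces the \emph{same} reduced set $\RR_{uwp}=\RR_{uw\ell}\setminus\RR_u$, cutting the number of distinct sets to $O(n^3)$; and (iii) these collinear $\ell$ also discard the same candidates below $\mu$ from $R(w)$. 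For each of the $O(n^3)$ triples $(u,w,p)$ I would therefore compute $\max_r I(\RR_{uwp},u,w,r)$ once, and then sweep the line $\nu$ through $u$ and $\ell$ counter-clockwise around $u$, reinstating candidates of $R(w)$ as they pass below $\nu$ and maintaining the running maximum, so that $T[w,u,\ell]+\max_r I(\RR_{uwp},u,w,r)$ is obtained for all such $\ell$ in $O(n^2)$ time. Summed over the $O(n^3)$ triples this is $O(n^5)$; the symmetric case $u_x>w_x$, where one counts the right path first, costs the same, establishing the claimed bound.
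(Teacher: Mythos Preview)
Your proposal is correct and follows essentially the same approach as the paper: the same left/right-path decomposition with $T[w,u,\ell]$ and $I(\RR_{uw\ell},u,w,r)$, the same $O(n^2)$ sweep along each host segment $G$ to tabulate $I(\RR',u,w,r)$, and the same three observations (regions left of $u$ are irrelevant; collinear $\ell$'s on a line through $w$ and an endpoint $p$ share the reduced set $\RR_{uwp}$; they also discard the same candidates below $\mu$) to amortize the inner maximization from $O(n^6)$ to $O(n^5)$ via a rotational sweep of $\nu$ about $u$. The symmetric treatment of $u_x>w_x$ is also the same.
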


\subsubsection{Computing the maximal number of regions intersected by a
  canonical transversal}

Next, we describe an algorithm to compute the maximal number of regions
visitable by a lower canonical convex transversal. Our algorithm consists of
three dynamic programming phases, in which we consider (partial) convex hulls
of a particular ``shape''.

In the first phase we compute (and memorize) $B[w,u,v,\ell]$: the maximal number of regions
visitable by a partial transversal that has $\overline{w\ell{}}$ as a segment
in the lower hull, and a
convex chain $\ell,\dots,u,v$ as upper hull. See Figure~\ref{fig:convexhull_dp}(a).

In the second phase we compute $K[u,v,w,z]$: the maximal number of regions
visitable by the canonical partial convex transversal whose rightmost top edge is
$\overline{uv}$ and whose rightmost bottom edge is $\overline{wz}$.

In the third phase we compute the maximal number of regions visitable when we ``close''
the transversal using the rightmost vertex $r$. To this end, we define
$R'[z,u,v,r]$ as the number of regions visitable by the canonical transversal
whose rightmost upper segment is $\overline{uv}$ and whose rightmost bottom
segment is $\overline{wz}$ and $r$ is defined by the strictly convex vertex $z$.

\subparagraph{Computing $B[w,u,v,\ell]$.}
Given a set of regions $\RR'$ let $U_{\RR'}[\ell,u,v]$ be the maximal number of
regions in $\RR'$ visitable by an upper convex transversal that starts in a fixed
point $\ell$ and ends with the segment $\overline{uv}$.

\begin{lemma}
  \label{lem:compute_U}
  We can compute all values $U_{\RR'}[\ell,u,v]$ in $O(n^2)$ time.
\end{lemma}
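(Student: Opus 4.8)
The plan is to adapt the $O(n^2)$ upper-hull dynamic program of Section~\ref{sub:Computing_an_upper_convex_transversal} almost verbatim, the only differences being that the convex chain is now \emph{anchored} at the fixed starting point $\ell$ and that we index a table entry by its final edge $\overline{uv}$ rather than by a vertex together with an incoming slope. As in Lemma~\ref{lem:discrete_upper_hull}, an optimal upper convex transversal has all its strictly convex vertices on bottom endpoints, so we may restrict $u$ and $v$ to range over $\Lower\RR$, with $\ell$ the single exceptional, arbitrary point. Since the upper hull is $x$-monotone, every vertical segment of $\RR'$ is crossed by the chain at most once; hence if $I'[u,v]$ denotes the number of regions of $\RR'$ met by the edge $\overline{uv}$ under the half-open convention ($u$ open, $v$ closed) used earlier, then summing $I'$ along the edges of a chain counts each visited region exactly once.

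With this in hand I would set up the recurrence
\[
  U_{\RR'}[\ell,u,v] = I'[u,v] + \max_{t}\, U_{\RR'}[\ell,t,u],
\]
where the maximum is over predecessor vertices $t$ for which $\slope(\overline{tu}) \geq \slope(\overline{uv})$, so that the chain stays convex, and the base case is the single edge leaving the anchor, $U_{\RR'}[\ell,\ell,v] = I'[\ell,v]$. This is structurally identical to the recurrence of Lemma~\ref{lem:recurrence_upper}; the sole change is that the free base term $1$, corresponding to a transversal consisting of a single freely chosen point, is replaced by the requirement that the chain originate at $\ell$. This seeds the table only at the entries $U_{\RR'}[\ell,\ell,v]$ and propagates outward, the recurrence being well-founded because admissible predecessors $t$ must themselves carry a valid $\ell$-anchored chain.

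To evaluate the table in $O(n^2)$ time I would reuse the two preprocessing ingredients of Section~\ref{sub:Computing_an_upper_convex_transversal}. First, for every endpoint $u$ I compute the slope order $S_u$ of the other candidate vertices around $u$, in $O(n^2)$ total time. Second, I compute all half-open intersection counts $I'[u,v]$ for $u,v \in \Lower\RR$ in $O(n^2)$ time by the duality and strip-counting argument of Corollary~\ref{cor:segments_intersecting} and Lemma~\ref{lem:intersection_counting}, applied to the strips of $\RR'$ only; the $O(n)$ edges incident to the single extra point $\ell$ are handled separately in $O(n)$ time each. I then fill the table by processing the vertices in increasing $x$-order and, for each $u$, the successors $v$ in the order $S_u$; by the monotonicity of the slope constraint (the analog of Observation~\ref{obs:monotone_slopes}) the running maximum $\max_t U_{\RR'}[\ell,t,u]$ over admissible predecessors can be maintained incrementally, so each of the $O(n^2)$ entries costs amortized $O(1)$, exactly as in Theorem~\ref{thm:parallel_segments_upper_hull}.

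The main obstacle is not the asymptotics, which follow from the earlier section, but the careful bookkeeping around the anchor $\ell$: since $\ell$ is an arbitrary point rather than a region endpoint, it must be inserted into the slope orders and into the intersection machinery as a special case, and the half-open convention for $I'$ must be chosen so that the region clipped at $\ell$, and the region at the hull endpoint $v$, is counted once and only once across the whole chain. Once these conventions are fixed consistently with Lemma~\ref{lem:discrete_upper_hull}, correctness of the recurrence follows from the same downward-shifting exchange argument, and the $O(n^2)$ time bound is immediate.
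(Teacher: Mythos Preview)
Your proposal is correct and is exactly what the paper intends: its own proof consists of the single line ``Analogous to the algorithm in Section~\ref{sub:Computing_an_upper_convex_transversal},'' and you have spelled out precisely that analogy, including the only real wrinkle (handling the anchored non-endpoint~$\ell$ as a special vertex in the slope orders and intersection counts). Your reindexing of the table by the final edge~$\overline{uv}$ rather than by a vertex plus an incoming-slope bound is a cosmetic difference, equivalent to the $K[v,w]$ formulation once one observes that for fixed~$u$ the required $\max_t U_{\RR'}[\ell,t,u]$ over the slope constraint is a prefix maximum in the radial order~$S_u$.
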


\begin{proof}
  Analogous to the algorithm in Section~\ref{sub:Computing_an_upper_convex_transversal}.
\end{proof}

Let $B[w,u,v,\ell]$ be the maximal number of regions visitable by a transversal
that starts at $w$, back to $\ell$, and then an upper hull from $\ell$ ending
with the segment $\overline{uv}$. See Figure~\ref{fig:convexhull_dp}(a).

\begin{figure}[tb]
  \centering
  \includegraphics{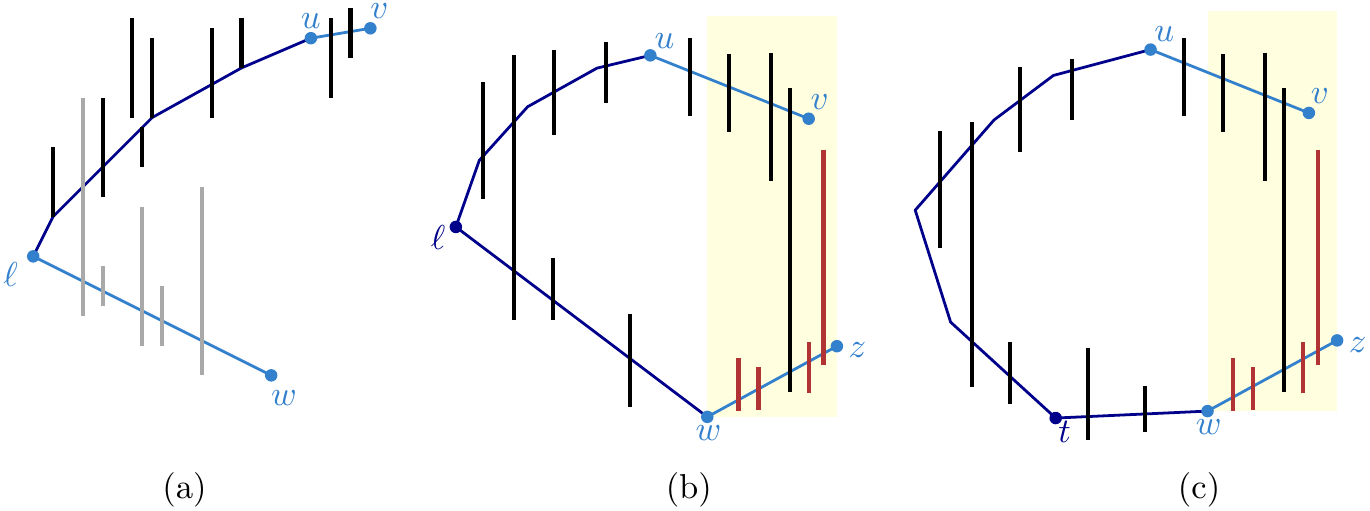}
  \caption{(a) $B[w,u,v,\ell]$ indicates the number of regions visited by a
    partial convex transversal that has $\overline{\ell{}w}$ as bottom hull and
    the upper hull from $\ell$ to $\overline{uv}$. We can compute the
    $B[w,u,v,\ell]$ values for all $u,v$ by explicitly setting aside the
    segments intersected by $\overline{\ell{}w}$ and then using the upper hull
    algorithm. (b) The base case of the recurrence when $u_x < w_x$. The
    regions counted by $I[w,z,u,v]$ are shown in red, whereas the regions
    counted by $B[w,u,v,\ell]$ are shown in black. (c) The inductive step when
    $u_x < w_x$.}
  \label{fig:convexhull_dp}
\end{figure}

For each combination of $w$ and $\ell \in L(w)$, explicitly construct the
regions $\RR^{w,\ell}$ \emph{not} intersected by $\overline{w\ell}$. We then
have that
$B[w,u,v,\ell] = |\RR \setminus \RR^{w,\ell}| +
U_{\RR^{w,\ell}}[\ell,u,v]$. So, we can compute all $B[w,u,v,\ell]$, for
$w \in \Both\RR, \ell \in W(\ell)$, and $u,v \in \Upper\RR$, in $O(n^5)$ time
in total.

\subparagraph{Computing $X[a,u,w,z,\ell]$.} Let $X[a,u,w,z,\ell]$ be the
maximum number of regions \emph{left of $u$} visitable by a partial transversal
that
\begin{itemize}[nosep]
\item has $a$ as its leftmost strictly convex vertex in the bottom hull,
\item has $\ell \in L(a)$ as its leftmost vertex,
\item has $\overline{wz}$ as rightmost edge in the bottom hull, and
\item has $\overline{\ell{}u}$ as its (partial) upper hull.
\end{itemize}

We can compute $X[a,u,w,z,\ell]$ using an approach similar to the one we used to
compute $B[w,u,v,\ell]$: we fix $a, u$, and $\ell$, and explicitly compute the
segments intersected by $\overline{\ell{}u}\cup\overline{\ell{}a}$. We set
those we count them, and set them aside. On the remaining segments $\RR'$ we compute
an optimal bottom hull from $a$ to $z$ whose incoming slope at $a$ is at least
$\slope(\ell,a)$. This takes $O(n^2)$ time, using an analogous approach to that used in
Section~\ref{sub:Computing_an_upper_convex_transversal}. Since we have $O(n^4)$ choices for the triple $a,u,\ell$ this
takes a total of $O(n^6)$ time.



\subparagraph{Computing $K[u,v,w,z]$.}
Let $\overleftarrow{\triangle}(P,u,v,w,z)$ denote the subset of the points $P$ left
of $\{u,v,w,z\}$, below the line through $u$ and $v$, and above the line
through $w$ and $z$.

Let $K[u,v,w,z]$ be a maximal number of regions visitable by the minimum area
canonical partial convex transversal that has $\overline{uv}$ as its rightmost
segment in the upper hull and $\overline{wz}$ as its rightmost segment in the
lower hull.

If $u_x < w_x$ we have that%
\begin{align*}
  K[u,v,w,z] = \max\{&\max_{\ell \in \overleftarrow{\triangle}(L(w),u,v,w,z)}      B[w,u,v,\ell] + I[w,z,u,v], \\
                     &\max_{t \in \overleftarrow{\triangle}(\Both\RR,u,v,w,z)} K[u,v,t,w]    + I[w,z,u,v]\},
\end{align*}
where $I[w,z,u,v]$ is the number of segments intersected by $\overline{wz}$ but
not by $\overline{uv}$. See Figure~\ref{fig:convexhull_dp}(b) and (c) for an
illustration. We rewrite this to
\[ K[u,v,w,z] = I[w,z,u,v] + \max\{\max_{\ell \in \overleftarrow{\triangle}(L(w),u,v,w,z)} B[w,u,v,\ell]
                                  ,\max_{t \in \overleftarrow{\triangle}(\Both\RR,u,v,w,z)} K[u,v,t,w]\}. \]

\begin{figure}[tb]
  \centering
  \includegraphics{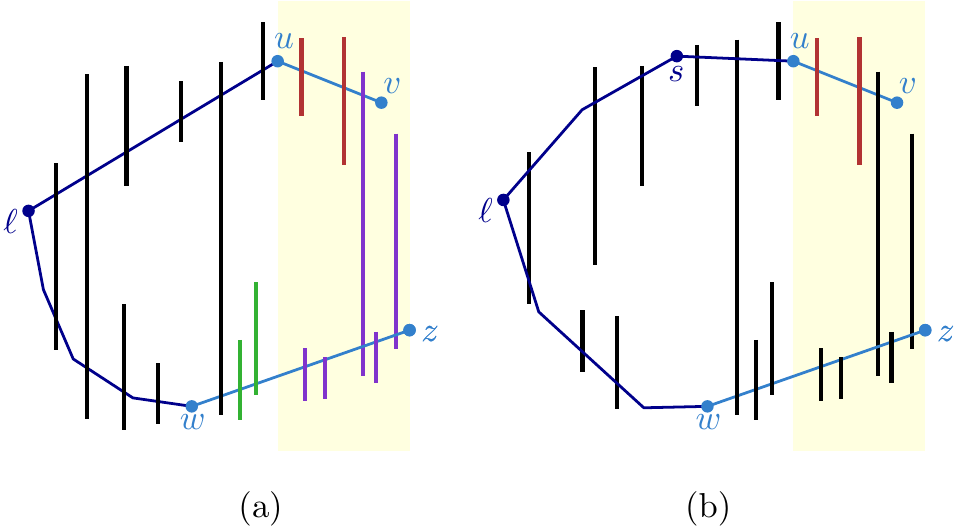}
  \caption{The two cases in the dynamic program when $u_x > w_x$. (a) The
    segments counted by $T[w,u,\ell]$ in black, $J[w,z,\ell,u]$ in green,
    $I[u,v,w,z]$ in red, and $I^*[w,z,u]$ in purple. (b) The case where there
    is another segment in the lower hull as well. The segments counted by
    $I[u,v,w,z]$ are again shown in red, whereas the segments counted by
    $K[s,u,w,z]$ are shown in black.}
  \label{fig:convexhull_dp_wu}
\end{figure}

If $u_x > w_x$ we get a more complicated expression.  Let $J[w,z,\ell,u]$
be the number of regions intersected by $\overline{wz}$ but not by
$\overline{\ell{}u}$, and let $I^*[w,z,u]$ be the number of regions right of
$u$ intersected by $\overline{wz}$. See Figure~\ref{fig:convexhull_dp_wu} for an
illustration. We then have
\begin{align*}
  K[u,v,w,z] = \max&\{\max_{\substack{a \in \overleftarrow{\triangle}(\Lower\RR,u,v,w,z),\\
                     \ell \in L(a)}} X[a,u,w,z,\ell] + I^*[w,z,u] + I[u,v,w,z]\\
                     &, \max_{s \in \overleftarrow{\triangle}(\Lower\RR,u,v,w,z)} K[s,u,w,z]    + I[u,v,w,z]\}
\end{align*}
which we rewrite to
\begin{align*}
  K[u,v,w,z] = I[u,v,w,z] + \max\{&I^*[w,z,u] + \max_{\substack{a \in \overleftarrow{\triangle}(\Lower\RR,u,v,w,z),\\
                     \ell \in L(a)}} X[a,u,w,z,\ell]\\
                     &, \max_{s \in \overleftarrow{\triangle}(\Lower\RR,u,v,w,z)} K[s,u,w,z]\}.
\end{align*}

We can naively compute all $I[w,z,u,v]$ and all $I^*[w,z,u]$ values in $O(n^5)$
time. Since there are only $O(n)$ choices for $t$ computing all values
$\max_t K[u,v,t,w]$ for all cells takes only $O(n^5)$ time in total. The same
holds for computing all $\max_s K[s,u,w,z]$.

As we describe next, we can compute the $\max_\ell B[w,u,v,\ell]$ values in
$O(n^5)$ time as well. Fix $w$, compute all $O(n^2)$ candidate points in $L(w)$
and sort them radially around $w$. For each $\overline{uv}$: remove the
candidate points above the line through $u$ and $v$. This takes $O(n^5)$ time
in total.

For each maximal subset $S \subseteq L(w)$ that lies below the line through $u$ and
$v$, we now do the following. We rotate a line around $w$ in counterclockwise
order, starting with the vertical line. We maintain the subset of $S$ that lies
above this line, and $\max_{\ell \in S} B[w,u,v,\ell]$. Thus, when this line
sweeps through a point $z$, we know the value
$\max_{\ell \in \overleftarrow{\triangle}(L(w),u,v,w,z)} B[w,u,v,\ell]$. There
are $O(n^3)$ sets $S$, each of size $O(n^2)$. So, while rotating the line we
process $O(n^2+n)=O(n^2)$ events. Since we can look up the value
$B[w,u,v,\ell]$ values in constant time, processing each event takes only
$O(1)$ time. It follows that we spend $O(n^5)$ time in total.

Similarly, computing all $\max_{a,\ell} X[a,u,w,z,\ell]$ values requires
$O(n^6)$ time: we fix $a,u,w,z$, and $\ell$, and compute all $O(n^6)$ values
$X[a,u,w,z,\ell]$ in $O(n^6)$ time. We group these values based on the slope of
$\overline{\ell{}a}$, and sort the groups on this slope. This again takes
$O(n^6)$ in total. Similarly, we sort the vertices $v$ around $u$, and
simultaneously scan the list of $X$ values and the vertices $v$, while
maintaining the maximum $X[a,u,w,z,\ell]$ value that has slope at least
$\slope(u,v)$. This takes $O(n^6)$ time.

It follows that we can compute all $K[u,v,w,z]$ values in $O(n^6)$ time in
total.

\subparagraph{Closing the hull.}  We now consider adding the rightmost point to
finish the convex transversal. Given vertices $u, v,w$, and $z$, let $\RR'$ be
the segments intersected by the partial transversal corresponding to
$K[u,v,w,z]$. We explicitly compute $\RR_{uvwz}=\RR \setminus \RR'$, and then compute the
maximum number of regions from this set intersected by
$\overline{vr}\cup\overline{rz}$, over all choices of rightmost vertex
$r$. This takes $O(n^2)$ time using the exact same approach we used in the
canonical quadrilateral section. It follows that we can compute the maximum
number of regions visitable by a canonical bottom convex transversal in
$O(n^6)$ time. Therefore we conclude:


\begin{theorem}
  \label{thm:parallel_segments_convex_hull}
  Given a set of $n$ vertical line segments \RR, we can compute the maximum
  number of regions $k^*$ visitable by a convex partial transversal $Q$ in
  $O(n^6)$ time.
\end{theorem}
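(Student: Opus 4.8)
The plan is to read off the theorem directly from Lemma~\ref{lem:optimal_convex_transversal}, which already reduces $k^*$ to the maximum of the six quantities $k^u_2,k^u_4,k^u,k^b_2,k^b_4,k^b$. Two of these, $k^u_2$ and $k^b_2$, are the optima over upper and lower convex transversals, and by Theorem~\ref{thm:parallel_segments_upper_hull} each costs $O(n^2)$ (computing $k^b_2$ by applying the same procedure to the instance reflected across a horizontal line, which turns upper hulls into lower hulls). Two more, $k^u_4$ and $k^b_4$, are the canonical-quadrilateral optima and cost $O(n^5)$ by Lemma~\ref{lem:compute_canonical_quadrilateral}. Hence the whole running time hinges on $k^u$ and $k^b$, and by the same horizontal-reflection symmetry it suffices to describe the computation of $k^b$ and mirror it for $k^u$.

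First I would run the three phases developed above in sequence and check that each respects its claimed bound. Phase one fills $B[w,u,v,\ell]$: fixing $w$ and $\ell\in L(w)$, set aside the $O(n)$ regions met by $\overline{w\ell}$ and invoke Lemma~\ref{lem:compute_U} on the remainder to obtain all $U_{\RR^{w,\ell}}[\ell,u,v]$ in $O(n^2)$ time, giving $O(n^5)$ over the $O(n^3)$ choices of $(w,\ell)$. Phase two fills $K[u,v,w,z]$ over its $O(n^4)$ cells via the two recurrences, split on $u_x<w_x$ versus $u_x>w_x$. The first case feeds back $B$-values and earlier $K$-values; the second, harder case additionally consumes the auxiliary table $X[a,u,w,z,\ell]$, whose $O(n^6)$ entries I would precompute in $O(n^6)$ total just as for $B$ but with a partial bottom hull from $a$ to $z$ replacing the upper hull.

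The crux of the analysis is keeping phase two at $O(n^6)$ rather than the naive $O(n^7)$ that results from iterating the $\max_{a,\ell}X[\cdot]$ term separately for each cell. The key point is that all maximizations range over the monotonically shrinking sets $\overleftarrow{\triangle}(\cdot,u,v,w,z)$: for the $B$-term I would fix $w$ and $\overline{uv}$, discard the points of $L(w)$ above the line through $u$ and $v$, and rotate a line about $w$ to read off $\max_{\ell\in\overleftarrow{\triangle}(L(w),u,v,w,z)}B[w,u,v,\ell]$ for every $z$ in one $O(n^2)$-event sweep per configuration, i.e. $O(n^5)$ total; for the $X$-term I would group the $X$-values by the slope of $\overline{\ell{}a}$, sort, and scan them against the $v$ sorted radially around $u$ while maintaining the running maximum over slopes at least $\slope(u,v)$, spending $O(n^6)$. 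The $\max_t K[u,v,t,w]$ and $\max_s K[s,u,w,z]$ terms range over only $O(n)$ predecessors and cost $O(n^5)$. I expect the main obstacle to be the bookkeeping of the counts $I[w,z,u,v]$, $I^*[w,z,u]$, and $J[w,z,\ell,u]$: one must verify that as the new bottom edge $\overline{wz}$ is appended, each region is charged exactly once and no region already intersected by a fixed edge is recounted, which is where the correctness of the recurrence really rests.

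Finally, the closing phase adds the rightmost vertex: for each $(u,v,w,z)$ I would extract the as-yet-unmet regions $\RR_{uvwz}$ and, reusing the $O(n^2)$-time rightmost-vertex subroutine from the quadrilateral section, maximize over $r$ the regions newly intersected by $\overline{vr}\cup\overline{rz}$, for a further $O(n^6)$ over all cells. Taking the maximum of the six quantities then yields $k^*$; since every term is dominated by the $O(n^6)$ cost of the canonical-transversal phase, the whole computation runs in $O(n^6)$ time, proving the theorem.
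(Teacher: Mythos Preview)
Your proposal is correct and follows essentially the same approach as the paper: you invoke Lemma~\ref{lem:optimal_convex_transversal} to reduce to the six quantities, handle $k^u_2,k^b_2$ via Theorem~\ref{thm:parallel_segments_upper_hull}, $k^u_4,k^b_4$ via Lemma~\ref{lem:compute_canonical_quadrilateral}, and then recapitulate the paper's three-phase dynamic program (the $B$-, $X$-, and $K$-tables with their rotating-line and slope-grouping sweeps, followed by the closing phase) to bound $k^u,k^b$ at $O(n^6)$. The only remark is that your mention of $J[w,z,\ell,u]$ as a bookkeeping worry is moot, since in the $u_x>w_x$ recurrence the paper absorbs that count into $X$ and uses only $I[u,v,w,z]$ and $I^*[w,z,u]$; otherwise your summary matches the paper's argument.
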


\section{2-oriented disjoint line segments}
\label{sec:2-oriented}

In this section we consider the case when $\RR$ consists of vertical and horizontal disjoint segments. We will show how to apply similar ideas presented in previous sections to compute an optimal convex transversal $Q$ of $\RR$. As in the previous section, we will mostly restrict our search to canonical transversals. However, unlike in the one-oriented case, we will have one special case to consider when an optimal partial convex transversal has bends not necessarily belonging to a discrete set of points.

We call the left-, right-, top- and bottommost vertices $\ell$, $r$, $u$ and $b$ of a convex partial transversal the \emph{extreme} vertices. Consider a convex hull of a partial transversal $Q$, and consider the four convex chains between the extreme vertices. Let us call the chain between vertices $\ell$ and $u$ the \emph{upper-left hull}, and the other chains \emph{upper-right}, \emph{lower-right} and \emph{lower-left}. Similar to Lemma~\ref{lem:discrete_upper_hull} we can show the following:

\begin{lemma}\label{lem:2orient-canonical-1}
    Let $Q$ be a convex partial transversal of \RR with extreme vertices $\ell$, $r$, $u$, and $b$. There exists a convex partial transversal $Q'$ of \RR such that
    \begin{itemize}
        \item the two transversals have the same extreme vertices,
        \item all segments that are intersected by the upper-left, upper-right, lower-right, and lower-left hulls of $Q$ are also intersected by the corresponding hulls of $Q'$,
        \item all strictly convex vertices on the \textbf{upper-left} hull of $Q'$ lie on \textbf{bottom} endpoints of vertical segments or on the \textbf{right} endpoints of horizontal segments of \RR,
        \item the convex vertices on the other hulls of $Q'$ lie on analogous endpoints.
    \end{itemize}
\end{lemma}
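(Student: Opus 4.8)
The plan is to generalize the argument of Lemma~\ref{lem:discrete_upper_hull} from a single upper hull to the four convex chains of a general transversal, treating each chain independently once the extreme vertices are fixed. The key observation is that for the upper-left hull, the relevant ``downward'' freedom of Lemma~\ref{lem:discrete_upper_hull} becomes motion toward the \emph{lower-left}: a strictly convex vertex that is not forced against a segment endpoint and is not aligned with its neighbors can be perturbed in the direction that keeps the chain convex and does not decrease the set of intersected segments. For a vertical segment this perturbation pushes the vertex down until it hits the bottom endpoint; for a horizontal segment it pushes the vertex to the right until it hits the right endpoint. These two endpoint types are exactly the ones named in the statement, and they are precisely the endpoints a convex chain opening toward the lower-left can reach while staying inside the strip/slab of each region it already visits.

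First I would fix the four extreme vertices $\ell, r, u, b$ of $Q$; these are preserved by hypothesis, so each of the four chains is an independent optimization with fixed endpoints. Second, for the upper-left hull I would mimic the potential-function argument of Lemma~\ref{lem:discrete_upper_hull}: among all convex upper-left chains from $\ell$ to $u$ that intersect (at least) the same segments as $Q$ and have the same number of vertices, choose one extremizing an appropriate linear functional --- concretely, the one minimizing the sum over its vertices of a coordinate measuring ``distance toward the upper-left corner'' (for instance $x_i - y_i$, which decreases both when a vertex moves down and when it moves left). An extremal chain cannot have a strictly convex vertex in the interior of its segment, for otherwise we could translate that vertex in the admissible convex-preserving direction and strictly decrease the functional without losing any intersected region, contradicting minimality. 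Hence every strictly convex vertex of the extremal chain sits on a bottom endpoint of a vertical segment or a right endpoint of a horizontal segment. Third, I would apply the three rotationally symmetric versions of this argument to the upper-right, lower-right, and lower-left hulls, each time with the direction of motion and the resulting endpoint types rotated by the corresponding multiple of ninety degrees, yielding the ``analogous endpoints'' clause. Finally I would note that the four chains share only the extreme vertices, which are held fixed, so the four local replacements are mutually compatible and assemble into a single convex transversal $Q'$ with the same extreme vertices and the same intersected segments per chain.

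The main obstacle is verifying that the admissible perturbation direction simultaneously (i) keeps each chain convex, (ii) keeps the vertex inside its own segment, and (iii) does not remove any already-intersected region from that chain. In the purely vertical case of Lemma~\ref{lem:discrete_upper_hull} this is immediate because moving a vertex down only lowers the upper hull and the segments are vertical. Here, because a chain may visit both vertical and horizontal segments, I must check that the single monotone direction (down-and-left for the upper-left hull) respects the geometry of both segment orientations at once --- in particular that sliding a vertex along its horizontal segment to the right, or along its vertical segment downward, never causes the chain to bulge out of convexity against a neighbor on a differently oriented segment, and never drops an intersection of a neighboring edge. I would handle this by arguing locally: the admissible motion is exactly the intersection of the convexity cone at the vertex with the line containing its segment, and the chosen functional is strictly monotone along any such admissible direction, so the extremal configuration is well defined and forces the vertex onto an endpoint of the claimed type. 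The symmetry across the four hulls then lets me state the other three cases by rotation rather than repeating the computation.
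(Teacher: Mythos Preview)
Your approach matches the paper's, which does not give an explicit proof but simply prefaces the lemma with ``Similar to Lemma~\ref{lem:discrete_upper_hull} we can show the following'' --- i.e., exactly the per-chain generalization of the potential-function argument of Lemma~\ref{lem:discrete_upper_hull} that you describe, applied independently to each of the four monotone chains with the extreme vertices held fixed.

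One correction: your stated direction and functional carry a sign error. For the upper-left hull the interior of the convex region lies to the \emph{lower-right}, not the lower-left, so the convexity-preserving slide is \emph{down} on a vertical segment and \emph{right} on a horizontal one (consistent with the bottom/right endpoint types you correctly name, but contradicting your phrases ``lower-left'' and ``down-and-left''). Correspondingly, the linear functional to minimize is $\sum_i (y_i - x_i)$; your proposed $\sum_i (x_i - y_i)$ \emph{increases} when a vertex moves down. With that sign fixed, the extremal argument goes through exactly as you outline, and the rotational symmetry gives the other three chains.
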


One condition for a transversal to be canonical will be that all its strictly convex vertices, except for the extreme ones, satisfy the conditions of Lemma~\ref{lem:2orient-canonical-1}. Another condition will be that its extreme vertices belong to a discrete set of \emph{fixed} points. This set of fixed points will contain all the endpoints of the segments in \RR; we will call these points \emph{$0$th-order fixed points}. Furthermore, the set of fixed points will contain intersections of the segments of \RR of certain lines, that we will describe below, with the segments of \RR.

Now, consider a convex partial transversal $Q$ for which all the strictly convex vertices, except for the four extreme ones, lie on endpoints of segments in \RR. We will describe how to slide the extreme vertices of $Q$ along their respective segments to obtain a canonical transversal. Note, that for simplicity of exposition in the description below we assume that no new intersections of the convex hull of $Q$ with segments of \RR appear. Otherwise, we can restart the process with higher value of $k$. Let $s_{\ell}$, $s_r$, $s_u$, and $s_b$ be the four segments containing the four extreme points, and denote as $\RR_{u\ell}$, $\RR_{ur}$, $\RR_{b\ell}$, and $\RR_{br}$ the four subsets of $\RR$ of segments intersected by the upper-left, upper-right, bottom-left, and bottom-right hulls respectively.

First, consider the case when $u$ (or $b$) lies on a vertical segment. Then it can be safely moved down (or up) until it hits an endpoint of its segment (i.e., a $0$th-order fixed point), or is no longer an extreme vertex. If it is no longer an extreme vertex, we restore the conditions of Lemma~\ref{lem:2orient-canonical-1} and continue with the new topmost (or bottommost) vertex of the transversal. Similarly, in the case when $\ell$ or $r$ lie on horizontal segments, we can slide them until they reach $0$th-order fixed points.

Assume then that $u$ and $b$ lie on horizontal segments, and $\ell$ and $r$ lie on vertical segments.
We further assume that the non-extreme vertices have been moved according to Lemma~\ref{lem:2orient-canonical-1}.
We observe that either (1) there exists a chain of the convex hull of $Q$ containing at least two endpoints of segments, (2) there exists a chain of the convex hull of $Q$ containing no endpoints, or (3) all four convex chains contain at most one endpoint.

In case (1), w.l.o.g., let the upper-left hull contain at least $2$ endpoints. Then we can slide $u$ left along its segment until it reaches its endpoint or an intersection with a line through two endpoints of segment in $\RR_{u\ell}$. Note that sliding $u$ left does not create problems on the top-right hull. We can also slide $\ell$ up along its segment until it reaches its endpoint or an intersection with a line through two endpoints of segments in $\RR_{u\ell}$. Thus, we also need to consider the intersections of segments in \RR with lines through pairs of endpoints; we will call these \emph{$1$st-order fixed points}. Now, vertices $u$ and $\ell$ are fixed, and we will proceed with  sliding $r$ and $b$.

For vertex $r$, we further distinguish two cases: (1.a) the upper-right convex hull contains at least two endpoints, or (1.b) it contains at most one endpoint.

In case (1.a), similarly to the case (1), we slide $r$ up until it reaches an endpoint of $s_r$ or an intersection with a line through two endpoints of segments in $\RR_{ur}$.

In case (1.b), we slide $r$ up, while unbending the strictly convex angle if it exists, until $r$ reaches an endpoint of $s_r$, or until the upper-right hull (which will be a straight-line segment $\overline{ur}$ at this point) contains a topmost or a rightmost endpoint of some segment in $\RR_{ur}$. In this case, $r$ will end up in an intersection point of a line passing through $u$ and an endpoint of a segment in $\RR_{ur}$. We will call such points \emph{$2$nd-order fixed points}, defined as an intersection of a segment in \RR with a line passing through a $1$st-order fixed point and an endpoint of a segment in \RR.

Similarly, for $b$ we distinguish two cases on the size of the lower-left hull, and slide $b$ until a fixed point (of $0$th-, $1$st-, or $2$nd-order). Thus, in case (1) we get a canonical convex transversal representation with strictly convex bends in the endpoints of segments in \RR, and extreme vertices in fixed points of $0$th-, $1$st-, or $2$nd-order. Note that there are $n^{2i+1}$ points that are $i$th-order, which means we only have to consider a polynomial number of discrete vertices in order to find a canonical solution.

In case (2), when there exists a chain of the convex hull of $Q$ that does not contain any endpoint, w.l.o.g., assume that this chain is the upper-left hull.
Then we slide $u$ left until the segment $\overline{u\ell}$ hits an endpoint of a segment while bending at an endpoint of the upper-right hull or at the point $r$, if the upper-right hull does not contain an endpoint. If the endpoint we hit does not belong to $s_u$, we are either in case (3), or there is another chain of the convex hull without endpoint and we repeat the process.
Else $u$ is an $0$th-order fixed point, and we can move $\ell$ up until it is an at most $1$st-order fixed point. Then, similarly to the previous case, we slide $r$ and $b$ until reaching at most a $1$st-order fixed point and at most a $2$nd-order fixed point respectively. Thus, in case (2) we also get a canonical convex transversal representation with strictly convex bends in the endpoints of segments in \RR, and extreme vertices in fixed points of $0$th-, $1$st-, or $2$nd-order.

Finally, in case (3), we may be able to slide some extreme points to reach another segment endpoint on one of the hulls, which gives us a canonical representation, as it puts us in case (1). However, this may not always be possible. Consider an example in Figure~\ref{fig:2-oriented-quad1}. If we try sliding $u$ left, we will need to rotate $\overline{u\ell}$ around the point $e_{u\ell}$ (see the figure), which will propagate to rotation of $\overline{\ell b}$ around $e_{b\ell}$, $\overline{br}$ around $e_{br}$, and $\overline{ru}$ around $e{ur}$, which may not result in a proper convex hull of $Q$.

The last case is a special case of a canonical partial convex transversal. It is defined by four segments on which lie the extreme points, and four endpoints of segments ``pinning'' the convex chains of the convex hull such that the extreme points cannot move freely without influencing the other extreme points. Next we present an algorithm to find an optimal canonical convex transversal with extreme points in the discrete set of fixed points. In the subsequent section we consider the special case when the extreme point are not necessarily from the set of fixed points.

\subsection{Calculating the canonical transversal}
    \begin{figure}
        \centering
        \begin{subfigure}{0.45\textwidth}
            \includegraphics[page=2,width=\textwidth]{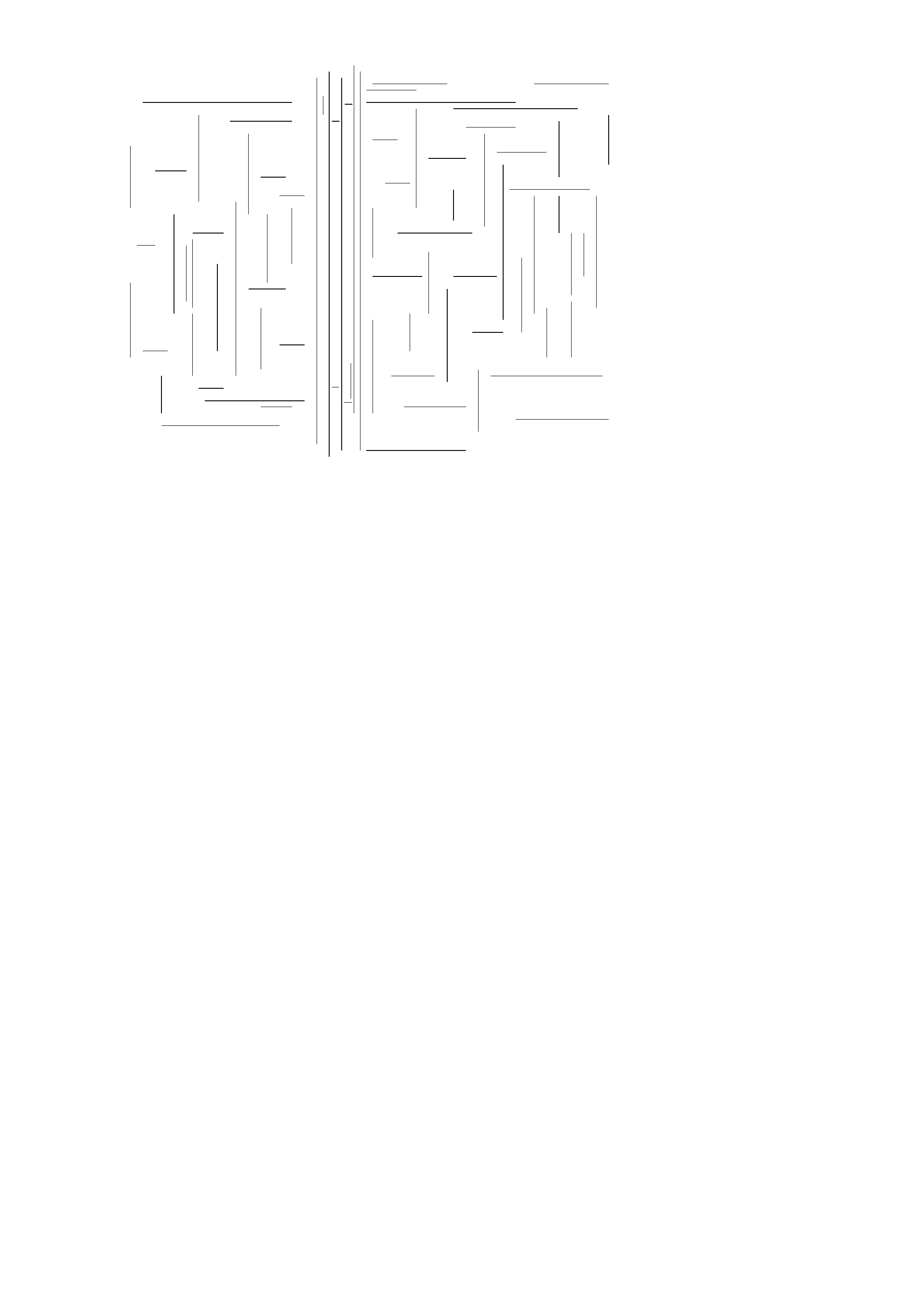}
            \caption{The input and our guess for the 4 extremal points.}
            \label{fig:2-oriented-limits}
        \end{subfigure}
        \hfil
        \begin{subfigure}{0.45\textwidth}
            \includegraphics[page=3,width=\textwidth]{2-oriented}
            \caption{The regions induced by the extremal points.}
            \label{fig:2-oriented-regions}
        \end{subfigure}
        \begin{subfigure}{0.45\textwidth}
            \includegraphics[page=4,width=\textwidth]{2-oriented}
            \caption{The different subproblems.}
            \label{fig:2-oriented-subproblems}
        \end{subfigure}
        \hfil
        \begin{subfigure}{0.45\textwidth}
            \includegraphics[page=5,width=\textwidth]{2-oriented}
            \caption{Our guesses for the endpoints of the subproblems.}
            \label{fig:2-oriented-guesses}
        \end{subfigure}
        \caption{The steps of our algorithm for 2-oriented segments.}
        \label{fig:2-oriented}
    \end{figure}

    \begin{figure}
        \centering
        \includegraphics{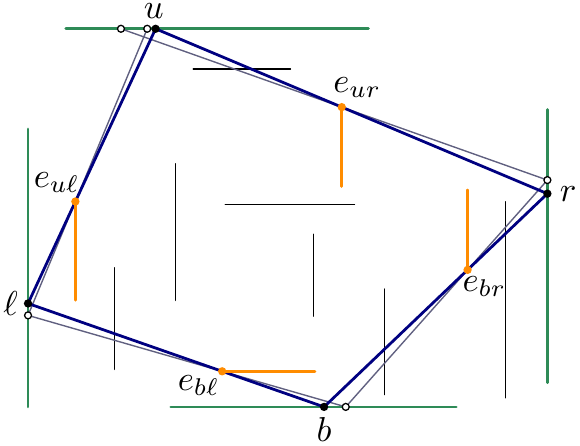}
        \caption{Extreme vertices are not in the set of fixed points.}
        \label{fig:2-oriented-quad1}
    \end{figure}

    We know that the vertices of a solution must lie on $0$th, $1$st or $2$nd order points. We subdivide the segments into several subproblems that we can solve similarly to the parallel case. First, we guess the four extreme points of our convex polygon, as seen in Figure~\ref{fig:2-oriented-limits}. These four points must be linked by \(x,y\)-monotone chains, and each chain has a triangular region in which its vertices may lie, as illustrated in Figure~\ref{fig:2-oriented-regions}. The key insight is that inside each of these regions, we have a partial ordering on the segments, as each segment can cross any \(x,y\)-monotone chain only once. This allows us to identify three types of subproblem:

    \begin{enumerate}
        \item Segments that lie inside two non-adjacent regions. We include any segments that lie between two of these into one subproblem and solve it separately. In Figure~\ref{fig:2-oriented-subproblems}, this subproblem is indicated in yellow. Note that there can be only one such subproblem.
        \item Segments that lie inside two adjacent regions. We include any segments that must come before the last one in our partial ordering. There are at most four of these subproblems; they are indicated in purple in Figure~\ref{fig:2-oriented-subproblems}.
        \item Segments that lie inside only one region. This includes only those segments not used in one of the other subproblems. There are at most four of these subproblems; they are indicated in blue in Figure~\ref{fig:2-oriented-subproblems}.
    \end{enumerate}

    For subproblems 1 and 2, we now guess the last points used on these subproblems (four points for subproblem 1, two for each instance of subproblem 2). We can then solve these subproblems using a similar dynamic programming algorithm to the one used for parallel segments. For subproblem 3, the problem is easier, as we are only building one half of the convex chain. We simply perform this algorithm for all possible locations for extreme points and endpoints of subproblems. For a given set of guesses, we can solve a subproblem of any type in polynomial time.

\subsection{Special case}\label{ch:special-subsection}
    As mentioned above this case only occurs when the four hulls each contain exactly one endpoint. The construction can be seen in Figure~\ref{fig:2-oriented-quad1}. Let $e_{u\ell}$, $e_{ur}$, $e_{br}$ and~$e_{b\ell}$ be the endpoints on the upper-left, upper-right, lower-right and lower-left hull. Let further $s_u$, $s_r$, $s_b$ and $s_\ell$ be the segments that contain the extreme points.

    For two points $a$ and $b$, let $l(a, b)$ be the line through $a$ and $b$.
    For a given position of $u$ we can place $r$ on or below the line $l(u, e_{ur})$. Then we can place $b$ on or left of the line $l(r, e_{br})$, $\ell$ on or above $l(b, e_{b\ell})$ and then test if $u$ is on or to the right of $l(\ell, e_{u\ell})$.
    Placing $r$ lower decreases the area where $b$ can be placed and the same holds for the other extreme points. It follows that we place $r$ on the intersection of $l(u, e_{ur})$ and $s_r$, we set $\{b\}=l(r, e_{br})\cap s_b$ and $\{\ell\}=l(b, e_{b\ell})\cap s_\ell$.
    Let then $u'$ be the intersection of the line $l(\ell, e_{\ell u})$ and the upper segment $s_u$.
    In order to make the test if $u'$ is left of $u$ we first need the following lemma.

    \begin{lemma}\label{th:math-does-not-increase}
        Given a line $\ell$, a point $A$, and a point $X(\tau)$ with coordinates $\left(\frac{P_1(\tau)}{Q(\tau)}, \frac{P_2(\tau)}{Q(\tau)}\right)$ where $P_1(\cdot)$, $P_2(\cdot)$, and $Q(\cdot)$ are linear functions. The intersection $Y$ of $\ell$ and the line through the points $X$ and $A$ has coordinates $\left(\frac{P'_1(\tau)}{Q'(\tau)}, \frac{P'_2(\tau)}{Q'(\tau)}\right)$ where $P'_1(\cdot)$, $P'_2(\cdot)$ and $Q'(\cdot)$ are linear functions.
    \end{lemma}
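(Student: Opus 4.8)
The plan is to pass to homogeneous coordinates, where the essential observation is that $X(\tau)$, whose two affine coordinates are ratios of linear functions sharing the common denominator $Q(\tau)$, lifts to the homogeneous point $[P_1(\tau):P_2(\tau):Q(\tau)]$ --- a representative whose three coordinates are \emph{linear} in $\tau$. I will regard $A=(a_1,a_2)$ as the constant homogeneous point $[a_1:a_2:1]$ and the fixed line $\ell$ as a constant line-coordinate vector $\mathbf{n}=(\alpha,\beta,\gamma)$. First I would form the line through $X(\tau)$ and $A$ as the cross product $X(\tau)\times A$; since $X(\tau)$ is linear in $\tau$ and $A$ is constant, each of its three components is again linear in $\tau$. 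Next I would obtain the desired point $Y$ as the intersection of this line with $\ell$, namely the cross product $Y(\tau)=\mathbf{n}\times(X(\tau)\times A)$. Because one factor is constant and the other linear in $\tau$, every component of $Y(\tau)$ is linear in $\tau$. Dehomogenizing $Y(\tau)=[Y_1(\tau):Y_2(\tau):Y_3(\tau)]$ then gives affine coordinates $\bigl(Y_1(\tau)/Y_3(\tau),\,Y_2(\tau)/Y_3(\tau)\bigr)$, so taking $P'_1=Y_1$, $P'_2=Y_2$, and $Q'=Y_3$ yields exactly the claimed form, with a single linear denominator shared by both coordinates.

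For readers who prefer to avoid projective language, I would give the identical conclusion by Cramer's rule. Clearing $Q(\tau)$ in the two-point equation of the line through $X(\tau)$ and $A$ produces a linear equation $c_1(\tau)\,x+c_2(\tau)\,y+c_3(\tau)=0$ whose coefficients are linear in $\tau$. Pairing this with the constant equation $\alpha x+\beta y+\gamma=0$ of $\ell$ and solving, both numerators and the common denominator $c_1(\tau)\beta-c_2(\tau)\alpha$ are $2\times2$ determinants with one linear row and one constant row, hence linear in $\tau$; this reproduces the ratio-of-linear form with a shared denominator.

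The computation is routine, so the only real obstacle is bookkeeping: verifying that no spurious increase in degree creeps in. A direct substitution of $x_1=P_1/Q$ and $x_2=P_2/Q$ into a cross-multiplied line equation appears to introduce an extra factor of $Q(\tau)$, and one must check that it cancels uniformly rather than surviving in one coordinate but not the other. The homogeneous viewpoint is precisely what makes this transparent: the cross-product formulas keep all three projective coordinates honestly linear and force the two affine coordinates to share the single linear denominator $Q'(\tau)$, which is exactly the statement to be proved.
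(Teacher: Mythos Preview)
Your proof is correct. Both your projective argument and the Cramer's-rule variant establish the lemma cleanly.

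The paper takes a different, more bare-hands route: it fixes affine coordinates $A=(a_x,a_y)$, writes $\ell$ as $l_1 x + l_2 y + l_3 = 0$, parametrises the line through $X$ and $A$ as $k_1 x + k_2 y + k_3 = 0$, solves for $k_2,k_3$ in terms of $k_1$ from the two incidence conditions, and then explicitly intersects the two lines to display the coordinates of $Y$ as concrete rational expressions in $\tau$, reading off by inspection that numerators and denominator are linear. Your approach, by contrast, lifts $X(\tau)$ to the homogeneous point $[P_1(\tau):P_2(\tau):Q(\tau)]$ and observes that each of the two cross products involved pairs a vector linear in $\tau$ with a constant vector, so linearity is preserved at every step; the shared denominator then falls out automatically upon dehomogenisation. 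The paper's computation has the virtue of producing the explicit formula for $Y$ (which could in principle be useful downstream), while your argument is shorter, avoids any risk of algebraic slips, and makes structurally clear \emph{why} no degree increase can occur --- precisely the ``bookkeeping'' concern you flag. Either is entirely adequate here; yours is the more conceptual of the two.
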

    \begin{proof}
        The proof consists of calculating the coordinates of the point $Y$ depending on $\tau$.

        Let $(a_x, a_y)$ be the coordinates of the point $A$.
        Let $l_1 x + l_2 y + l_3 =0$ and $k_1 x + k_2 y + k_3 =0$ be the equations of the lines $\ell$ and the line through $X$ and $A$.
        We can determine $k_2$ and $k_3$ depending on $k_1$ because the line passes through $X$ and $A$.
        It follows that $k_2=-k_1\frac{-P_1(\tau) + a_x Q(\tau)}{-P_2(\tau) + a_y Q(\tau)}$ and $k_3=-k_1\frac{a_y P_1(\tau) - a_x P_2(\tau)}{-P_2(\tau) + a_y Q(\tau)}$.
        We can then calculate the coordinates of the point $Y$. We obtain
        \begin{align*}
            Y= \left(\vphantom{\frac{P_1}{P_2}}\right. & \frac{-(a_y l_2 + l_3) P_1(\tau) + a_x (l_2 P_2(\tau) + l_3 Q(\tau))}{l_1 P_1(\tau) + l_2 P_2(\tau) - (a_x l_1 + a_y l_2) Q(\tau)},          \\
                             & \left.\frac{a_y l_1 P_1(\tau) - (a_x l_1 + l_3) P_2(\tau) + a_y l_3 Q(\tau)}{l_1 P_1(\tau) + l_2 P_2(\tau) - (a_x l_1 + a_y l_2) Q(\tau)}\right)\qedhere
        \end{align*}
      \end{proof}

    Let $(\tau, c)$ be the coordinates of the point $u$ for $\tau\in I$, where the constant $c$ and the interval $I$ are determined by the segment $s_u$.
    Then by Lemma~\ref{th:math-does-not-increase} we have that the points $r$, $b$, $\ell$, $u'$ all have coordinates of the form specified in the lemma.
    First we have to check for which values of $\tau$ the point $u$ is between $e_{u\ell}$ and $e_{ur}$, $r$ is between $e_{br}$ and $e_{ur}$, $b$ is between $e_{b\ell}$ and $e_{br}$ and $\ell$ is between $e_{b\ell}$ and $e_{u\ell}$. This results in a system of linear equations whose solution is an interval $I'$.

    We then determine the values of $\tau\in I'$ where $u'=\left(\frac{P_1(\tau)}{Q(\tau)}, \frac{P_2(\tau)}{Q(\tau)}\right)$ is left of $u=(\tau, c)$ by considering the following quadratic inequality: $\frac{P_1(\tau)}{Q(\tau)} \leq \tau$.
    If there exists a $\tau$ satisfying all these constraints, then there exists a convex transversal such that the points $u$, $r$, $b$ and $\ell$ are the top-, right-, bottom-, and leftmost points, and the points $e_{jk}$ ($j,k=u,r,b,\ell$) are the only endpoints contained in the hulls.

    Combining this special case with the algorithm in the previous section, we obtain the following result:

    \begin{theorem}\label{th:2-oriented-polynomial}
      Given a set of 2-oriented line segments, we can compute the maximum number of regions visited by a convex partial transversal in polynomial time.
    \end{theorem}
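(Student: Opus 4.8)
The plan is to prove Theorem~\ref{th:2-oriented-polynomial} by combining the structural analysis of canonical transversals from Section~\ref{sec:2-oriented} with two algorithmic components: the dynamic programming for the ``generic'' canonical case, and the explicit parametric computation for the special case. First I would invoke the case analysis already carried out above, which shows that any optimal convex partial transversal can be transformed (without decreasing the number of visited regions) into one of two forms: either a \emph{canonical transversal} whose four extreme vertices lie on $0$th-, $1$st-, or $2$nd-order fixed points and whose remaining strictly convex vertices lie on segment endpoints (cases (1), (2), and the subcase of (3) that reduces to (1)), or the \emph{special case} where each of the four hulls contains exactly one ``pinning'' endpoint and the extreme vertices slide along their segments in a coupled manner. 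Since these two forms exhaust all possibilities, the maximum over both yields $k^*$.

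For the generic canonical case, the strategy is to enumerate over all choices of the four extreme points. By Lemma~\ref{lem:2orient-canonical-1} and the fixed-point analysis, there are only polynomially many candidate locations: at most $O(n^{2i+1})$ points of order $i$ for $i \in \{0,1,2\}$, hence $O(n^5)$ candidates each, giving $O(n^{20})$ guesses for the four extreme points. Having fixed the four extremes, the four $x,y$-monotone chains live in four disjoint triangular regions (Figure~\ref{fig:2-oriented-regions}), and the crucial observation---that each segment crosses any monotone chain at most once---induces a partial order on the segments within each region. This lets me decompose the remaining work into the three subproblem types described above (segments spanning two non-adjacent regions, two adjacent regions, or a single region). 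For each subproblem I would guess the $O(n)$ ``last'' points (at most four for type 1, two per instance of type 2) and then run a dynamic program essentially identical to the upper-convex-transversal recurrence of Section~\ref{sub:Computing_an_upper_convex_transversal}, which is correct because within a single region each chain is monotone and behaves exactly like the one-oriented upper hull after an appropriate rotation of coordinates. Each subproblem is thus solvable in polynomial time, and since there are $O(1)$ subproblems with $O(n)$ guessed interface points, the total for a fixed choice of extremes is polynomial.

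For the special case, the approach is the parametric one developed in Section~\ref{ch:special-subsection}. After guessing the four segments $s_u, s_r, s_b, s_\ell$ carrying the extreme vertices and the four pinning endpoints $e_{u\ell}, e_{ur}, e_{br}, e_{b\ell}$ (there are $O(n^8)$ such guesses), I parametrize the top vertex as $u=(\tau,c)$ and propagate the tightest placements of $r$, $b$, $\ell$ around the pinning endpoints. By Lemma~\ref{th:math-does-not-increase}, each of $r$, $b$, $\ell$, and the induced $u'$ has coordinates that are ratios of linear functions of $\tau$; the feasibility constraints (each extreme point lying between its two flanking endpoints) form a system of linear inequalities solvable for an interval $I'$, and the closure condition $u' \le u$ reduces to a single quadratic inequality $P_1(\tau)/Q(\tau) \le \tau$ in $\tau$. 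Intersecting these constraints tests in constant time (per guess) whether a valid transversal with the prescribed combinatorial structure exists.

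The main obstacle I anticipate is \emph{correctly certifying that the case analysis is exhaustive and that each transformation preserves the set of visited regions}---in particular, handling the caveat flagged above that sliding an extreme vertex may create \emph{new} intersections of the hull with segments of $\RR$. The paper's remedy is to restart the process with a larger value of $k$ whenever this happens, and making this argument airtight (bounding the number of restarts and ensuring monotone progress) is the delicate part; by contrast, the dynamic programming within each region is a routine adaptation of the one-oriented algorithm and the special-case computation is a direct, if tedious, algebraic manipulation whose correctness follows from Lemma~\ref{th:math-does-not-increase}. Taking the maximum of the number of regions found over the generic canonical enumeration and the special-case test then gives $k^*$, and since every step runs in time polynomial in $n$, the theorem follows.
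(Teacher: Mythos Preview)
Your proposal is correct and follows essentially the same approach as the paper: the paper likewise combines the generic canonical case (enumerate the four extreme vertices over $0$th-, $1$st-, and $2$nd-order fixed points, decompose into the three monotone-chain subproblem types of Figure~\ref{fig:2-oriented-subproblems}, and solve each by a DP analogous to the one-oriented algorithm) with the special case of Section~\ref{ch:special-subsection} handled parametrically via Lemma~\ref{th:math-does-not-increase}, and the theorem is stated immediately after that special-case analysis as the combination of the two. Your identification of the restart mechanism for newly created intersections as the delicate point is apt---the paper flags the same caveat and does not elaborate further.
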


\subsection{Extensions}
    One should note that the concepts explained here generalize to more orientations. For each additional orientation there will be two more extreme points and therefore two more chains. It follows that for $\rho$ orientations there might be $\rho$th-order fixed points. This increases the running time, because we need to guess more points and the pool of discrete points to choose from is bigger, but for a fixed number of orientations it is still polynomial in $n$. The special case generalizes as well, which means that the same case distinction can be used.

\section{3-oriented intersecting segments}
\label{sec:hardthreeintersect}
We prove that the problem of finding a maximum convex partial transversal $Q$ of a set of 3-oriented segments $\RR$
is NP-hard using a reduction from Max-2-SAT.

\begin{theorem}
Let $\RR$ be a set of segments that have three different orientations. The problem of finding a maximum convex partial transversal $Q$ of $\RR$ is NP-hard.
\end{theorem}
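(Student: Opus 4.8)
The plan is to reduce from Max-2-SAT, encoding a boolean formula as a configuration of three-oriented segments so that the maximum number of regions visitable by a convex partial transversal directly corresponds to the maximum number of satisfiable clauses (plus a fixed offset). The overall strategy mirrors the classic NP-hardness constructions for convex stabbers (Arkin~\etal and Schlipf), but must be engineered so that only three segment orientations suffice, which is the source of the additional difficulty.

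First I would build a \emph{variable gadget} for each boolean variable $x_i$. The idea is to place a cluster of segments in such a way that any convex transversal passing through the cluster is forced into one of exactly two ``locally convex'' configurations, corresponding to setting $x_i$ to \texttt{true} or \texttt{false}. Because the convex hull must turn consistently (its boundary is globally convex), the choice made in one region constrains the choices available elsewhere; I would exploit the three orientations to create segments whose slopes are incompatible with being stabbed simultaneously unless the local truth assignment is consistent. Each gadget should be laid out along a large convex curve (e.g.\ a coarse circle or parabola) so that the transversal is globally forced to visit the gadgets in a prescribed cyclic order, and the three orientations give enough freedom to realize the two states while still limiting a single convex chain to stab only the ``correct'' subset.

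Next I would design a \emph{clause gadget} for each clause $C_j = (a \vee b)$, consisting of one or more extra segments positioned so that the convex transversal can stab the clause's bonus segment(s) if and only if at least one of the two literals in $C_j$ is satisfied by the state chosen in the corresponding variable gadgets. The counting is then arranged so that: every variable gadget contributes a fixed number of stabbed segments regardless of its state (so the variable choices are ``free'' and only the clause bonuses matter), and each satisfied clause contributes exactly one extra stabbed region. Thus a convex transversal visiting $N + t$ regions, where $N$ is the fixed contribution of all variable and structural segments, exists if and only if some truth assignment satisfies at least $t$ clauses. Setting the target $k = N + t$ completes the reduction, and since Max-2-SAT is NP-hard and the construction has polynomial size, the theorem follows.

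I expect the main obstacle to be the \textbf{consistency and orientation constraint}: ensuring that a globally convex chain cannot ``cheat'' by picking locally inconsistent truth values or by skipping variable segments to free up turning budget for extra clause bonuses, all while using only three orientations rather than arbitrary slopes. With unrestricted orientations one has ample slack to force the convex-turning argument, but with three fixed directions the angular budget of a convex polygon (its exterior angles summing to $2\pi$) becomes a tight global resource, so the layout must be proven to admit no unintended transversal that beats the intended one. The bulk of the formal proof will therefore be a careful case analysis showing that (i) every intended assignment yields a valid convex transversal of the claimed size, and (ii) conversely every convex transversal of that size induces a consistent assignment satisfying the required number of clauses; verifying direction (ii) against all the ways a convex chain might route through the three-oriented segments is where the real work lies.
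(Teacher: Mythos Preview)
Your high-level plan matches the paper's: both reduce from \textsc{Max-2-SAT}, lay variable and clause gadgets along a large convex curve so the transversal visits them in a forced cyclic order, and arrange the counting so that each satisfied clause contributes exactly one extra stabbed segment beyond a fixed baseline. But your proposal stops precisely where the real work begins, and the mechanism you gesture at---a global ``angular budget'' with exterior angles summing to $2\pi$---is not how the paper (or, I suspect, any working construction) handles the three-orientation constraint.

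The paper's consistency mechanism is structural, not angular. Each variable is encoded as a closed \emph{chain} of $4m{+}2$ segments alternating between the slope-$+1$ and slope-$-1$ orientations, bouncing between the top and bottom of a long thin convex ``banana'' built from four parabolic arcs. At every bounce sits a \emph{fruit fly} gadget: two implicit crossing wing lines, each carrying $m{+}1$ tiny upper-wing segments. Convexity of the transversal forces a choice of exactly one wing per fly, and because consecutive flies share a chain segment, that choice propagates all the way around the closed chain---this is what yields one consistent truth value per variable. The vertical orientation is reserved for clause segments, which run from a reflection fly to a clause fly; a clause segment comes ``for free'' at its reflection fly only when the corresponding literal is true. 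Finally, to block exactly the cheating you worry about (skipping chain segments to grab extra clause bonuses), every chain is replicated $m{+}1$ times at tiny spacing, so that omitting even a single chain segment or fly wing costs $m{+}1$ regions---more than all $m$ clause segments combined could ever repay. None of this machinery appears in your proposal; without a concrete propagation device and a replication-style anti-cheating argument, the ``careful case analysis'' you defer to has nothing to stand on.
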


First, note that we can choose the three orientations without loss of generality: any (non-degenerate) set of three orientations can be mapped to any other set using an affine transformation, which preserves convexity of transversals.
We choose the three orientations in our construction to be vertical ($|$), the slope of $1$ ($\+$) and the slope of $-1$ ($\-$).

Given an instance of \textsc{Max-2-SAT} we construct a set of segments $\RR$ and then we prove that from a maximum convex partial transversal $Q$ of $\RR$ one can deduce the maximum number of clauses that can be made true in the instance.

\subsection{Overview of the construction}
Our constructed set \(\RR\) consists of several different substructures.
The construction is built inside a long and thin rectangle, referred to as the \emph {crate}.
The crate is not explicitly part of $\RR$.
Inside the crate, for each variable, there are several sets of segments that form chains. These chains alternate $\+$ and $\-$ segments reflecting on the boundary of the crate.
For each clause, there are vertical $|$ segments to transfer the state of a variable to the opposite side of the crate.
Figure~\ref {fig:bananaoverview} shows this idea.
However, the segments do not extend all the way to the boundary of the crate; instead they end on the boundary of a slightly smaller convex shape inside the crate, which we refer to in the following as the \emph{banana}.
Figure~\ref {fig:bananabending} shows such a banana.
Aside from the chains associated with variables, \(\RR\) also contains segments that form gadgets to ensure that the variable chains have a consistent state, and gadgets to represent the clauses of our \textsc{Max-2-SAT} instance.
Due to their winged shape, we refer to these gadgets by the name \emph{fruit flies}. (See Figure~\ref{fig:fruitfly} for an image of a fruit fly.)

Our construction makes it so that we can always find a transversal that includes all of the chains, the maximum amount of segments on the gadgets, and half of the $|$ segments. For each clause of our \textsc{Max-2-SAT} instance that can be satisfied, we can also include one of the remaining $|$ segments.

\begin{figure}
\hspace{-1cm}\includegraphics{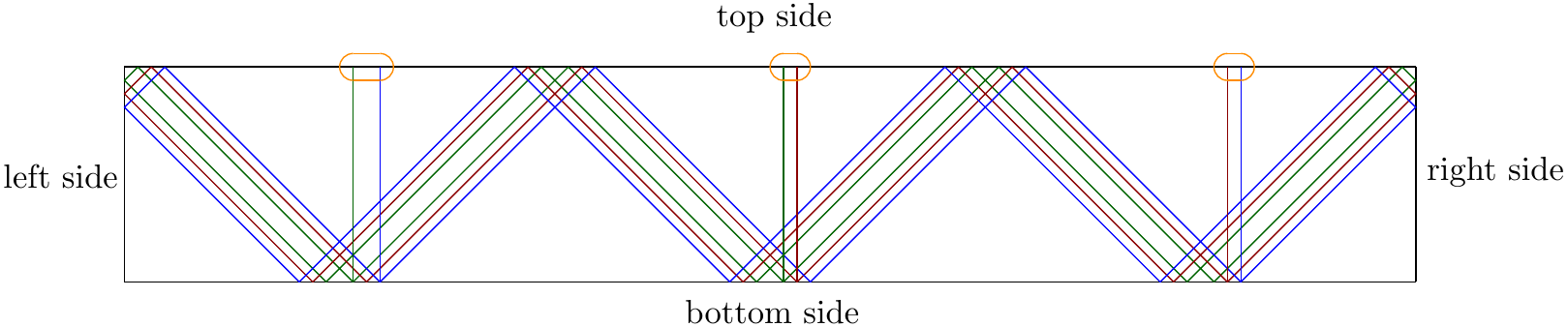}
\caption{Overview of our construction. Each of the colored segment chains represents a variable. At each point where a chain bounces on the banana there is a fruit fly gadget. At each area marked orange there is a clause gadget. Each chain is only pictured once, but in actuality each chain is copied \(m+1\) times and placed at distance \(\epsilon\) of each other. The distance between the different variables is exaggerated for clarity.}
\label{fig:bananaoverview}
\end{figure}

\subsection{Complete construction}
In the following we assume that we are given an instance of \textsc{Max-2-SAT} $(V, C)$, where $V=\{v_1, \dots, v_n\}$ is the set of variables and $C=\{c_1, \ldots, c_m\}$ is the set of clauses. For an instance of \textsc{Max-2-SAT}, each clause has exactly two literals. The goal is to find an assignment for the variables such that the maximum number of clauses is satisfied.
We first construct a set of segments $\RR$ and then we prove that from a maximum convex partial transversal $Q$ of $\RR$ one can deduce the maximum number of clauses that can be made true in $(V,C)$.

The different substructures of \(\RR\) have sizes of differing orders of magnitude. Let therefore $\alpha$ (distance between the chains of the different variables), $\beta$ (distance between the inner and outer rectangles that shape the banana), $\gamma$ (horizontal distance between the inner anchor points of a fly), $\delta$ (vertical distance between outermost bristle line and a fly's wing), $\epsilon$ (distance between multiple copies of one variable segment), and \(\zeta\) (length of upper wing points of the flies), be rational numbers (depending polynomially on $n$ and $m$) with $1\gg\alpha\gg\beta\gg\gamma\gg\delta\gg\epsilon\gg\zeta>0$. Usable values for these constants are given in Table~\ref{tab:bananaconstants}, but other values are also possible.

\begin{table}
\centering

\begin{tabular}{cc}
\toprule
Constant & Value \\
\midrule
\(\alpha\) & \(\frac{1}{100n}\) \\
\(\beta\) & \(\frac{\alpha}{100}\) \\
\(\gamma\) & \(\frac{\alpha^3}{10000m^2}\) \\
\(\delta\) & \(\frac{\gamma}{100m}\) \\
\(\epsilon\) & \(\frac{\delta}{100}\) \\
\(\zeta\) & \(\frac{\delta}{100m^2}\) \\
\bottomrule
\vspace{0cm}
\end{tabular}
\caption{Possible values for the constants used in our hardness construction.}
\label{tab:bananaconstants}
\end{table}
\subsubsection{Construction of the Chains}
\label{sec:constructchains}

First we create the crate \(B\) with sides \(1\) by \(2m\). Then we construct the chains.

\begin{lemma}
For each variable, we can create a closed chain of \(4m+2\) segments with endpoints on \(B\) by alternating \(\-\) and \(\+\) segments.
\end{lemma}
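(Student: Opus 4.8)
The plan is to construct, for each variable, a chain of $\- $ and $\+$ segments that bounces back and forth between the two long sides of the crate $B$, and to verify that exactly $4m+2$ segments suffice to form a closed loop. First I would set up coordinates so that $B$ is the axis-aligned rectangle $[0,2m]\times[0,1]$ (the long side has length $2m$, the short side length $1$). The idea is to start a $\+$ segment (slope $+1$) at a point on the bottom edge, follow it upward-right until it meets the top edge, then reflect: continue with a $\-$ segment (slope $-1$) downward-right until it meets the bottom edge again, and repeat. Since each diagonal segment of slope $\pm 1$ spanning the full height $1$ of the crate advances horizontally by exactly $1$ unit, a sequence of such segments marches rightward one unit at a time. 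I would make the whole structure \emph{closed} by having the chain traverse the crate from left to right and then return from right to left along a parallel zig-zag slightly offset, so that it meets up with its starting point.

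The key counting step is the following. Going across the crate of horizontal length $2m$ using segments that each cover one unit of horizontal distance requires $2m$ segments for a one-way traversal. A closed chain must go across and come back, which naively suggests $4m$ segments; the ``$+2$'' arises from the two turning segments at the far right end and (depending on the construction) a closing segment, or from the fact that the reflections at the two ends of the journey each require one extra short segment to turn the chain around without violating the fixed $\{\+,\-\}$ orientation constraint. I would nail this down by describing the chain explicitly: list the vertices as alternating hits on the bottom and top edges at $x$-coordinates $0,1,2,\dots$, count that a single monotone sweep across width $2m$ touches the boundary $2m+1$ times and thus uses $2m$ segments, and then account carefully for the return sweep and the two caps, arriving at the total $4m+2$. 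The condition that all endpoints lie \emph{on} $B$ is immediate from the reflection construction, since every segment by design terminates exactly when it reaches an edge of $B$.

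The main obstacle I anticipate is the precise bookkeeping of the turning/closing segments so that the chain is genuinely \emph{closed} (returns to its start) while using \emph{only} the two permitted diagonal orientations $\-$ and $\+$, and so that the count is exactly $4m+2$ rather than $4m$ or $4m+4$. A reflecting zig-zag naturally alternates orientation at each bounce, so I must check that the parity works out: after an even number of segments the chain is travelling in the same ``phase'' it started in, which is needed for the loop to close up consistently, and the two endcaps must be inserted without breaking the strict alternation of $\-$ and $\+$. I would also confirm that the geometry fits inside the $1 \times 2m$ crate, i.e.\ that the unit horizontal advance of a slope-$\pm1$ segment exactly matches the integer grid of $x$-coordinates $0,\dots,2m$ induced by the crate width, which is why the dimensions of $B$ were chosen as $1$ by $2m$ in the first place.

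Once the explicit chain is written down, the verification is routine: each segment has slope $\pm1$, consecutive segments share an endpoint on an edge of $B$, the orientations alternate, the total is $4m+2$, and the last segment's endpoint coincides with the first segment's starting point, establishing that the chain is closed.
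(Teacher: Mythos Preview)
Your approach is essentially the paper's, and your instincts about the offset return path and the ``turning segments'' at the ends are correct. What you leave vague, and what the paper makes explicit, is the mechanism: the chain starts not at a corner or at an integer $x$-coordinate but at an offset point $(i\alpha,1)$ on the top edge, and the turns at the two ends are precisely reflections off the \emph{short} (vertical) sides of $B$. The offset is essential; if you start at $x=0$ as your phrase ``$x$-coordinates $0,1,2,\dots$'' suggests, the chain hits a corner of $B$ and the reflection degenerates.

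With the offset in place the count is clean. From $(i\alpha,1)$ the chain zigzags rightward between top and bottom for $2m-1$ segments, reaching $(i\alpha+2m-1,0)$; the next $\+$ segment hits the right wall at $(2m,1-i\alpha)$; one more $\-$ segment reaches $(2m-i\alpha,1)$. That is $2m+1$ segments to arrive at the point symmetric to the start under $x\mapsto 2m-x$, and by that symmetry another $2m+1$ segments (including one bounce off the left wall) bring the chain back to $(i\alpha,1)$, for a total of $4m+2$. Your worry about the alternation of $\+$ and $\-$ is unnecessary: reflecting a slope-$\pm 1$ segment off \emph{any} side of $B$, horizontal or vertical, swaps the sign of the slope, so strict alternation is automatic throughout.
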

\begin{proof}
 Let $v_i$ be a variable and \(s_i\) be the (closed) chain we construct to be associated with \(v_i\).
Then the first segment of $s_i$ starts close to the top left of \(B\) at coordinates \((i\alpha,1)\) and has orientation $\-$ until it hits \(B\) at coordinates \((i\alpha+1,0)\) so that it connects the top and bottom sides of \(B\). Then the chain reflects off  \(B\)'s bottom side so the second segment has orientation $\+$, shares an endpoint with the first segment and again connects \(B\)'s bottom and top sides by hitting \(B\) at point \((i\alpha+2, 1)\). Then we reflect downwards again.

Every time we go downwards and then upwards again we move a distance of \(2\) horizontally. Since our rectangle has length \(2m\) we can have \(m-1\) pairs of segments like this. We are then at the point \((i\alpha+2m -2,1)\). If we then go downwards to \((i\alpha+2m-1,0)\) and reflect back up again, we hit the right side of \(B\) at coordinates \((2m,1-i\alpha)\). We reflect back to the left and hit the top side of \(B\) at \(2m-i\alpha, 1\). This point is symmetrical with our starting point so as we reflect back to the left we will eventually reach the starting point again, creating a closed chain, no matter our values of \(i\) and \(\alpha\).
\end{proof}

We construct a chain \(s_i\) for each variable \(v_i \in V\). Then we construct two \(|\) segments for each clause \(c_j \in C\) as follows:\\ Let \(v_k, v_l\) be the variables that are part of clause \(c_j\). There are shared endpoints for two segments of \(s_k\) and \(s_l\) at \((k\alpha+1+2(j-1),0)\) and \((l\alpha+1+2(j-1),0)\) respectively. At these points, we add \(|\) segments, called \emph{clause segments}, with their other endpoints on the top side of \(B\). (See Figure~\ref{fig:bananaoverview}.)

Each chain is replaced by \(m+1\) copies of itself that are placed at horizontal distance \(\epsilon\) of each other. The clause segments are not part of the chain proper and thus not copied.
\subsubsection{How to Bend a Banana}
\label{sec:bananabending}

\begin{figure}
\includegraphics{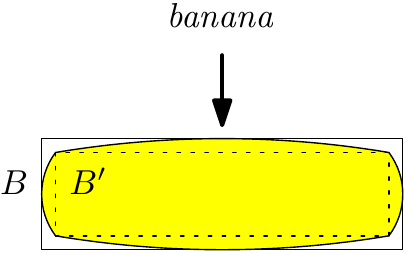}
\centering
\caption{A banana in a crate. It consists of four parabolas with very low curvature so that it is almost as straight as a rectangle while being strictly convex. The distance between \(B\) and \(B'\) has been exaggerated in the figure for clarity.}
\label{fig:bananabending}
\end{figure}

In the previous section we constructed the variable chains. In Section \ref{sec:constructfruitflies} we will construct fruit fly gadgets for each reflection of each chain and each clause. However, if we place the fruit flies on \(B\) they will not be in strictly convex position. To make it strictly convex we create a new, strictly convex bounding shape (the banana) which we place inside the crate. The fruit flies are then placed on the boundary of the banana, so that they are in strictly convex position.

We are given our crate \(B\) of size $1$ by $2m$ and we wish to replace it with our banana shape. (See Figure~\ref {fig:bananabending}.)
We want the banana to have certain properties.
\begin {itemize}
  \item The banana should be strictly convex.
  \item The distance between the banana and the crate should be at most $\beta$ everywhere.
  \item We want to have sufficiently many points with rational coordinates to lie on the boundary of the banana.
\end {itemize}

To build such a banana, we first create an {\em inner crate} $B'$, which has distance $\beta$ to $B$. Then we create four parabolic arcs through the corners of $B'$ and the midpoints of the edges of $B$; see Figure~\ref {fig:bananabending}.
In the following, we make this construction precise. We start by focusing on the top side of the banana.

Let \(P\) be the parabola on the top side. It goes from \((\beta,1- \beta)\), through vertex \((m,1)\) to point \((2m-\beta,1-\beta)\). This means the equation defining \(P\) is \(y = \frac{-\beta}{(m-\beta)^2}(m-x)^2 + 1\).
For the top side of the banana, we have that there are two types of flies that need to be placed. At each reflection of a chain there is a \emph{reflection fly}. For each clause there is a \emph{clause fly}. The positions of the reflections on the top side for a variable \(v_i\)'s chains are \((i\alpha+2k + j\epsilon,1)\) and \((2m - [i\alpha+2k + j\epsilon],1)\) for each \(j \in \{1,\dots,m+1\}, k \in \{0,\dots,m-1\}\) (see Section \ref{sec:constructchains}). The clauses have approximate position \((1 +j, 1)\) for each \(j\in \{1,\dots,m\}\). (The clauses are moved horizontally with a factor of \(\alpha\) depending on which variables are included in the clause.) The reflection flies have distance \(\alpha\) from each other, which is much more than \(\beta\) and \(\epsilon\). The distances between reflections on the other sides of the crate are of similar size. So it is possible to create a box with edges of size \(\beta\) around each reflection point such that
\begin{itemize}
\item All involved segments in the reflection (meaning every copy of the two chain segments meeting at the reflection, and a possible clause segment) intersect the box.
\item There cannot be any other segments or parts of flies inside of the box
\item The top and bottom edges of the box lie on \(B\) and \(B'\). (Or the left and right edges do if the reflection is on the left or right side of the crate.)
\end{itemize}
The clause flies have distance 1 from any other flies, so even though they are wider than reflection flies (width at most \(n\alpha\)), we can likewise make a rectangular box such that both the clause's segments intersect the box and no other segments do. The box has a height of \(\beta\) and is placed between \(B\) and \(B'\); the width is based on what is needed by the clause.

Inside each reflection- or clause box we find five points on \(P\) with rational coordinates which will be the fly's \emph{anchor points}. We take the two intersection points of the box with \(P\), the point on \(P\) with \(x\)-coordinate center to the box, and points \(\gamma\) to the left and right of this center point on \(P\). We will use these anchor points to build our fly gadgets. This way we both have guaranteed rational coordinates and everything in convex position. See Figure \ref{fig:flybox}.

\begin{figure}
\centering
\includegraphics[width=0.6\textwidth,trim={6.7cm, 0, 7cm, 0},clip]{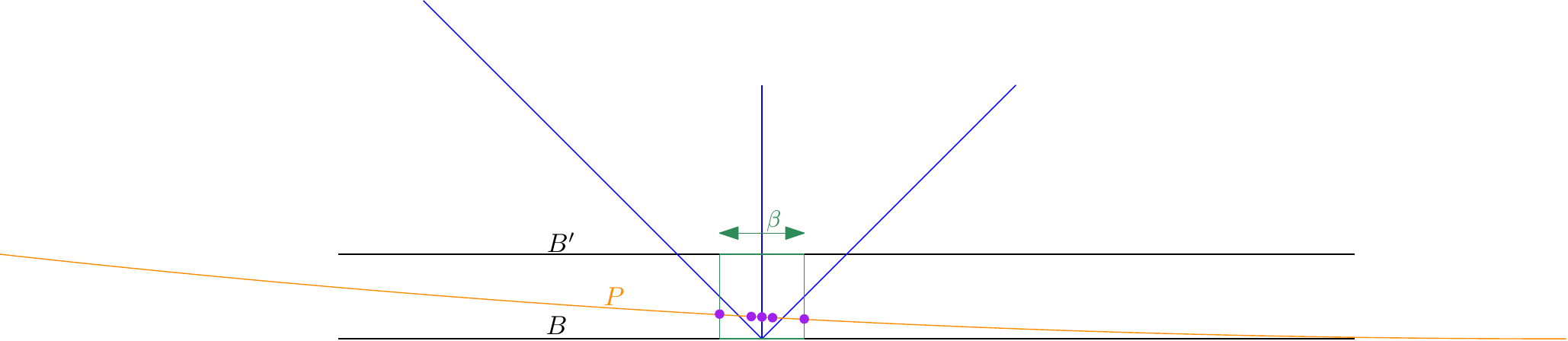}
\caption{Close-up showing the box we construct around a reflection of a chain on the bottom side of \(B\). \(P\) is the parabola that forms the bottom side of the banana. The anchor points (shown in purple) will be used to construct fruit flies. Going from left to right, the second and fourth points have a horizontal distance \(\gamma\) to the center point. In the figure \(\gamma\) is exaggerated for clarity.}
\label{fig:flybox}
\end{figure}

\subsubsection{Construction of the Fruit Flies}
\label{sec:constructfruitflies}

\begin{figure}
        \centering
        \begin{subfigure}{0.45\textwidth}
            \includegraphics[width=\textwidth]{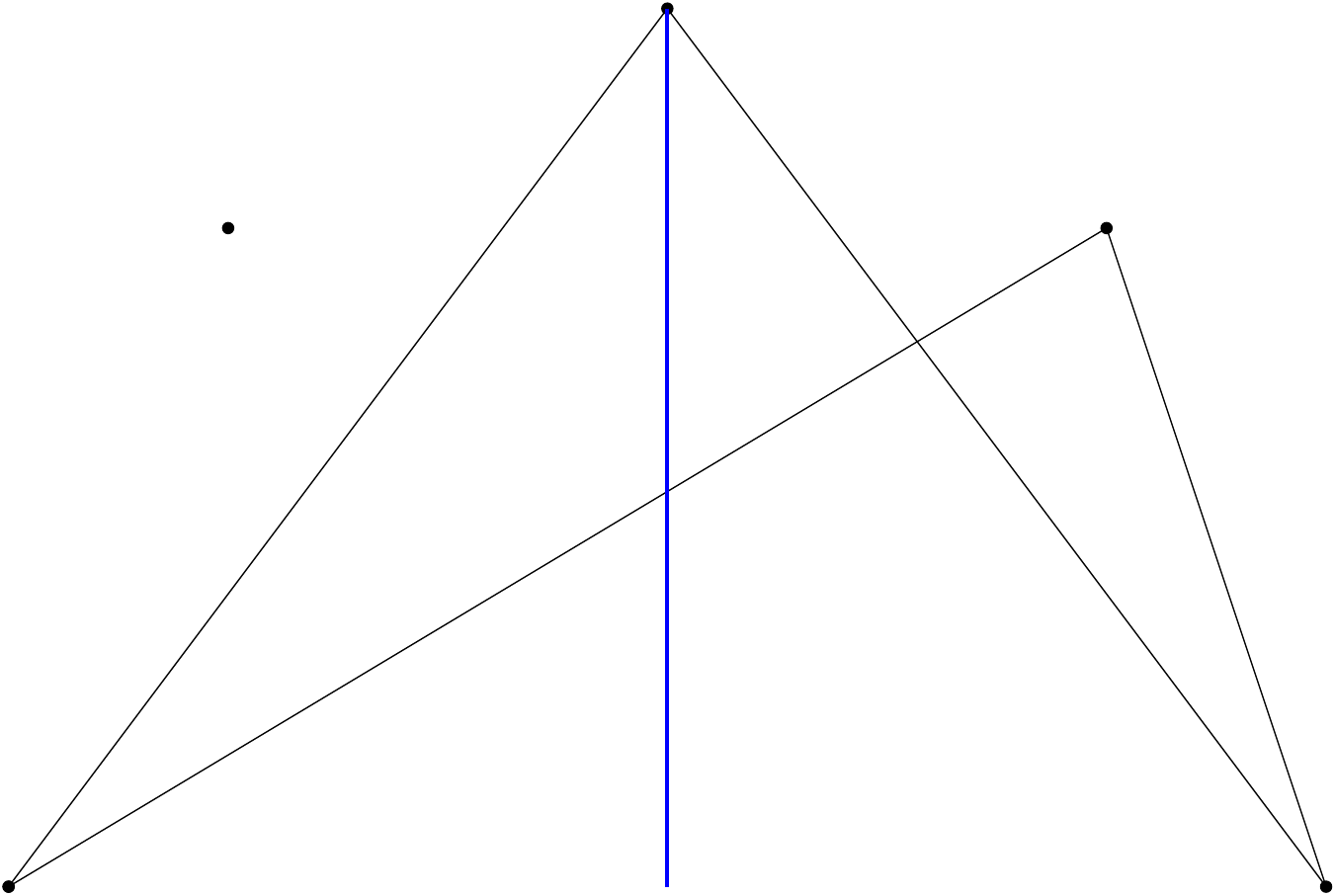}
			\caption{Fly with clause segment on left wing}
            \label{fig:leftwingfly}
        \end{subfigure}
        \begin{subfigure}{0.45\textwidth}
            \includegraphics[width=\textwidth]{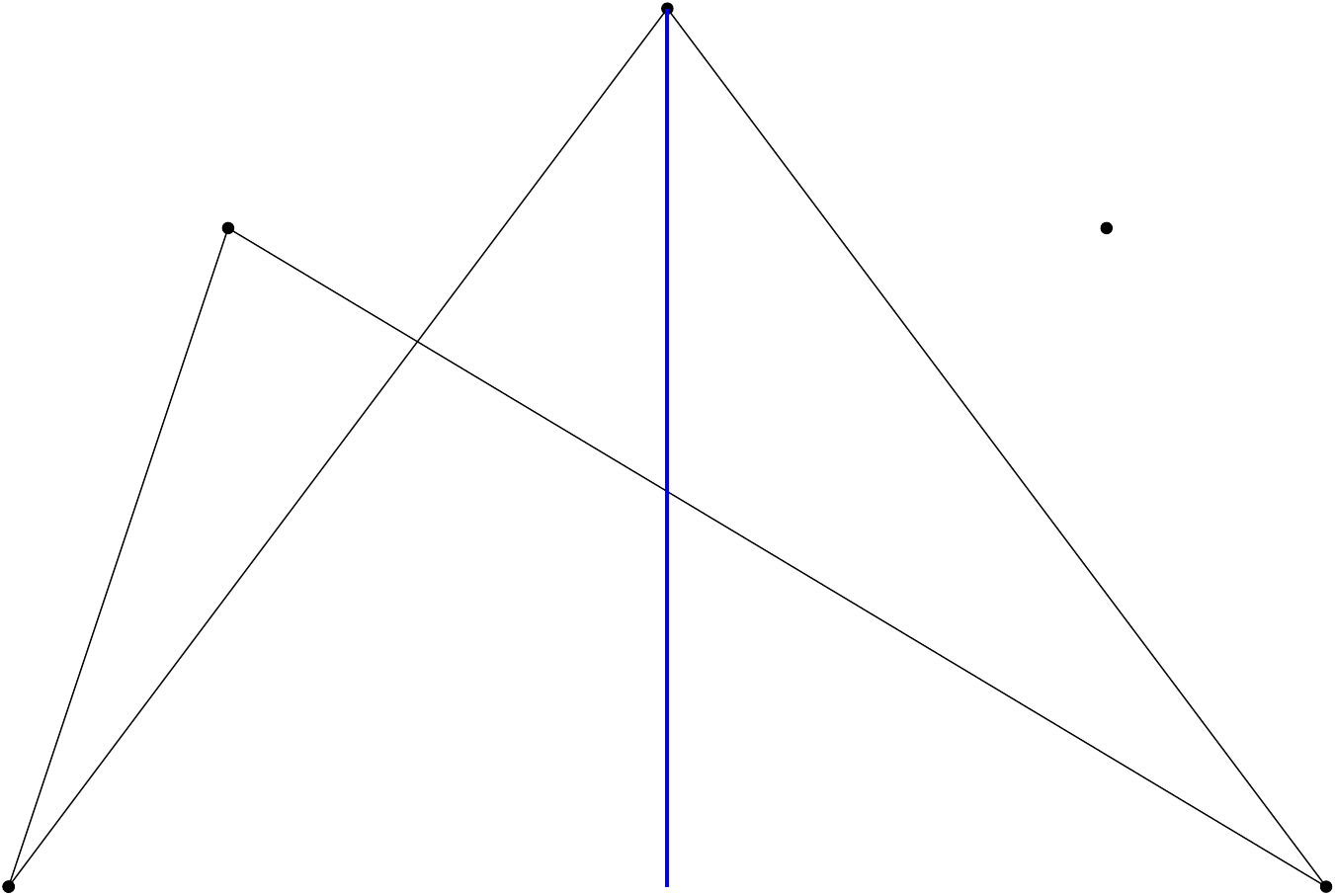}
            \caption{Fly with clause segment on right wing}
            \label{fig:rightwingfly}
        \end{subfigure}
        \caption{By choosing which anchor points to connect, we can make the clause segment intersect either the left or the right wing of the fly. The clause segment always intersects the center anchor point (before it is shortened to be in convex position). }
        \label{fig:politicalfly}
    \end{figure}

 \begin{figure}
        \centering
            \includegraphics[width=0.7\textwidth]{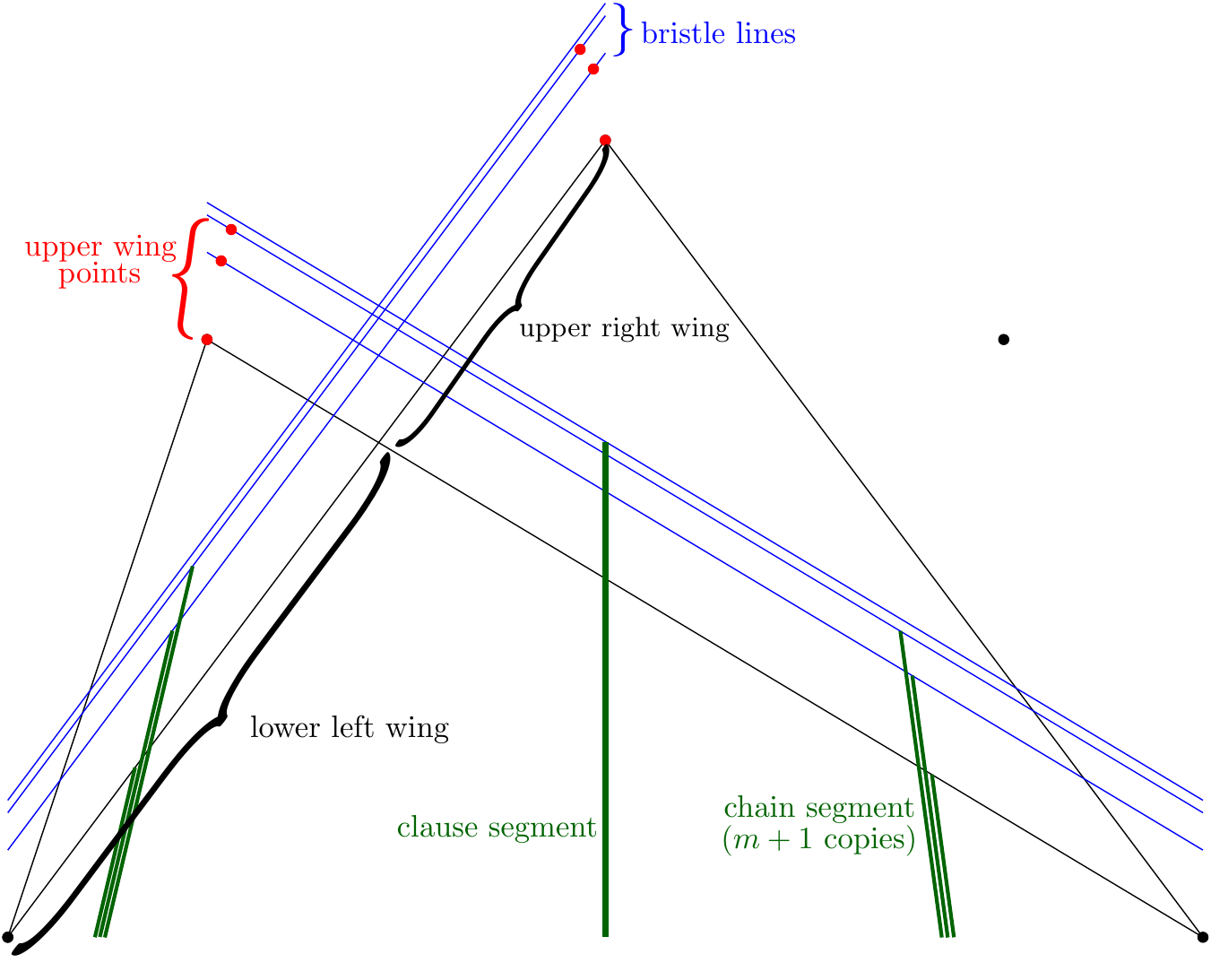}
        \caption{The Fruit Fly gadget. The endpoints of the segments and the upper wing points (shown in red) are in convex position due to their placement on the bristle lines (shown in blue). In our actual construction, the fly appears completely \emph{swatted}: the points defining the fly lie on the same parabola, so close together that the fly is almost completely flat. The chain segments are then at an angle of \(90^\circ\).}
        \label{fig:fruitfly}
    \end{figure}

Now we construct the numerous fruit fly gadgets that ensure the functionality of our construction.
The general concept of a fruit fly can be seen in Figure~\ref{fig:fruitfly}.

NB: In our images all flies are oriented "top side up", as if they are on the top side of the banana. (Even though reflection flies with a clause segment as seen in Figure~\ref{fig:fruitfly} can occur on the bottom side). In our actual construction, they are oriented such that the "top side" is pointing outwards from the banana relative to the side they are on.

We create
$n(4m+2)+m$ flies in total, of two types:
one {\em reflection fly} at each reflection of each chain and
one {\em clause fly} for each clause.


Each fly consists of a pair of {\em wings}. The wings are created by connecting four of the five anchor points in a criss-cross manner, creating two triangles. The intersection point between the segments in the center of the fly divides the wings into an \emph{upper wing} and a \emph{lower wing}. The intersecting segments are referred to in the following as the \emph{wing lines}.

The choice of which anchor points to connect depends on the presence of a clause segment. We always connect the outer points and the center point. If the fly is a reflection fly with a clause segment, we connect the anchor point such that the segment intersects the lower right wing if the variable appears negated in the clause and the lower left wing otherwise.  See Figure \ref{fig:politicalfly}. If the fly is a clause fly, or a reflection fly without a clause segment, the choice is made at random. The wings are implicit, there are no segments in \(\RR\) that correspond to them.

Besides the segments making up the wings, we also create \(m+1\) line segments parallel to the wing lines at heights increasing up to \(\delta\) above each wing line. We will refer to these extra line segments as the flies' \emph{bristle lines}.
The distance between bristle lines decreases quadratically. We first compute a step size \(\kappa = \frac{\delta}{m^2}\). Then, for both wing lines, the bristle lines are placed at heights \(h_w + \delta - (m+1-i)^2\kappa\) for all \(i \in \{1,\dots,m+1\}\) where \(h_w\) is the height of the wing line (relative to the orientation of the fly).

When the bristle lines have been constructed, we shorten all of the line segments involved in the fly so that their endpoints are no longer on \(B\) but lie on a wing line or one of the bristle lines. We do this in such a manner that the endpoints of the copies of a segment are all in convex position with each other (as well as with the next and previous fly) and have rational coordinates.

To shorten the chain segments, we consider the copies as being sorted horizontally. W.l.o.g. we look at the segments intersecting the lower left wing, which are sorted from left to right. The first segment (the original uncopied one) gets shortened so its endpoint is the intersection between the original line segment and the fly's wing line. The next segment gets shortened so its endpoint is the intersection with the first bristle line. The next segment gets shortened so its endpoint is on the second bristle line, etc. The final copy's endpoint will lie on the penultimate bristle line.
If there is a clause segment on the wing it is shortened so that its endpoint lies on the highest bristle line.

After shortening all of the segments we also add \(m+1\) vertical line segments of length \(\zeta\) to each of the flies' upper wings. Their length is chosen to be this short so they behave like points and it does not matter for any transversal which point on the segment is chosen, but only if the segment is chosen at all. The line segments have horizontal distance \(\epsilon\) from each other and are placed in a reverse manner compared to the segments; so the first line segment intersects penultimate bristle line, the next segment intersects the bristle line below that, etc. until the final segment intersects the wing line at the tip of the wing. These segments form a part of \(\RR\).  These segments (shown in red in Figure~\ref{fig:fruitfly}) are referred to in the following as the fly's \emph{upper wing points}.

For clause flies, the clause segments are shortened such that their endpoints lie on the highest bristle line. The fly is placed such that each clause segment has its own wing. The clause flies also have \(m+1\) upper wing points per wing.

\subsubsection{Putting it all together}
\label{sec:bananacount}
\begin{lemma}
\label{lem:polybanana}
The transformation of the instance of \textsc{Max-2-SAT} to \textsc{3-Oriented Maximum Partial Transversal} can be done in polynomial time and space.
\end{lemma}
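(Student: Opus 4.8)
The plan is to establish the lemma by a straightforward but careful accounting: (i) count every object the construction creates and show the total is polynomial in $n$ and $m$, (ii) bound the bit-complexity of all coordinates, and (iii) confirm that the procedure producing them runs in polynomial time (and hence space). The counting in (i) is the bookkeeping backbone, while (ii) is what actually certifies a \emph{polynomial} encoding rather than merely a polynomial number of objects.

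First I would tally the segments of $\RR$. The construction produces one base chain of $4m+2$ segments per variable, each copied $m+1$ times, giving $n(4m+2)(m+1)$ chain segments; it adds two clause segments per clause, giving $2m$; and it creates $n(4m+2)+m$ fruit flies, each contributing $O(m)$ bristle lines and $O(m)$ upper wing points. Summing, $|\RR| = O(nm^2)$, which is polynomial. In the same pass I would record that the numbers of flies, boxes, and anchor points are each $O(nm^2)$ as well, so there are only polynomially many geometric primitives to place.

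Second, I would argue that every coordinate is rational with polynomially many bits. The constants $\alpha,\dots,\zeta$ of Table~\ref{tab:bananaconstants} are rationals whose numerators and denominators are bounded by a fixed-degree polynomial in $n$ and $m$, hence have $O(\log(nm))$ bits; the top-arc coefficient $-\beta/(m-\beta)^2$ and the analogous coefficients of the other three parabolic arcs are rational functions of these, so they too have polynomial bit-length. Every endpoint is then obtained by one of a few elementary operations on such rationals: chain endpoints arise from reflecting off the rational sides of $B$ (affine maps with rational entries), and indeed are given by the explicit formulas of Section~\ref{sec:constructchains} such as $(i\alpha+2k+j\epsilon,1)$; a point on a parabolic arc with a chosen rational $x$-coordinate has rational $y$-coordinate; the anchor points are intersections of the axis-aligned rational boxes with the arcs together with the rational offsets $\pm\gamma$; and the shortened endpoints are intersections of segment supporting lines with wing or bristle lines, all carrying rational coefficients of polynomial bit-length. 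Since each coordinate is defined by a bounded-depth arithmetic expression in these constants, and each arithmetic operation increases bit-length by at most a constant factor, all coordinates remain rational with polynomially many bits.

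The step I expect to require the most care is verifying that the segment-shortening inside each fly can simultaneously be realized with rational endpoints \emph{in strictly convex position} and found in polynomial time. Here I would lean on the bristle-line construction of Section~\ref{sec:constructfruitflies}: each of the $m+1$ copies of a chain segment (and each upper wing point) is assigned to a distinct wing or bristle line placed at a rational height $h_w+\delta-(m+1-i)^2\kappa$ with $\kappa=\delta/m^2$, so its endpoint is a single explicitly computable rational line--line intersection, and the prescribed quadratically decreasing spacing is what forces the endpoints of the copies into strictly convex position. Because there are only $O(m)$ endpoints to place per fly and $O(nm)$ flies, and each placement is one rational intersection of polynomial size, the total work—and therefore the total space—is polynomial. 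Combining the polynomial object count of the first step with the polynomial per-object cost established in the second and third steps yields the claimed polynomial time and space bound.
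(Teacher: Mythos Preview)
Your proposal is correct and follows essentially the same approach as the paper: count the segments in $\RR$ (the paper obtains the explicit total $n(12m^2+18m+6)+4m+2m^2$, you the asymptotic $O(nm^2)$), enumerate the auxiliary anchor points and bristle lines, and observe that every endpoint arises as an explicit rational line--line or line--parabola intersection. Your treatment is in fact more careful than the paper's on one point---the bit-complexity of the coordinates---which the paper leaves implicit.
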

\begin{proof}
The set of all segments of all copies of the chains, the clause-segments used for the clauses and the points of the flies together form the set $\RR$. Each chain has \(4m+2\) segments. We have \(n(m+1)\) chains. We also have \(n(4m+2)\) reflection flies that each have \(2(m+1)\) upper wing points. We have \(2m\) clause segments. Finally we have \(m\) clause flies that include \(2(m+1)\) upper wing points each. That brings the total size of \(\RR\) to
\(n(12m^2 + 18m + 6)+ 4m + 2m^2\).
During construction, we create \(5\) anchor points for each fly. We also construct \(m+1\) bristle lines for each fly. Each segment in \(\RR\) is first created and then has its endpoints moved to convex position by intersecting the segment with a bristle line. All of this can be done in polynomial time and space.
\end{proof}
\subsection{Proof of Correctness}

\begin{lemma}
The Max-2-SAT instance $(V, C)$ has an assignment of variables such that $k$ clauses are true if and only if the set $\RR$ allows a maximum convex partial transversal $Q$, with $|Q|=|\RR|-n(4m+2)(m+1) - (m-k)$.
\end{lemma}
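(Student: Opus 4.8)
I would prove the two implications of the biconditional separately, but first reduce both to a few local statements about the gadgets that share a single convexity analysis. The backbone is a counting identity that rewrites the target size cleanly: writing $W = 2(m+1)\bigl(n(4m+2)+m\bigr)$ for the total number of upper wing points (the $2(m+1)$ per reflection fly plus the $2(m+1)$ per clause fly), we have $|\RR| = n(4m+2)(m+1) + W + 2m$, since the three kinds of regions are the $n(4m+2)(m+1)$ chain segments, the $W$ upper wing points, and the $2m$ clause segments. Hence $|\RR| - n(4m+2)(m+1) - (m-k) = W + (m+k)$. So the lemma is equivalent to: the instance has an assignment satisfying $k$ clauses if and only if $\RR$ admits a convex transversal that visits the maximum possible number, $W$, of objects drawn from the chain copies and upper wing points combined, together with exactly $m+k$ of the $2m$ clause segments.

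The plan is to establish three gadget properties, each a consequence of the scale hierarchy $1\gg\alpha\gg\beta\gg\gamma\gg\delta\gg\epsilon\gg\zeta$ and of the quadratically shrinking bristle spacing that keeps all relevant points in strictly convex position on the banana. First, an \emph{exchange property}: because the upper wing points protrude beyond the chain copies, a convex chain that visits a chain copy on a wing must skip an upper wing point on that same wing; consequently, at every fly the number of visited chain copies plus visited upper wing points is at most $2(m+1)$, attained by visiting all the upper wing points. Summing over all flies shows that any transversal visits at most $W$ objects among the chain copies and upper wing points together. Second, a \emph{consistency property}: since the $m+1$ nested copies of a variable chain hug the banana all the way around and the hull is globally convex, the way a fly attains its bound is forced to agree at all $4m+2$ reflection flies of that variable, so each variable receives a single well-defined bit. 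Third, a \emph{clause logic} property: with the two clause segments of $c_j$ routed through the reflection flies of its literals (onto the wing dictated by each polarity) and then through the clause fly, convexity lets the hull cross at least one clause segment of $c_j$, and both of them precisely when the bits satisfy $c_j$.

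For the forward implication, given an assignment satisfying $k$ clauses I would set every variable's bit to its truth value and take the transversal whose hull runs just outside the banana, visiting all $W$ upper wing points; this is simultaneously realizable by the exchange and consistency properties together with the strict convexity of the swatted flies. By the clause logic I then route the hull to pick up both clause segments of each of the $k$ satisfied clauses and one clause segment of each of the remaining $m-k$ clauses, for $m+k$ clause segments and no chain segments; the resulting vertex set is in convex position by construction and has size $W+(m+k)$. For the backward implication, given a transversal of size $W+(m+k)$, the exchange property bounds the visited chain copies and upper wing points by $W$ and the clause segments by $2m$; since the total equals $W+(m+k)$, the per-fly bound must be tight everywhere and exactly $m+k$ clause segments are visited. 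Tightness at every fly makes each variable's bit well defined and consistent by the consistency property, inducing an assignment, and the clause logic then forces the $m+k$ visited clause segments to be one per clause plus one extra for each \emph{satisfied} clause; hence at least $k$ clauses are satisfied. Taking $k$ to be the maximum number of simultaneously satisfiable clauses shows that the optimal transversal size encodes exactly that maximum.

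The main obstacle is proving the exchange property and the clause logic rigorously: that convexity, together with the reverse placement of the upper wing points against the chain copies and the quadratic bristle spacing, matches every visited chain copy to a \emph{distinct} skipped upper wing point on the same wing, and that a clause fly genuinely realizes the ``at least one, and both if and only if satisfied'' behaviour. This is precisely where the exact choices of $\alpha,\dots,\zeta$ and the strict convexity of the banana are needed, so that no configuration can cheat by mixing a few chain copies with a few upper wing points to exceed the bound. The global consistency property is the other delicate point---that the locally free choices made at far-apart flies can, and under tightness must, be stitched into a single globally convex hull---and it is the reason each chain is taken in $m+1$ nested copies rather than used singly.
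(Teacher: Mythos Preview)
Your proposal misreads the fly gadget, and the error propagates through both directions of the argument. You assert that the per-fly bound ``visited chain copies plus visited upper wing points $\le 2(m+1)$'' is attained by taking \emph{all} $2(m+1)$ upper wing points at that fly, and your forward construction accordingly visits all $W$ upper wing points and no chain segments. This is infeasible: the two wing lines of a fly cross, so a convex transversal can follow at most one of them and hence pick up at most one upper wing's worth of points---at most $m+1$ per fly, not $2(m+1)$. This is exactly the paper's statement that ``a convex transversal can only ever include half of the upper wing points for each fly.'' The actual trade-off is not ``chain copy versus upper wing point on the same wing'' as you describe; the chain endpoints sit on the two \emph{lower} wings and the upper wing points on the two \emph{upper} wings, and a wing line pairs one lower wing with the diagonally opposite upper wing. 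Thus the $2(m+1)$ bound at a reflection fly is met by choosing a wing line and taking its $m+1$ chain endpoints together with its $m+1$ upper wing points. The paper's optimal transversal therefore visits \emph{all} chain segments and \emph{half} the upper wing points---the opposite of your proposed construction.

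This also changes the consistency mechanism. The binary state of a variable is the wing-line choice at each of its reflection flies, and consistency propagates because a bundle of $m+1$ chain copies not picked up at one endpoint must be picked up at the other, forcing the wing choice at the adjacent fly; the point of having $m+1$ copies is that missing an entire bundle costs $m+1$, which dominates any possible gain from the at most $m$ extra clause segments, so the per-fly bound must be met everywhere in an optimum. Your ``nested copies hug the banana'' does not capture this chaining argument. Finally, note that your summed bound $c+u\le W$ is not tight: clause flies carry no chain copies on their lower wings (only the two clause-segment endpoints), so the local chain-plus-upper-wing count there is at most $m+1$, not $2(m+1)$, giving $c+u\le W-m(m+1)$.
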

\begin{proof}

Our fruit fly gadget is constructed such that a convex transversal can only ever include half of the upper wing points for each fly. So any transversal will at most include \(|\RR| - n(4m+2)(m+1) \) points.
As we will show below, it is always possible to create a transversal that includes half of the upper wing points of every fly, plus every chain segment of every variable, giving a transversal of at least \(|\RR|-n(4m+2)(m+1) - m\) points. So, as each fly has \(m+1\) upper wing points and each segment is copied \(m+1\) times, a maximum traversal must visit all flies and all chain segments, no matter how many clause segments are included.
A guaranteed way to include all flies in a transversal is to stay on the edge of the banana and including the flies in the order they are on the banana's boundary, while only choosing points on the chain segments that are inside of the halfplanes induced by the flies' wing lines. (For convenience, we assume that we only pick the segments' endpoints as it doesn't matter which point we choose in this halfplane in regard to which other points are reachable while maintaining convexity and the ability to reach other flies.)\\
\\
The only way to visit the maximum number of regions on a fly is to pick one of the two wing lines (with related bristle lines) and only include the points on the upper and lower wing it induces. (So either all segments on the lower left and upper right wing, or all segments on the lower right and upper left wing.) If we consider the two flies that contain the opposite endpoints of a chain segment it is clear that choosing a wing line on one of the flies also determines our choice on the other fly. If we choose the wing line on the first fly that does not include the \(m+1\) copies of the line segment we must choose the wing line on the other fly that does include them, otherwise we miss out on those \(m+1\) regions in our transversal. Since the chains form a cycle, for each set of chains corresponding to a variable we get to make only one choice. Either we choose the left endpoint of the first segment, or the right one. The segments of the chain then alternate in what endpoints are included.
If we choose the left endpoint for the first segment of a chain it is equivalent to setting the corresponding variable to true in our \textsc{Max-2-SAT} instance, otherwise we are setting it to false. \\ \\
At the reflection flies that have a clause segment, the endpoint of that clause segment on the fly can be added to the partial transversal iff it is on the wing that is chosen for that fly.(Recall from Section \ref{sec:constructfruitflies} that which wing contains the clause segment's endpoint depends on if the variable appears negated in the clause.) The clause segment has a clause fly at its other endpoint which it shares with the clause segment of another variable. If one of the two clause segments is already included in the transversal because it was on the correct wing of the reflection fly, we can choose the other wing of the clause fly. If neither of the clause segments are already included, we can only include one of the two by picking one of the two wings. This means there is no way to include the other clause segment, meaning our convex partial transversal is \(1\) smaller than it would be otherwise. This corresponds to the clause not being satisfied in the 2-SAT assignment.\\
Since we can always get half of the upper wing points and all of the chain segments our maximum convex partial transversal has cardinality $|Q|=|\RR|-n(4m+2)(m+1) - (m-k)$, where \(k\) is the number of clauses that can be satisfied at the same time. Since Lemma \ref{lem:polybanana} shows our construction is polynomial, we have proven that the problem of finding a maximum convex transversal of a set of line segments with 3 orientations is NP-hard.
\end{proof}

\subsection {Implications}

Our construction strengthens the proof by~\cite{schlipf2012notes} by showing that using only 3 orientations, the problem is already NP-hard. The machinery appears to be powerful: with a slight adaptation, we can also show that the problem is NP-hard for axis-aligned rectangles.

\begin{theorem}
Let $\RR$ be a set of (potentially intersecting) axis-aligned rectangles. The problem of finding a maximum convex partial transversal $Q$ of $\RR$ is NP-hard.
\end{theorem}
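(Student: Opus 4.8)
The plan is to reuse almost the entire \textsc{Max-2-SAT} reduction of the previous theorem---the crate, the strictly convex banana, the $m+1$ copies of each variable chain, the reflection and clause fruit flies, and in particular the counting identity $|Q|=|\RR|-n(4m+2)(m+1)-(m-k)$---and to change only the \emph{type} of region used, replacing every line segment by an axis-aligned rectangle that plays the same role in the convexity argument. The vertical ($|$) clause segments and the vertical upper-wing-point segments are already axis-aligned, so they may be taken as arbitrarily thin vertical rectangles at no cost. The real work is to replace each diagonal ($\+$ or $\-$) chain segment by an axis-aligned rectangle while preserving two properties: that the region is still ``visited'' by the hugging transversal exactly at the shortened endpoints lying on the bristle lines, and that a \emph{single} region still couples the two fruit flies at its two ends, so that the wing choice propagates around the cycle as before.

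First I would re-establish, inside each $\beta$-sized fly box, that the hugging transversal may be assumed to touch each region at a corner, so that the gadget analysis remains purely local. Then I would carry out the replacement. The chief geometric difficulty is that an axis-aligned rectangle that is long and thin must be nearly horizontal or nearly vertical, so it cannot be a long, thin, genuinely \emph{diagonal} region; a single such rectangle therefore cannot reach both of the flies joined by a slope-$\pm 1$ segment without ballooning to its $1\times 1$ bounding box and flooding neighbouring gadgets. I would sidestep this by steepening the chain segments---the slopes were fixed only ``without loss of generality''---so that the two endpoints of each chain segment share almost the same $x$-coordinate. A thin \emph{vertical} axis-aligned rectangle, of width just larger than that tiny horizontal offset, then contains the whole segment and still reaches both flies. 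The number of reflections per chain grows, but only polynomially, so by Lemma~\ref{lem:polybanana} the construction stays polynomial in $n$ and $m$.

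Finally I would re-prove the fruit-fly counting lemma for two-dimensional regions, and this is the step I expect to be the main obstacle: I must show that the extra freedom offered by a rectangle's interior cannot be exploited, so that the transversal still cannot collect more than half of the $m+1$ upper-wing points of any fly, and choosing a wing line at one fly still forces the complementary wing at the fly sharing that chain segment. I would control this exactly as the scale hierarchy $1\gg\alpha\gg\beta\gg\gamma\gg\delta\gg\epsilon\gg\zeta$ was used before, by taking every rectangle's transverse width strictly below $\zeta$ with rational endpoints. At the resolution of a single fly box each rectangle is then indistinguishable from its spine, the convex hull forced by the neighbouring gadgets places the rectangle's interior strictly inside, and the corner at which the transversal touches is pinned to the former endpoint. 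With the counting lemma re-established verbatim, the equivalence with \textsc{Max-2-SAT}, and hence NP-hardness, follow exactly as in the segment case.
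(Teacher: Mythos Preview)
Your proposal misses the paper's one-line trick and instead heads into a reconstruction that, as written, has a real gap.

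The paper simply rotates the entire $3$-oriented construction by $45^\circ$. After the rotation the $\+$ and $\-$ chain segments are horizontal and vertical, so they are replaced by arbitrarily thin axis-aligned rectangles and nothing in the fruit-fly analysis changes. The formerly vertical segments (the clause segments and the upper-wing points) are now diagonal; each is replaced by the axis-aligned \emph{square} whose diagonal is that segment. The only new observation needed is that such a square's interior lies in the interior of the banana, so a transversal point chosen there cannot be in convex position with the fly points on the banana's boundary; hence the only useful points of the square are its two corners on the banana, which coincide with the old segment endpoints. No steepening, no extra reflections, no re-counting.

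Your steepening idea does not work as stated. The ``without loss of generality'' you invoke is an affine map, and an affine map preserves every ratio in the construction: after compressing horizontally by a factor~$s$, a chain segment's horizontal span becomes $1/s$, but so does every $\alpha$-gap between variables, every fly box, every $\zeta$. A vertical rectangle wide enough to contain the chain segment is therefore still enormous at the scale of a single fly and swallows neighbouring gadgets. (This is also why an affine map cannot make ``the number of reflections per chain grow''---you have conflated two different modifications.) The alternative you seem to have in mind, rebuilding the reduction with slopes $\pm s$ inside a \emph{fixed} crate, does yield narrow rectangles, but then the chain-closure identity, the placement of the clause segments (which in the paper are tied to specific reflection indices), and the fly counts all change and must be re-derived; none of that is in the proposal. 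The $45^\circ$ rotation together with the square replacement avoids all of this.
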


\begin{proof}
  We build exactly the same construction, but afterwards we replace every vertical segment by a $45^\circ$ rotated square and all other segments by arbitrarily thin rectangles. The points on the banana's boundary are opposite corners of the square, and the body of the square lies in the interior of the banana so placing points there is not helpful.
\end{proof}

\bibliographystyle{plain}
\bibliography{bibliography}

\end{document}